\newcommand{\problemtitle}[1]{\gdef\@problemtitle{#1}}
\newcommand{\probleminput}[1]{\gdef\@probleminput{#1}}
\newcommand{\problemquestion}[1]{\gdef\@problemquestion{#1}}
  \par\addvspace{.5\baselineskip}
  \par\addvspace{.5\baselineskip}
\newcommand{\drop}[1]{}
\newcommand{\escale}[1]{\ensuremath{\textbf{\scalebox{0.8}{#1}}}}
\newcommand{\nscale}[1]{\ensuremath{\textbf{\scalebox{0.8}{#1}}}}
\newcommand{\myEdge}[2]{ \tikz[baseline=-1pt]{
\draw[#2,line width=0.3pt] (0,0) -- ++(0.6,0) node[anchor=base, yshift=3pt, pos=0.5] {\escale{$#1$}};
}}
\newcommand{\mylEdge}[2]{ \tikz[baseline=-1pt]{
\draw[#2,line width=0.3pt] (0,0) -- ++(0.9,0) node[anchor=base, yshift=4pt, pos=0.5] {\escale{$#1$}};
}}
\newcommand{\myLEdge}[2]{ \tikz[baseline=-1pt]{
\draw[#2,line width=0.3pt] (0,0) -- ++(1.2,0) node[anchor=base, yshift=4pt, pos=0.5] {\escale{$#1$}};
}}
\newcommand{\Edge}[1]{ \tikz[baseline=-1pt]{
\draw[->,line width=0.3pt] (0,0) -- ++(0.6,0) node[anchor=base, yshift=4.5pt, pos=0.5] {\escale{$#1$}};
}}
\newcommand{\lEdge}[1]{ \tikz[baseline=-1pt]{
\draw[->,line width=0.3pt] (0,0) -- ++(0.88,0) node[anchor=base, yshift=5pt, pos=0.5] {\escale{$#1$}};
}}
\newcommand{\ledge}[1]{ \tikz[baseline=-1pt]{
\draw[->,line width=0.3pt] (0,0) -- ++(0.8,0) node[anchor=base, yshift=3pt, pos=0.5] {\escale{$#1$}};
}}
\newcommand{\edge}[1]{\myEdge{#1}{->}}
\newcommand{\fbedge}[1]{\myEdge{#1}{<->}}
\newcommand{\fBedge}[1]{\mylEdge{#1}{<->}}
\newcommand{\FBedge}[1]{\myLEdge{#1}{<->}}
\newcommand{\nop}{\ensuremath{\textsf{nop}}}
\newcommand{\inp}{\ensuremath{\textsf{inp}}}
\newcommand{\out}{\ensuremath{\textsf{out}}}
\newcommand{\set}{\ensuremath{\textsf{set}}}
\newcommand{\res}{\ensuremath{\textsf{res}}}
\newcommand{\swap}{\ensuremath{\textsf{swap}}}
\newcommand{\free}{\ensuremath{\textsf{free}}}
\newcommand{\used}{\ensuremath{\textsf{used}}}
\newcommand{\K}{\ensuremath{\mathfrak{K}}}
\newcommand{\E}{\ensuremath{\mathfrak{E}}}
\newcommand{\mS}{\ensuremath{\mathfrak{S}}}
\newcommand{\mL}{\ensuremath{\mathfrak{L}}}
\newcommand{\U}{\ensuremath{\mathfrak{U}}}
\newcommand{\Ss}{\ensuremath{\mathcal{S}}}
\newcommand{\R}{\ensuremath{\mathcal{R}}}
\tikzstyle{place}=[circle,thick,draw=blue!75,fill=blue!20,minimum size=6mm]
\tikzstyle{red place}=[place,draw=red!75,fill=red!20]
\tikzstyle{transition}=[rectangle,thick,draw=black!75,
			   \tikzstyle{every label}=[red]
\begin{document}

\setcounter{page}{261}
\publyear{22}
\papernumber{2161}
\volume{189}
\issue{3-4}

  \finalVersionForARXIV

\title{On the Complexity of Techniques That Make Transition Systems Implementable by Boolean Nets}

\author{Raymond Devillers
 \\
D\'epartement d'Informatique \\
Universit\'e Libre de Bruxelles\\
Boulevard du Triomphe\\
C.P. 212, B-1050 Bruxelles, Belgium\\
 raymond.devillersl@ulb.be
\and
 Ronny Tredup\thanks{Address  for correspondence: Geschwister-Scholl-Gymnasium L\"obau,
                                   Pestalozzistra\ss e 21, 02708 L\"obau Germany}
 \\
 Geschwister-Scholl-Gymnasium L\"obau\\
 Pestalozzistra\ss e 21, 02708 L\"obau Germany\\
 tredupron@gsg-loebau.lernsax.de
}

\maketitle

\runninghead{R. Devillers and R. Tredup}{The Complexity of Techniques That Make Transition Systems Implementable...}

\begin{abstract}
Let us consider some class of (Petri) nets.
The corresponding \emph{Synthesis} problem consists in deciding whether a given labeled transition system (TS) $A$ can be implemented by a net $N$ of that class.
In case of a negative decision, it may be possible to convert $A$ into an implementable TS $B$ by
applying various modification techniques, like
relabeling edges that previously had the same label, suppressing edges/states/events, etc.
It may however be useful to limit the number of such modifications to stay close to the original problem, or optimize the technique.
In this paper, we show that most of the corresponding problems are NP-complete if the considered class corresponds to
so-called flip-flop nets or some flip-flop net derivatives.
\end{abstract}

\begin{keywords}
Petri net, Boolean Types, Label-Splitting, Edge-Removal, Event-Removal, State-Removal, Synthesis, Complexity\medskip
\end{keywords}

\section{Introduction}%

The so-called synthesis problem for nets of some class $\tau$ ($\tau$-synthesis, for short) consists in deciding whether for a given
labeled transition system (TS, for short) $A$ there is a net $N$ of class $\tau$ (a $\tau$-net, for short) that implements $A$.
In case of a positive decision, $N$ should be constructed, possibly while minimizing some features (number of places, arc weights, \ldots); in case of a negative decision, some reason(s) should be given.

$\tau$-Synthesis is used to, for example, extract concurrency from sequential specifications like TS and languages~\cite{fac/BadouelCD02} and has applications in, for example, process discovery~\cite{daglib/0027363}, supervisory control~\cite{deds/HollowayKG97} and the synthesis of speed independent circuits~\cite{tcad/CortadellaKKLY97}.

However, whether $N$ exists depends crucially on the kind of implementation we are striving at, that is, whether $N$ should be an (exact) \emph{realization} (meaning that $A$ and $N$'s reachability graph are isomorphic), or a \emph{language-simulation}
(meaning that $A$ and $N$ have the same language) or an \emph{embedding}
(meaning that $N$ preserves the distinctness of states of $A$).
Unfortunately, whatever the kind of implementation, a solution does not always exist.
This observation motivates the search for techniques that modify the given TS (as little as possible) so that the result is an implementable behavior.

For instance, \emph{label-splitting} has been considered in \cite{txtcs/BadouelBD15,topnoc/Carmona12,707587,topnoc/SchlachterW19}:
This approach may convert a non-implementable TS $A$ into an implementable one $A'$ by relabeling differently some edges that previously had the same label.
However, the new events produced by the label-splitting increase the complexity of the derived net, since each new copy will be transformed into a new transition.
Hence, it is desired to find a label-splitting that induces the minimal number of transitions in the sought net.
This allows to consider $\tau$-\emph{label-splitting} as a decision problem with input $A$ and $\kappa\in \mathbb{N}$;
the question is whether there is a TS $B$  implementable by a $\tau$-net that, firstly, is derived from $A$ by splitting labels and, secondly, has at most $\kappa$ labels (then, by a dichotomic search, an optimal splitting may be found).
Recently, in~\cite{corr/abs-2002-04841}, it has been shown that $\tau$-label-splitting aiming at embedding is NP-complete if $\tau$ equals the
class of \emph{weighted Place/Transition}-nets.
Moreover, in~\cite{ictcs/Tredup20}, we have shown that $\tau$-label-splitting aiming at language-simulation or realization is also NP-complete for this type.

Since label-splitting is intractable (at least for weighted Place/Transition nets), other techniques with better worst-case complexity would be preferable.
This led to the idea of simplifying an unimplementable TS by removing some edges, events or states until the result is implementable.
Indeed, the removal of components is a strong technique that always leads to implementable behavior, since the result is implementable when only the initial state is left at the end.
However, such an extreme modification is certainly not desirable.
Instead, we aim to remove the minimum number $\kappa$ of components so that the result is implementable.
This justifies to consider the decision problems \emph{edge-, event-, and state-removal}:
given a TS $A$, and a number $\kappa$, the task is to decide whether we can modify $A$ to an implementable TS $B$ by the removal of at most $\kappa$ edges, events or states, respectively.
Unfortunately, it has been shown that these removal techniques are also NP-complete for all implementations - embedding,
language-simulation and realization - if we target (weighted) Place/Transition \linebreak nets~\cite{corr/abs-2112-03605,apn/Tredup21}.

Naturally, it raises the question whether the decision problems, and thus the corresponding modification techniques, are of a different complexity provided the net class sought is different.

\eject

A whole family of net classes for which such investigations are certainly of interest is defined by
the so-called \emph{Boolean types of nets}~\cite{txtcs/BadouelBD15,acta/BadouelD95,acta/KleijnKPR13,acta/MontanariR95,apn/Pietkiewicz-Koutny97,ac/RozenbergE96,stacs/Schmitt96}, since the respective nets are widely accepted as excellent tools for modeling concurrent and distributed systems.

Boolean nets allow at most one token on each place $p$ in every reachable marking.
Therefore, $p$ is considered a Boolean condition that is \emph{true} if $p$ is marked and \emph{false} otherwise.
A place $p$ and a transition $t$ of a Boolean net $N$ are related by one of the following Boolean \emph{interactions}
(partial functions):
\emph{no operation} (\nop), \emph{input} (\inp), \emph{output} (\out), \emph{unconditionally set to true} (\set), \emph{unconditionally
reset to false} (\res), \emph{inverting} (\swap), \emph{test if true} (\used), and \emph{test if false} (\free).
The relation between $p$ and $t$ determines which conditions $p$ must satisfy to allow $t$'s firing (the corresponding partial function must be defined), and which impact has the firing of $t$ on $p$ (the partial function determines what should be the new value of $p$ if $t$ is executed).
Boolean nets are then classified by the interactions of $I=\{\nop,\inp,\out,\res,\set,\swap,\used,\free\}$ that they apply or spare.
More exactly, a subset $\tau\subseteq I$ is called a \emph{Boolean type of net} and a net $N$ is of type $\tau$ (a $\tau$-net) if it applies at most the interactions of $\tau$.

However, the determination of the complexity of TS modifications allowing $\tau$-synthesis does not actually define an open problem for all Boolean types of nets.
In particular, it is known that $\tau$-synthesis aiming at language-simulation or realization is NP-complete for 84 out of the 128 Boolean types of nets containing \nop{} (which allows some kind of independence between places and transitions)~\cite{tcs/BadouelBD97,topnoc/Tredup21, tamc/TredupR19,apn/TredupR19,topnoc/Tredup21}.
Likewise, it is known that $\tau$-synthesis striving at embedding is NP-complete for 90 of the \nop-equipped Boolean types~\cite{ictac/TredupE20}.
Consequently, for these types, the NP-completeness of $\tau$-synthesis implies already the NP-completeness of the corresponding problems \emph{label-splitting}, and \emph{edge-, event-}, and \emph{state-removal}, for instance by choosing $\kappa$ in a way that forbids any splitting, or removal, respectively.
This puts the Boolean types with a tractable synthesis problem in focus.
One of the most prominent Boolean type that fulfills this criterion is the so-called \emph{flip-flop} type of nets,
where $\tau=\{\nop,\inp,\out,\swap\}$.
Flip-flop nets have been originally introduced in~\cite{stacs/Schmitt96}, and their name is inspired by the interaction \swap\ that allows a transition to unconditionally change the current marking of a place from 0 to 1 and from 1 to 0.
The flip-flop nets are considered as the Boolean counterpart of the Place/Transition nets.
This characterization is mainly based on the fact that synthesis aiming at flip-flop nets is solvable by a polynomial time algorithm~\cite{stacs/Schmitt96} that is derived from the algorithm for synthesis aiming at Place/Transition nets, which has been introduced in~\cite{tapsoft/BadouelBD95}.
In fact, the algorithm for flip-flop nets~\cite{stacs/Schmitt96} is extendable to all types $\tau=\{\nop,\swap\}\cup\omega$ with $\omega\subseteq \{\inp,\out,\used,\free\}$~\cite{tamc/TredupR19}, which makes their synthesis problem also tractable.

In this paper, for all 16 types $\tau=\{\nop,\swap\}\cup\omega$ with $\omega \subseteq\{\inp,\out,\used,\free\}$, hence in particular for the flip-flop nets, we investigate the computational complexity of $\tau$-label-splitting and element removing for all introduced implementations: embedding, language-simulation and realization.
In particular, we show that this problem aiming at embedding is NP-complete for all these types, unfortunately.
Moreover, label-splitting aiming at language-simulation or realization is NP-complete if $\omega\not=\emptyset$, otherwise it is tractable.

Furthermore, for all 15 types $\tau=\{\nop,\swap\}\cup\omega$ with $\omega \subseteq\{\inp,\out,\used,\free\}$ and $\omega\not=\emptyset$, we investigate the computational complexity of $\tau$-edge-, event-, and state-removal, and show that these problems are NP-complete for all these types, regardless which of the implementations embedding, language-simulation and realization we are aiming at.

We obtain our NP-completeness results by reductions from a variant of the well-known vertex cover problem.
Our current approach generalizes our methods from~\cite{ictcs/Tredup20,apn/Tredup21} and tailors them to flip-flop nets and its aforementioned derivatives.

This paper, which is an extended version of~\cite{rp/Tredup20}, is organized as follows.
The next Section~\ref{sec:prelis} introduces necessary notions and definitions.
After that, Sections~\ref{sec:label_splitting} to \ref{sec:state_removal}  present
our complexity results for label-splitting and the edge/event/state-removals.
Finally, Section~\ref{sec:con} briefly closes the paper.

\section{Preliminaries}\label{sec:prelis} %

This section introduces the basic notions used throughout the paper.

\begin{definition}[Transition Systems]\label{def:ts}
A \emph{transition system} (TS) $A=(S,E, \delta)$ is a finite directed labeled graph with the set of nodes $S$ (called {\em states}), the set of labels $E$ (called {\em events}) and partial \emph{transition function} $\delta: S\times E \longrightarrow S$.

\medskip
We shall assume no event is useless, i.e., for every $e\in E$ there are states $s,s'\in S$ such that $\delta(s,e)=s'$.
For convenience, with a little abuse of notation, we often identify $\delta$ and the set $\{(s,e,s')\in S\times E\times S\mid \delta(s,e)=s'\}$.

Event $e$ \emph{occurs} at $s$, denoted by $s\edge{e}$, if $\delta(s,e)$ is defined, otherwise we denote it by $s\edge{\neg e}$ or $\neg s\edge{e}$.
We denote $\delta(s,e)=s'$ by $s\edge{e}s'$.

An \emph{initialized} TS $A=(S,E,\delta, \iota)$ is a TS with a distinct {\em initial} state $\iota\in S$ such that
every state $s\in S$ is \emph{reachable} from $\iota$ by a directed labeled path.
If $w=e_1\dots e_n\in E^*$, by $\iota\edge{w}$ we denote that there are states $\iota=s_0,\dots, s_n\in S$ such that $s_i\edge{e_{i+1}}s_{i+1}\in A$ for all $i\in \{0,\dots, n-1\}$, in which case we also denote $\iota\edge{w}s_n$.
We extend this notation by stating that $\iota\edge{\epsilon}\iota$.

The \emph{language} of $A$ is defined by $L(A)=\{ w \in E^*\mid \iota\edge{w} \}$.

When not given explicitly, we shall refer to the components of a TS $A$ by $S(A)$ (states), $E(A)$ (events), $\delta_A$ (transition function) and $\iota_A$ (initial state).
\end{definition}

\begin{definition}[Simulations]
Let $A$ and $B$ be initialized TS with the same set of events $E$.
We say $B$ \emph{simulates} $A$, if there is a mapping $\varphi : S(A)\rightarrow S(B)$ such that $\varphi(\iota_A)=\iota_B$
and $s\edge{e}s'\in A$ implies $\varphi(s)\edge{e}\varphi(s')\in B$;
such a mapping is called a \emph{simulation} (between $A$ and $B$).

\medskip
$\varphi$ is an \emph{embedding}, denoted by $A\hookrightarrow B$, if it is injective;
$\varphi$ is a \emph{language-simulation}, denoted by $A\triangleright B$, if $\varphi(s)\edge{e}$ implies $s\edge{e}$, implying $L(A)=L(B)$~\cite[p.~67]{txtcs/BadouelBD15};
$\varphi$ is an \emph{isomorphism}, denoted by $A\cong B$, if it is both an embedding and a language simulation.
\end{definition}

\begin{definition}[Boolean Types~\cite{txtcs/BadouelBD15}]
A Boolean type is a subset $\tau$ of the 8 Boolean interactions
\[I=\{\nop, \inp, \out, \set, \res, \swap, \used, \free\}\]
\eject

\noindent  i.e., the partial functions $\{0,1\}\to\{0,1\}$ schematized in Figure~\ref{fig:interactions}.
To each type $\tau$, we associate the Boolean\footnote{Meaning that the state set $S=\{0,1\}$. The initial state is irrelevant here.}
TS $A_\tau=(\{0,1\},E_\tau,\delta_\tau)$, where $E_\tau=\tau$ and, for each $x\in \{0,1\}$ and $i\in E_\tau$, $\delta_\tau(x,i)$ is defined if so is $i(x)$, in which case $\delta_\tau(x,i)=i(x)$.
Since the correspondence is unique, we often  identify $\tau$ with $A_\tau$.
\end{definition}
In fact there are $(2+1)^2=9$ such partial functions, but the empty one (defined nowhere) is of no interest here, since it may never occur as the label of an event.

\begin{figure}[!h]
\vspace*{3mm}
\begin{minipage}{1.0\textwidth}
\centering
\scalebox{0.95}{
\begin{tabular}{c|c|c|c|c|c|c|c|c}
$x$ & $\nop(x)$ & $\inp(x)$ & $\out(x)$ & $\set(x)$ & $\res(x)$ & $\swap(x)$ & $\used(x)$ & $\free(x)$\\ \hline
$0$ & $0$ & & $1$ & $1$ & $0$ & $1$ & & $0$\\
$1$ & $1$ & $0$ & & $1$ & $0$ & $0$ & $1$ & \\
\end{tabular} }
\caption{
All interactions $i$ of $I$.
If a cell is empty, then $i$ is undefined on the respective $x$.
}\label{fig:interactions}
\end{minipage}
\vspace{11pt}

\begin{minipage}{\textwidth}
\begin{center}
\begin{tikzpicture}[new set = import nodes]
\begin{scope}
\node (0) at (0,0) {\nscale{$0$}};
\node (1) at (2,0) {\nscale{$1$}};

\path (0) edge [->, out=-120,in=120,looseness=5] node[left] {\escale{$\nop$}} (0);
\path (1) edge [<-, out=60,in=-60,looseness=5] node[right] {\escale{$\nop$}  } (1);

\path (0) edge [<-, bend right= 30] node[below] {\escale{$\inp, \swap$}} (1);
\path (0) edge [->, bend  left = 30] node[above] {\escale{$\swap$}} (1);
\node () at (1,-1.5) {$A_\tau$};

\end{scope}
\begin{scope}[xshift=6.4cm, node distance=0.5cm,bend angle=45,auto]
\node (x) at (0,0) {};
\node [place] (R1)[tokens =1,left of=x, label=above: \nscale{$R_1$}, node distance =1.75cm]{};
\node [transition] (a)[above of =x]{$a$}
     	 edge [above,] node {$\nscale{\inp}$}  (R1);
\node [transition] (b)[below of =x]{$a'$}
     	 edge [ below] node {$\nscale{\nop}$}  (R1);
	
\node [place] (R2)[right of=x, label=above: \nscale{$R_2$}, node distance =1.75cm]{}
	edge [above] node {$\nscale{\swap}$}  (a)
	edge [below] node {$\nscale{\inp}$}  (b);
	\node () at (0,-1.5) {$N$};
\end{scope}
\begin{scope}[xshift=10cm,nodes={set=import nodes}]
		\node (A) at (1.3,-1.5) {$A_N$};
		\node (10) at (0,0) {\nscale{$(1,0)$}};
		\node (01) at (1.75,0) {\nscale{$(0,1)$}};
		\node (00) at (3.5,0) {\nscale{$(0,0)$}};
		\draw[-latex](0,-0.6)to node[]{}(10);
\graph {(import nodes);
			10 ->["\escale{$a$}"]01;
			01 ->["\escale{$a'$}"]00;
			
		};
\end{scope}
\end{tikzpicture}
\end{center} \vspace*{-5.5mm}
\caption{Left: $A_\tau$ for the the type $\tau=\{\nop,\inp,\swap\}$.
Middle: A $\tau$-net $N$ (as usual, the initial marking is indicated by putting a token in the places $p$ such that $M_0(p)=1$).
Right: The reachability graph $A_N$ of $N$, where each state $M$ is represented by its marking $(M(R_1),M(R_2))$,
and the initial state is indicated by the arrow without source state.}\label{fig:type_net_rg}
\end{minipage}\vspace*{-3mm}
\end{figure}


\begin{definition}[$\tau$-Nets]
Let $\tau\subseteq I$.
A Boolean net $N = (P, T, f, M_0)$ of type $\tau$ (a {\em $\tau$-net}) is given by finite disjoint sets $P$ of {\em places} and $T$ of {\em transitions}, a (total) {\em flow-function} $f: P \times T \rightarrow \tau$, and an {\em initial marking} $M_0: P\longrightarrow  \{0,1\}$.
A transition $t \in T$ can {\em fire} in a marking $M: P\longrightarrow  \{0,1\}$ if $\delta_\tau(M(p), f(p,t))$ is defined for all $p\in P$.
By firing, $t$ produces the marking $M' : P\longrightarrow  \{0,1\}$ where $M'(p)=\delta_\tau(M(p), f(p,t))$ for all $p\in P$, denoted by $M \edge{t} M'$.
The behavior of $\tau$-net $N$ is captured by an initialized transition system $A_N$, called the {\em reachability graph} of $N$.
The states set $S(A_N)$ of $A_N$ will also be denoted  $RS(N)$;
 it consists of all markings that can be reached from the initial state $M_0$ by sequences of transition firings;
$E(A_N)=T$, $\iota_{A_N}=M_0$, and $(s,e,s')\in\delta_{A_N}$ iff $s\edge{e}s'$ in $N$.
\end{definition}

\begin{example}
Figure~\ref{fig:type_net_rg} shows the transition system $A_\tau$ for the type $\tau=\{\nop,\inp,\swap\}$, the $\tau$-net
$N=(\{R_1,R_2\}, \{a,a'\}, f, M_0)$ with places $R_1, R_2$, flow-function $f(R_1,a)=f(R_2, a')=\inp$, $f(R_1,a')=\nop$, $f(R_2,a)=\swap$ and initial marking $M_0$ defined by $(M_0(R_1),M_0(R_2))=(1,0)$.
Since $1\edge{\inp}0\in \tau$ and $0\edge{\swap}1\in \tau$, the transition $a$ can fire in $M_0$, which leads to the marking $M=(M(R_1), M(R_2))=(0,1)$.
After that, $a'$ can fire, which results in the marking \mbox{$M'=(M'(R_1), M'(R_2))=(0,0)$.}
The reachability graph $A_N$ of $N$ is depicted on the right hand side of Figure~\ref{fig:type_net_rg}.
\end{example}

\begin{definition}[Implementations]
Let $A$ be an initialized TS, $\tau$ be a Boolean type and $N$ a $\tau$-net.
We say $N$ is an (exact) \emph{realization} of $A$ if $A\cong A_N$.
If $A\triangleright A_N$, then $N$ is a \emph{language-simulation} of $A$.
If $A\hookrightarrow A_N$, then $N$ is an embedding of $A$.
\end{definition}

\begin{remark}\label{RemInit}
By definition, the reachability graph of a net is initialized.
In the following, we shall thus only consider initialized transition systems without explicitly mention it every time.
\end{remark}

Let $\tau$ be a Boolean type of nets.
If a TS $A$ is implementable by a $\tau$-net $N$, then we want to construct $N$ from the structure of $A$.
Since $A_N$ has to simulate $A$, $N$'s transitions correspond to $A$'s events.
The connection between global states in TS and local states in the sought net is given by {\em regions of TS} that mimic places:

\begin{definition}[$\tau$-Regions]
A $\tau$-region $R=(sup, sig)$ of $A=(S, E, \delta, \iota)$ consists of the \emph{support} $sup: S \rightarrow \{0,1\}$ and the \emph{signature} $sig: E \rightarrow E_\tau$ where every edge $s \edge{e} s'$ of $A$ leads to an edge $sup(s) \ledge{sig(e)} sup(s')$ of type $\tau$.
If $P=q_0\edge{e_1}\dots \edge{e_n}q_n$ is a path in $A$, then $P^R=sup(q_0)\ledge{sig(e_1)}\dots\ledge{sig(e_n)}sup(q_n)$ is a path in $\tau$.
We say $P^R$ is the \emph{image} of $P$ (under $R$).
\end{definition}

\smallskip
Notice that $R$ is \emph{implicitly} completely defined by $sup(\iota)$ and $sig$:
Since $A$ is initialized, for every state $s\in S(A)$, there is a path $\iota\edge{e_1}\dots \edge{e_n}s_n$ such that $s=s_n$.
Thus, since $\delta_\tau$ is a (partial) function, we inductively obtain $sup(s_{i+1})$ by $sup(s_{i+1})=\delta_\tau(sup(s_i) ,sig(e_{i+1}))$ for all $i\in \{0,\dots, n-1\}$ and $s_0 = \iota$.
Consequently, we can compute $sup$ and thus $R$ purely from $sup(\iota)$ and $sig$ (and $A$), since when two paths lead to
the same state the corresponding supports are the same (otherwise $(sup,sig)$ does not define a region); this is illustrated by Example~\ref{ex:regions} and Figure~\ref{fig:image_of_path}.

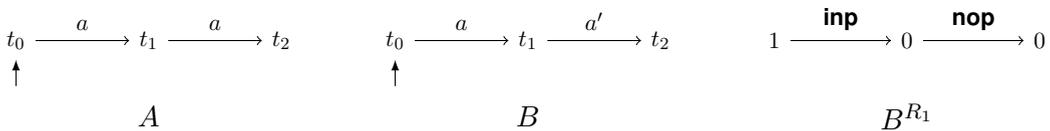
\begin{figure}[!h]
\begin{minipage}{\textwidth}
\begin{center}
\begin{tikzpicture}[new set = import nodes]
\begin{scope}[nodes={set=import nodes}]

\foreach \i in {0,...,2} {\node (\i) at (\i*1.75cm,0) {\nscale{$t_\i$}}; }
\node () at (1.75,-1) {$A$};
\draw[-latex](0,-0.6)to node[]{}(0);
\graph {
	(import nodes);
			0->["\escale{$a$}"]1->["\escale{$a$}"]2;
			};
\end{scope}
\begin{scope}[xshift=5cm,nodes={set=import nodes}]

\foreach \i in {0,...,2} {\node (\i) at (\i*1.75cm,0) {\nscale{$t_\i$}}; }
\draw[-latex](0,-0.6)to node[]{}(0);
\node () at (1.75,-1) {$B$};
\graph {
	(import nodes);
			0->["\escale{$a$}"]1->["\escale{$a'$}"]2;
			};
\end{scope}
\begin{scope}[xshift=10cm,nodes={set=import nodes}]

\foreach \i in {0} {\node (\i) at (\i*1.75cm,0) {\nscale{$1$}}; }
\foreach \i in {1,2} {\node (\i) at (\i*1.75cm,0) {\nscale{$0$}}; }
\node () at (1.75,-1) {$B^{R_1}$};
\graph {
	(import nodes);
			0->["\escale{$\inp$}"]1->["\escale{$\nop$}"]2;
			};
\end{scope}

\end{tikzpicture}
\end{center}\vspace*{-3mm}
\caption{%
Left: The TS $A$ with event set $E=\{a\}$.
Middle: The TS $B$ with event set  $E'=\{a,a'\}$.
Right: The image $B^{R_1}$ of the $\tau$-region $R_1=(sup_1,sig_1)$ of $B$, where $sup_1(t_0)=1$, $sup_1(t_1)=sup_1(t_2)=0$, $sig_1(a)=\inp$ and $sig_1(a')=\nop$, and $\tau=\{\nop,\inp,\swap\}$.
Later, we shall also represent a region by a color convention indicating the support of each state.
}\label{fig:image_of_path}
\end{minipage}
\end{figure}

\medskip
Every set $\R$ of $\tau$-regions of $A$ implies a particular \emph{synthesized} $\tau$-net, where the regions model places and the associated part of the flow-function:

\begin{definition}[Synthesized net]
Let $A=(S,E,\delta, \iota)$ be a TS, $\tau$ a Boolean type and $\R$ a set of $\tau$-regions of $A$.
The \emph{synthesized net} (indicated by $A$ and $\R$) is defined by $N^{\R}_A=(\R, E, f, M_0)$ where, for all $R=(sup, sig)\in \R$ and all $e\in E$, we have that  $f(R,e) = sig(e)$ and $M_0(R) = sup(\iota)$.
\end{definition}

It can be shown that there is always a (unique) simulation $\varphi$ between $A$ and the reachability graph  $A_{N^{\mathcal{R}}_A}$ of the synthesized net $N^{\mathcal{R}}_A$ where, for all $s\in S(A)$ and all $R\in \R$, we have that $sup(s)=M(R)$ for the marking $M$ of $N^{\mathcal{R}}_A$ that satisfies $\varphi(s)=M$.
However, to ensure that $\varphi$ is an embedding, we have to distinguish global states,
and to ensure that $\varphi$ is a language-simulation, we have to prevent the firings of transitions when their corresponding events are not present in TS.
This is stated as {\em separation atoms} and {\em properties}.

\begin{definition}[$\tau$-State Separation and Property]
A pair $(s, s')$ of distinct states of a $A$ defines a \emph{states separation atom} (SSA).
A $\tau$-region $R=(sup, sig)$ \emph{solves} $(s,s')$ if $sup(s)\not=sup(s')$; then this SSA is said $\tau$-solvable.
If every SSA of $A$ is $\tau$-solvable then $A$ has the \emph{$\tau$-states separation property} ($\tau$-SSP, for short).
\end{definition}

\begin{definition}[$\tau$-Event Separation and Property]
A pair $(e,s)$ of event $e\in E $ and state $s\in S$ where $e$ does not occur, that is $\neg s\edge{e}$, defines an \emph{event/state separation atom} (ESSA).
A $\tau$-region $R=(sup, sig)$ \emph{solves} $(e,s)$ if $sig(e)$ is not defined on $sup(s)$ in $\tau$, that is, $\neg sup(s)\ledge{sig(e)}$; then this ESSA is said $\tau$-solvable.
If every ESSA of $A$ is $\tau$-solvable then $A$ has the \emph{$\tau$-event state separation property} ($\tau$-ESSP, for short).
\end{definition}

\begin{definition}[$\tau$-Witness]
A set $\mathcal{R}$ of $\tau$-regions of $A$ is called a $\tau$-\emph{witness} of $A$'s $\tau$-SSP, respectively $\tau$-ESSP,
 if for each SSA, respectively ESSA, there is a $\tau$-region $R$ in $\mathcal{R}$ that solves it.
\end{definition}

The next lemma (\cite[p.~162]{txtcs/BadouelBD15}, Proposition 5.10) establishes the connection between the existence of $\tau$-witnesses and the existence of an implementing $\tau$-net $N$ for the witnessed property:
\begin{lemma}[\cite{txtcs/BadouelBD15}]\label{lem:admissible}
Let $A$ be a TS, $\tau$ a Boolean type and $N$ a $\tau$-net.
The following statements are true:
\begin{enumerate}
\item
 $A\hookrightarrow A_N$ if and only if there is a $\tau$-witness $\R$ of the $\tau$-SSP of $A$ and $N=N_A^{\R}$.
 \item
 $A\triangleright A_N$ if and only if there is a $\tau$-witness $\R$ of the $\tau$-ESSP of $A$ and $N=N_A^{\R}$.
 \item
 $A\cong A_N$ if and only if there is a $\tau$-witness $\R$ of both the $\tau$-SSP and the $\tau$-ESSP of $A$ and $N=N_A^{\R}$.
\end{enumerate}
\end{lemma}

\begin{example}\label{ex:regions}
Let $\tau$ be defined like in Figure~\ref{fig:type_net_rg} and $A,B,B^{R_1}$ like in Figure~\ref{fig:image_of_path}.
The TS $A$ has neither the $\tau$-SSP nor the $\tau$-ESSP, since the atoms $(t_0,t_2)$ and $(a, t_2)$ are not $\tau$-solvable.
The TS $B$ has both the $\tau$-SSP and $\tau$-ESSP.
The region $R_1=(sup_1,sig_1)$ that solves $(t_0,t_1)$, $(t_0,t_2)$, $(a,t_1)$ and $(a, t_2)$ is implicitly defined by $sup_1(t_0)=1$, $sig_1(a)=\inp$ and $sig_1(a')=\nop$.
We obtain $sup_1$ and thus $R_1$ explicitly by $sup_1(t_1)=\delta_{\tau}(1,\inp)=0$ and $sup_1(t_2)=\delta_{\tau}(0,\nop)=0$.
$B^{R_1}$ shows the image of $B$ under $R_1$.
The remaining (E)SSP atoms $(t_1,t_2)$, $(a',t_0)$ and $(a', t_1)$ are solved by the following $\tau$-region $R_2=(sup_2,sig_2)$ that is implicitly defined by $sup_2(t_0)=0$, $sig_2(a)=\swap$ and $sig_2(a')=\inp$.
The set $\mathcal{R}=\{R_1,R_2\}$ is a witness for the $\tau$-(E)SSP of $A$ and the net $N^\mathcal{R}_A$ is exactly the net $N$ that is depicted in Figure~\ref{fig:type_net_rg}.
$N$ is a realization of $A$, since a bijective simulation $\varphi$ between $A$ and $A_N$ is given by $\varphi(t_0)=(1,0)$, $\varphi(t_1)=(0,1)$ and $\varphi(t_2)=(0,0)$. \vspace*{-1mm}
\end{example}

\section{The complexity of  label-splitting}\label{sec:label_splitting}

For a Boolean type $\tau$, $\tau$-synthesis is the task to find, for a given TS $A$, a $\tau$-net $N$ that implements $A$.
Regardless of which of the implementations ($\tau$-embedding, $\tau$-language-simulation or $\tau$-realization) we are aiming at, a suitable $\tau$-net does not always exists.
In this case, \emph{label-splitting} might be a suitable technique to modify $A$ into a TS $B$ that is then implementable:

\begin{definition}[Label-splitting]
Let $A=(S,E,\delta, \iota)$ be a TS and $e_1,\dots, e_n\in E$ be pairwise distinct events.
The \emph{label-splitting} of the events $e_1,\dots, e_n$ into the events $e_1^1,\dots, e_1^{m_1}, \dots, e_n^1,\dots, e_n^{m_n}$  (pairwise distinct, and distinct from the other events in $E\setminus\{e_1,\dots,e_n\}$), where $m_j\geq2$ for all $j\in \{1,\dots, n\}$, yields the event set
$E'=(E\setminus \{e_1,\dots, e_n\})\cup\bigcup_{i=1}^{n}\{e_i^j \mid j\in \{1,\dots, m_j\}\}$.

\medskip
A TS $B=(S, E',\delta',\iota)$ is an $E'$-\emph{label-splitting} ($E'$-LS, for short) of $A$ if $\vert \delta \vert =\vert\delta' \vert$ and, for all $s,s'\in S$ and all $e\in E$, the following is true:
If  $\delta(s,e)=s'$ and $e\not\in \{e_1,\dots, e_n\}$, then $\delta'(s, e)=s'$;
if $\delta(s,e)=s'$ and $e=e_i$ for some $i\in \{1,\dots, n\}$, then there is exactly one $\ell\in \{1,\dots, m_i\}$ such that $\delta'(s, e_i^\ell)=s'$.
We say that $\mL=\{e_1,\dots, e_n\}$ is \emph{the set of events of $A$ that occur split in $B$}.
\end{definition}
Note that, in practice,  $e^1_j$ is usually chosen as the original event $e_j$ for some or all $j\in\{1,\ldots,n\}$.

\begin{example}\label{ex:label_splitting}
Let $A$ and $B$ be defined like in Figure~\ref{fig:image_of_path}.
The TS $B$ is an $E'$-label-splitting of $A$, where $E'=(E\setminus\{a\})\cup \{a,a'\}$.
\end{example}

To be as close as possible to the original behavior $A$, a corresponding $E'$-label-splitting $B$ of $A$ should change $A$ as little as possible, which means that the number of events of $B$ should be as small as possible.
This gives rise to consider label-splitting as a decision problem that, for a given TS $A$ and a natural number $\kappa$, asks whether there is an $E'$-label-splitting $B$ of $A$ that is implementable and uses at most $\kappa$ labels, i.e., $\vert E'\vert \leq \kappa$.

By Lemma~\ref{lem:admissible}, deciding the existence of an implementing net is equivalent to deciding if the input TS has the property that corresponds to the implementation.
Finally, this leads to the following three decision problems that are the main subject of this section:

\medskip\noindent
\fbox{\begin{minipage}[t][1.7\height][c]{0.88\textwidth}
\begin{decisionproblem}
  \problemtitle{\textsc{LS-$\tau$-Embedding}}
  \probleminput{a TS $A=(S,E,\delta, \iota)$, a natural number $\kappa$.}
  \problemquestion{Does there exist an $E'$-label-splitting $B$ of $A $ with $\vert E' \vert \leq \kappa$ that has the $\tau$-SSP?}
\end{decisionproblem}
\end{minipage}}

\bigbreak

\noindent
\fbox{\begin{minipage}[t][1.7\height][c]{0.88\textwidth}
\begin{decisionproblem}
  \problemtitle{\textsc{LS-$\tau$-Language-Simulation}}
  \probleminput{a TS $A=(S,E,\delta, \iota)$, a natural number $\kappa$.}
  \problemquestion{Does there exist an $E'$-label-splitting $B$ of $A $ with $\vert E' \vert \leq \kappa$ that has the {$\tau$-ESSP?}}
\end{decisionproblem}
\end{minipage}}

\bigbreak

\noindent
\fbox{\begin{minipage}[t][1.7\height][c]{0.88\textwidth}
\begin{decisionproblem}
  \problemtitle{\textsc{LS-$\tau$-Realisation}}
  \probleminput{a TS $A=(S,E,\delta, \iota)$, a natural number $\kappa$.}
  \problemquestion{Does there exist an $E'$-label-splitting $B$ of $A $ with $\vert E' \vert \leq \kappa$ that has both the $\tau$-SSP and the $\tau$-ESSP?}
\end{decisionproblem}
\end{minipage}}

\vspace{3mm}

The following theorem presents the main result of this section:\smallskip

\begin{theorem}\label{the:label_splitting}
If $\tau=\{\nop,\swap\}\cup\omega$ with $\omega\subseteq \{\inp,\out,\used,\free\}$, then
\begin{enumerate}
\item\label{the:label_splitting_embedding}
\textsc{LS-$\tau$-Embedding} is NP-complete,
\item\label{the:label_splitting_ls_and_real}
\textsc{LS-$\tau$-Language-Simulation}, and \textsc{LS-$\tau$-Realisation} are NP-complete if $\omega\not=\emptyset$, otherwise they are in P.
\end{enumerate}
\end{theorem}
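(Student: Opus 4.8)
The plan is to handle the three problems together for membership, isolate the tractable boundary case, and then reduce from vertex cover for the hardness.

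First I would establish membership in NP for all three problems and all $16$ types at once. A candidate $E'$-LS $B$ of $A$ satisfies $\vert\delta_B\vert=\vert\delta_A\vert$, hence $\vert E'\vert\le\vert\delta_A\vert$, so $B$ has size polynomial in $A$ and can be guessed. Given $B$, the test $\vert E'\vert\le\kappa$ is trivial, and whether $B$ has the $\tau$-SSP, the $\tau$-ESSP, or both can be decided in polynomial time: as recalled in the introduction, the flip-flop synthesis algorithm of \cite{stacs/Schmitt96} extends to every type $\tau=\{\nop,\swap\}\cup\omega$ with $\omega\subseteq\{\inp,\out,\used,\free\}$ \cite{tamc/TredupR19}, so by Lemma~\ref{lem:admissible} the relevant property can be checked (and a witness produced) in polynomial time. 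Guessing $B$ and verifying thus places all three problems in NP.

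Next I would dispose of the tractable case $\omega=\emptyset$, i.e.\ $\tau=\{\nop,\swap\}$, for Language Simulation and Realisation. The key observation is that both $\nop$ and $\swap$ are total functions on $\{0,1\}$, so for any $\tau$-region $(sup,sig)$ and any event $e$ the value $sig(e)$ is defined on every support $sup(s)$; hence no $\tau$-region can ever solve an ESSP atom. Consequently a TS has the $\tau$-ESSP if and only if it has no ESSP atom at all, that is, if and only if every event occurs at every state. Because a label-splitting only relabels existing edges and never creates a new edge out of a state, and because $A$ is deterministic so each state emits at most one $e$-edge per event $e$, splitting any event into $\ge 2$ copies necessarily leaves some copy absent at some state; therefore the only label-splitting of $A$ that can have the $\tau$-ESSP is $A$ itself. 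Since $\vert E'\vert\ge\vert E(A)\vert$ for every label-splitting, \textsc{LS-$\tau$-Language Simulation} is a yes-instance iff $A$ has the $\tau$-ESSP and $\vert E(A)\vert\le\kappa$, and \textsc{LS-$\tau$-Realisation} is a yes-instance iff $A$ has both the $\tau$-SSP and the $\tau$-ESSP and $\vert E(A)\vert\le\kappa$; both are checkable in polynomial time, settling the ``otherwise in P'' claim.

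The substantial work is the NP-hardness, obtained by polynomial reductions from a suitable variant of \textsc{Vertex Cover} (for instance the cubic or a bounded-degree version, chosen so that the gadgets stay uniform). For a graph $G$ and a bound $k$ I would build a TS $A$ and set $\kappa$ so that $G$ admits a vertex cover of size $\le k$ iff $A$ has an $E'$-LS with $\vert E'\vert\le\kappa$ enjoying the target property. Generalizing the constructions of \cite{ictcs/Tredup20,apn/Tredup21}, the design principle is to introduce vertex gadgets and edge gadgets sharing a small set of global events, arranged so that each edge $\{u,v\}$ of $G$ contributes one hard separation atom whose only resolutions amount to splitting a global event at the occurrence associated with $u$ or with $v$. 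A single split ``pays for'' one vertex but may resolve all atoms of its incident edges, so the minimum number of extra labels equals the minimum size of a vertex cover, with $\kappa=\vert E(A)\vert+k$ up to the exact bookkeeping. For Embedding the hard atoms are SSP atoms and the construction must work even for $\tau=\{\nop,\swap\}$, using only the parity structure of $\swap$; for Language Simulation and Realisation the hard atoms are ESSP atoms, which is precisely why a nonempty $\omega$ is required, since a partial interaction from $\{\inp,\out,\used,\free\}$ is needed to block an event at a state.

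I expect the main obstacle to be making one family of gadgets correct for all admissible types simultaneously: the available interactions, and hence the regions that can solve a given atom, vary with $\omega$, so for each target property and each admissible $\omega$ I would have to verify both that a vertex cover yields a splitting together with an explicit witness of regions, and, conversely, that every splitting achieving the property induces a vertex cover of the matching size. The converse is the delicate direction, as it demands showing that the hard atoms genuinely cannot be solved without splitting the global event in the intended places, uniformly across the whole family.
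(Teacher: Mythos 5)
Your NP-membership argument and your treatment of the boundary case $\tau=\{\nop,\swap\}$ coincide with the paper's: guess the splitting (its size is bounded by $\vert\delta\vert$) and verify with the polynomial synthesis algorithm for these types; and for $\omega=\emptyset$ observe that both interactions are total, so the $\tau$-ESSP forces every event to occur at every state, any genuine split destroys that, and the instance reduces to checking $A$ itself. These parts are fine.

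The hardness part, however, is a plan rather than a proof, and it omits the two ideas that actually carry the paper's argument. First, the reason one gadget family works uniformly for all sixteen types is a concrete parity lemma: if a TS contains a path $s_0\edge{a}s_1\edge{b}s_2\edge{a}s_3\edge{b}s_4$, then every $\tau$-region satisfies $sup(s_0)=sup(s_4)$, because each of $a,b$ occurs an even number of times and every interaction in $\{\nop,\inp,\out,\swap,\used,\free\}$ contributes a fixed parity of support changes per occurrence (this fails precisely for $\set$ and $\res$, which is why they are excluded). The paper's edge gadget for $M_i=\{v_{i_0},v_{i_1}\}$ is exactly such a path labelled $v_{i_0}v_{i_1}v_{i_0}v_{i_1}$, so the SSP atom $(t_{i,0},t_{i,4})$ (respectively the ESSP atom $(v_{i_0},t_{i,4})$) is unsolvable unless one of the two vertex events is split --- which is precisely the ``only resolutions amount to splitting at $u$ or at $v$'' step you defer; without this lemma the converse direction you yourself flag as delicate remains unproved. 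Second, your hope for a single uniform construction across all nonempty $\omega$ cannot work for the ESSP: when $\omega\subseteq\{\used,\free\}$ an atom $(e,s)$ with $s'\edge{e}s$ and $\neg s\edge{e}$ is never solvable, since a test interaction forces $sup(s)=sup(s')$ and is then enabled at $s$ as well. The paper therefore introduces a second, bi-directed variant $\overline{A}_G$ of the gadgets for these types, with its own family of witness regions. Finally, the direction from a vertex cover to an implementable splitting requires exhibiting explicit solving regions for every separation atom of the split TS, for each of the case distinctions above; this occupies most of the paper's section and is entirely absent from your proposal.
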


First of all, we argue for the polynomial part, where we need the $\tau$-ESSP:
If $\tau=\{\nop,\swap\}$\footnote{This remains true if $\tau\subseteq\{\nop,\swap,\set,\res\}$, but $\set/\res$ are not used in this paper.}, then a TS $A=(S,E,\delta,\iota)$ has the $\tau$-ESSP if and only if every event occurs at every state.
Indeed, if every event occurs at every state there is no ESSA that needs to be solved.
On the contrary, if some event $e$ is missing from some state $s$, since all the allowed functions are defined on both $0$ and $1$,
 it is not possible to solve the ESSA $(e,s)$ of $A$.
This implies that an $E'$-label-splitting $B$ of $A$ that reflects an actual splitting, meaning that $\vert E'\vert > \vert E\vert$, can never yield a TS that has the $\tau$-ESSP, since it would produce at least one unsolvable ESSA:
If $\vert E'\vert > \vert E\vert$, then there is an event $e_i$ in $A$ for which there are edges $s\Edge{e_i^j}s'$, and $t\Edge{e_i^\ell}t'$ in $B$ with $s\not=t$, and $e_i^j\not=e_i^\ell$ (since $\delta_A$, and $\delta_B$ are functions, and every event has an occurrence by Definition~\ref{def:ts}), which were previously both labeled by $e_i$.
This implies $\neg s\Edge{e_i^\ell}$ (by $\delta_A$ being a function),
and thus $(s, e_i^\ell)$ is not solvable.
The same argument shows that, for the realization problem, if $A$ has the ESSP (meaning that every event occurs at every state as explained before) but has an unsolvable SSA, then any label-splitting introduced to solve this atom would destroy the ESSP.
Hence, for this Boolean type, the decision problems are trivial: either $A$ is already implementable (which can be can be checked in polynomial time~\cite{stacs/Schmitt96,tamc/TredupR19}) or it has to be rejected.
Thus, for the proof of Theorem~\ref{the:label_splitting}(\ref{the:label_splitting_ls_and_real}), it remains to consider the cases when $\omega\not=\emptyset$, hence the NP-completeness results.

The decision problems \textsc{LS-$\tau$-Embedding}, \textsc{LS-$\tau$-Language-Simulation} as well as \textsc{LS-$\tau$-Realisation} are in NP:
If a sought $E'$-label-splitting $B=(S,E', \delta',\iota)$ of a TS $A=(S,E, \delta,\iota )$ exists, then a Turing-machine $M$ can compute $B$ in a non-deterministic computation in time polynomial in the size $\vert \delta\vert$ of $A$, since $\vert \delta\vert =\vert \delta'\vert$.
After that, $M$ verifies in time polynomial in the size $\vert \delta'\vert$ of $B$ (and thus of $A$) that it allows the sought implementation~\cite{stacs/Schmitt96,tamc/TredupR19}.\medskip

Recall that a \emph{non-directed graph} $G=(\U,M)$ consists of a (finite) set $\U$ of \emph{vertices}, and a set $M$ of (non-directed) \emph{edges} over $\U$, that is, for all $\mathfrak{e}\in M$, we have that $\mathfrak{e}\subseteq \U$, and $\vert \mathfrak{e}\vert=2$.

\medskip
Our NP-completeness proofs are based on reductions from the following classical variant of the vertex cover (VC) problem~\cite[p.~190]{fm/GareyJ79}:

\noindent\medskip
\fbox{\begin{minipage}[t][1.7\height][c]{0.88\textwidth}
\begin{decisionproblem}
  \problemtitle{\textsc{3-Bounded Vertex Cover}~(\textsc{3BVC})}
  \probleminput{a non-directed  Graph $G=(\U, M)$ such that every vertex $v\in \U$ is a member of at most three distinct edges, and a natural number $\lambda\in \mathbb{N}$.}
  \problemquestion{Does there exist a $\lambda$-vertex cover ($\lambda$-VC, for short)
  of $G$, that is a subset $\Ss\subseteq \U$ with $\vert \Ss\vert \leq \lambda$ and $\Ss\cap \mathfrak{e}\not=\emptyset$ for all $\mathfrak{e}\in M$?}
\end{decisionproblem}
\end{minipage}}

\begin{example}[\textsc{3BVC}]\label{ex:vc}
The instance $(G,2)$, illustrated in Figure~\ref{G.fig}, where $G=(\U,M)$ such that $\U=\{v_0,v_1,v_2,v_3\}$ and $M=\{ M_0,\dots, M_4\}$, where $M_0=\{v_0,v_1\}, M_1=\{v_0,v_2\}, M_2=\{v_0,v_3\}, M_3=\{v_1,v_2\}$, and $M_4=\{v_2,v_3\} $, is a yes-instance of \textsc{3BVC}, since $\Ss=\{v_0,v_2\}$ is a 2-VC of $G$.
\end{example}

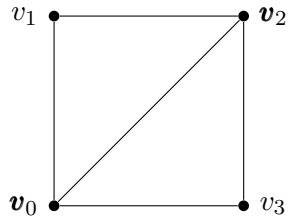
\begin{figure}[H]
 \tikzstyle{every label}=[black]
\centering
\begin{tikzpicture}[yscale=0.5,xscale=0.5]
\node[circle,fill=black!100,inner sep=0.05cm](0)at(0,0)[label=left:$\pmb v_0$]{};
\node[circle,fill=black!100,inner sep=0.05cm](1)at(0,5)[label=left:$v_1$]{};
\node[circle,fill=black!100,inner sep=0.05cm](2)at(5,5)[label=right:$\pmb v_2$]{};
\node[circle,fill=black!100,inner sep=0.05cm](3)at(5,0)[label=right:$v_3$]{};
\draw[-](0)to node[auto,left]{}(1);
\draw[-](0)to node[auto,above,pos=0.3]{}(2);
\draw[-](0)to node[auto,right,pos=0.3]{}(3);
\draw[-](1)to node[auto,right]{}(2);
\draw[-](2)to node[auto,right]{}(3);
\end{tikzpicture}
\caption{A running graph example $G$; a 2-VC is $\{{\pmb v_0},{\pmb v_2}\}$ (in bold).}
\label{G.fig}
\end{figure}

In the remainder of this paper, let $(G,\lambda)$ be an input of \textsc{3BVC}, where $G=(\U,M)$ is a graph with $n$ vertices $\U=\{v_0,\dots, v_{n-1}\}$ and $m$ edges $M=\{M_0,\dots, M_{m-1}\}$ such that $M_i=\{v_{i_0}, v_{i_1}\}$ and $i_0 < i_1$ for all $i\in \{0,\dots, m-1\}$.

\medskip
For the proof of Theorem~\ref{the:label_splitting}, we polynomially reduce $(G,\lambda)$ to a pair $(A_G,\kappa)$ of TS
$A_G=(S,E,\delta,\bot_0)$ and natural number $\kappa$ such that the following conditions are satisfied:
\begin{enumerate}
\itemsep=0.95pt
\item\label{con:implementation_implies_vc}
If there is an $E'$-label-splitting of $A_G$ that satisfies $\vert E'\vert \leq \kappa$ and has the $\tau$-SSP or the $\tau$-ESSP, then $G$ has a $\lambda$-VC.
\item\label{con:vc_implies_implementation}
If $G$ has a $\lambda$-VC, then there is an $E'$-label-splitting of $A_G$ that satisfies $\vert E'\vert \leq \kappa$ and has both the $\tau$-SSP and the $\tau$-ESSP.
\end{enumerate}
Obviously, a polynomial-time reduction that satisfies Condition~\ref{con:implementation_implies_vc} and Condition~\ref{con:vc_implies_implementation} ensures that $G$ allows a $\lambda$-VC if and only if $A_G$ allows an $E'$-label-splitting that satisfies $\vert E'\vert \leq \kappa$ and has the $\tau$-SSP, the $\tau$-ESSP or both, according to which property is sought.
Hence, it proves Theorem~\ref{the:label_splitting}.

\subsection{The proof of Theorem~\ref{the:label_splitting}(\ref{the:label_splitting_embedding})}\label{sec:label_splitting_ssp}%

In the remainder of this section we assume that $\tau=\{\nop,\swap\}\cup\omega$, with an arbitrary but fixed subset $\omega\subseteq\{\inp,\out,\used,\free\}$.

\medskip
For a start, we define $\kappa=n+2m-1+\lambda$, where $n+2m-1$ is the number of events of the aforementioned TS $A_G$.
Hence, $\lambda$ is the maximum number of events of $A_G$ that could potentially be split in an $E'$-label-splitting $B_G$ of $A_G$.

\medskip
For every $i\in \{0,\dots, m-1\}$, the TS $A_G$ has the following directed path $T_i$ that uses the vertices $v_{i_0}$ and $v_{i_1}$ of the edge $M_i$ as events:
\begin{center}
\begin{tikzpicture}[new set = import nodes, scale =0.8]
\begin{scope}[nodes={set=import nodes}]%
		\node (T) at (-0.9,0) {$T_i=$};
		\foreach \i in {0,...,4} { \coordinate (\i) at (\i*2cm, 0) ;}
		\foreach \i in {0,...,4} { \node (\i) at (\i) {\nscale{$t_{i,\i}$}};}
\graph {(import nodes);
			0 ->["\escale{$v_{i_0}$}"]1->["\escale{$v_{i_1}$}"]2->["\escale{$v_{i_0}$}"]3->["\escale{$v_{i_1}$}"]4;
		};
\end{scope}
\end{tikzpicture}
\end{center}
Finally, for all $i\in \{0,\dots, m-1\}$, we apply the edge $\bot_i\edge{w_i}t_{i,0}$ and, if $i < m-1$, then also the edge $\bot_{i}\edge{\ominus_{i+1}}\bot_{i+1}$ to connect the paths $T_0,\dots, T_{m-1}$ into the TS $A_G$, cf.~Figure~\ref{fig:ex_A_and_B}.
Let $\bot=\{\bot_0,\dots, \bot_{m-1}\}$ and $W=\{w_0,\dots, w_{m-1}\}$ and $\ominus=\{\ominus_1,\dots, \ominus_{m-1}\}$.
The TS $A_G$ has exactly $\vert V \cup W\cup \ominus\vert= n+2m-1$ events.

\begin{figure}[htb]
\begin{center}
\begin{tikzpicture}[new set = import nodes]
\begin{scope}[nodes={set=import nodes}]
		\coordinate (bot1) at (0,0);
		\foreach \i in {0,...,4} {\coordinate (\i) at (\i*1.8cm+1.8cm,0);}
		\node (bot1) at (bot1) {\nscale{$\bot_0$}};
		\foreach \i in {0,...,4} {\node (\i) at (\i) {\nscale{$t_{0,\i}$}};}
		\draw[-latex](-0.6,0)to node[]{}(bot1);
		\graph {
	(import nodes);
			bot1->["\escale{$w_0$}"]0->["\escale{$v_0$}"]1->["\escale{$v_1$}"]2->["\escale{$v_0$}"]3->["\escale{$v_1$}"]4;

			};
\end{scope}
\begin{scope}[yshift=-1cm, nodes={set=import nodes}]
		\coordinate (bot2) at (0,0);
		\foreach \i in {0,...,4} {\coordinate (\i) at (\i*1.8cm+1.8cm,0);}
		\node (bot2) at (bot2) {\nscale{$\bot_1$}};
		\foreach \i in {0,...,4} {\node (\i) at (\i) {\nscale{$t_{1,\i}$}};}		
		\graph {
	(import nodes);
			bot2->["\escale{$w_1$}"]0->["\escale{$v_0$}"]1->["\escale{$v_2$}"]2->["\escale{$v_0$}"]3->["\escale{$v_2$}"]4;
			};
\end{scope}
\begin{scope}[yshift=-2cm, nodes={set=import nodes}]
		\coordinate (bot3) at (0,0);
		\foreach \i in {0,...,4} {\coordinate (\i) at (\i*1.8cm+1.8cm,0);}
		\node (bot3) at (bot3) {\nscale{$\bot_2$}};
		\foreach \i in {0,...,4} {\node (\i) at (\i) {\nscale{$t_{2,\i}$}};}				
		\graph {
	(import nodes);
			bot3->["\escale{$w_2$}"]0->["\escale{$v_0$}"]1->["\escale{$v_3$}"]2->["\escale{$v_0$}"]3->["\escale{$v_3$}"]4;
			};
\end{scope}
\begin{scope}[yshift=-3cm, nodes={set=import nodes}]%
		\coordinate (bot4) at (0,0);
		\foreach \i in {0,...,4} {\coordinate (\i) at (\i*1.8cm+1.8cm,0);}
		\node (bot4) at (bot4) {\nscale{$\bot_3$}};
		\foreach \i in {0,...,4} {\node (\i) at (\i) {\nscale{$t_{3,\i}$}};}					
		\graph {
	(import nodes);
			bot4->["\escale{$w_3$}"]0->["\escale{$v_1$}"]1->["\escale{$v_2$}"]2->["\escale{$v_1$}"]3->["\escale{$v_2$}"]4;
			
			};
\end{scope}
\begin{scope}[yshift=-4cm,nodes={set=import nodes}]%
		\coordinate (bot5) at (0,0);
		\node (bot5) at (0,0) {\nscale{$\bot_4$}};
		\foreach \i in {0,...,4} { \coordinate (\i) at (1.8cm+ \i*1.8cm,0) ;}
		\foreach \i in {0,...,4} { \node (\i) at (\i) {\nscale{$t_{4,\i}$}};}
		\graph {
	(import nodes);
			bot5->["\escale{$w_4$}"]0->["\escale{$v_2$}"]1->["\escale{$v_3$}"]2->["\escale{$v_2$}"]3->["\escale{$v_3$}"]4;
			};
\end{scope}

\path (bot1) edge [->] node[left] {\nscale{$\ominus_1$} } (bot2);
\path (bot2) edge [->] node[left] {\nscale{$\ominus_2$} } (bot3);
\path (bot3) edge [->] node[left] {\nscale{$\ominus_3$} } (bot4);
\path (bot4) edge [->] node[left] {\nscale{$\ominus_4$} } (bot5);
\end{tikzpicture}
\end{center}

\begin{center}
\begin{tikzpicture}[new set = import nodes]
\begin{scope}[nodes={set=import nodes}]
		\coordinate (bot1) at (0,0);
		\foreach \i in {0,...,4} {\coordinate (\i) at (\i*1.8cm+1.8cm,0);}
		\node (bot1) at (bot1) {\nscale{$\bot_0$}};
		\foreach \i in {0,...,4} {\node (\i) at (\i) {\nscale{$t_{0,\i}$}};}
		\draw[-latex](-0.6,0)to node[]{}(bot1);	
		\graph {
	(import nodes);
			bot1->["\escale{$w_0$}"]0->["\escale{$v_0'$}"]1->["\escale{$v_1$}"]2->["\escale{$v_0$}"]3->["\escale{$v_1$}"]4;
			
			};
\end{scope}
\begin{scope}[yshift=-1cm, nodes={set=import nodes}]
		\coordinate (bot2) at (0,0);
		\foreach \i in {0,...,4} {\coordinate (\i) at (\i*1.8cm+1.8cm,0);}
		\node (bot2) at (bot2) {\nscale{$\bot_1$}};
		\foreach \i in {0,...,4} {\node (\i) at (\i) {\nscale{$t_{1,\i}$}};}			
		\graph {
	(import nodes);
			bot2->["\escale{$w_1$}"]0->["\escale{$v_0$}"]1->["\escale{$v_2$}"]2->["\escale{$v_0'$}"]3->["\escale{$v_2'$}"]4;
			};
\end{scope}
\begin{scope}[yshift=-2cm, nodes={set=import nodes}]
		\coordinate (bot3) at (0,0);
		\foreach \i in {0,...,4} {\coordinate (\i) at (\i*1.8cm+1.8cm,0);}
		\node (bot3) at (bot3) {\nscale{$\bot_2$}};
		\foreach \i in {0,...,4} {\node (\i) at (\i) {\nscale{$t_{2,\i}$}};}				
		\graph {
	(import nodes);
			bot3->["\escale{$w_2$}"]0->["\escale{$v_0'$}"]1->["\escale{$v_3$}"]2->["\escale{$v_0$}"]3->["\escale{$v_3$}"]4;
			};
\end{scope}
\begin{scope}[yshift=-3cm, nodes={set=import nodes}]%
		\coordinate (bot4) at (0,0);
		\foreach \i in {0,...,4} {\coordinate (\i) at (\i*1.8cm+1.8cm,0);}
		\node (bot4) at (bot4) {\nscale{$\bot_3$}};
		\foreach \i in {0,...,4} {\node (\i) at (\i) {\nscale{$t_{3,\i}$}};}					
		\graph {
	(import nodes);
			bot4->["\escale{$w_3$}"]0->["\escale{$v_1$}"]1->["\escale{$v_2$}"]2->["\escale{$v_1$}"]3->["\escale{$v_2'$}"]4;
			
			};
\end{scope}
\begin{scope}[yshift=-4cm,nodes={set=import nodes}]%
		\coordinate (bot5) at (0,0);
		\node (bot5) at (0,0) {\nscale{$\bot_4$}};
		\foreach \i in {0,...,4} { \coordinate (\i) at (1.8cm+ \i*1.8cm,0) ;}
		\foreach \i in {0,...,4} { \node (\i) at (\i) {\nscale{$t_{4,\i}$}};}
		\graph {
	(import nodes);
			bot5->["\escale{$w_4$}"]0->["\escale{$v_2'$}"]1->["\escale{$v_3$}"]2->["\escale{$v_2$}"]3->["\escale{$v_3$}"]4;
			};
\end{scope}

\path (bot1) edge [->] node[left] {\nscale{$\ominus_1$} } (bot2);
\path (bot2) edge [->] node[left] {\nscale{$\ominus_2$} } (bot3);
\path (bot3) edge [->] node[left] {\nscale{$\ominus_3$} } (bot4);
\path (bot4) edge [->] node[left] {\nscale{$\ominus_4$} } (bot5);
\end{tikzpicture}
\end{center}\vspace*{-2mm}
\caption{ \label{fig:ex_A_and_B}
The transition systems  $A_G$ (top) and  $B_G$ (bottom) that originate from Example~\ref{ex:vc}.
They will serve for illustrating some region constructions in the proofs below.}
\end{figure}
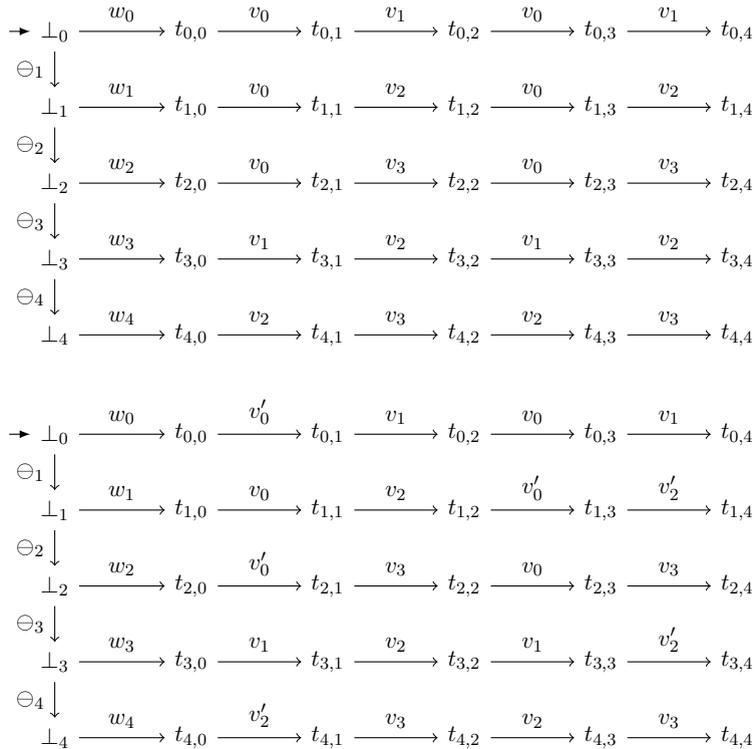

In order to prove Theorem~\ref{the:label_splitting}(\ref{the:label_splitting_embedding}) we shall show that $A_G$ allows a label-splitting $B_G$ restricted by $\kappa$, and having the $\tau$-SSP if and only if there is a $\lambda$-VC  for $G$.

The following lemma implies that if a TS has any of the introduced paths, then it does not have the $\tau$-SSP:
\begin{lemma}\label{lem:abab}
Let $A$ be a TS that has the path $P_0=s_0\edge{a}s_1\edge{b}s_2\edge{a}s_3\edge{b}s_4$.
If $R=(sup, sig)$ is a $\tau$-region of $A$, then $sup(s_0)=sup(s_4)$.
\end{lemma}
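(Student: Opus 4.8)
The plan is to reduce the statement to a single parity computation. Write $x_j = sup(s_j)$ for $j \in \{0,\dots,4\}$, and set $\alpha = sig(a)$, $\beta = sig(b)$. Since $R$ is a $\tau$-region, the image $P_0^R$ of $P_0$ under $R$ is a path in $A_\tau$; in particular each of the four steps is defined, so $x_0$ and $x_2$ lie in the domain of $\alpha$, while $x_1$ and $x_3$ lie in the domain of $\beta$, and $x_{j+1}$ is obtained from $x_j$ by applying the corresponding interaction.

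The key observation --- and the only nontrivial point --- is that every interaction of $\tau = \{\nop,\swap\}\cup\omega$ with $\omega\subseteq\{\inp,\out,\used,\free\}$ acts on each point of its domain as $x\mapsto x\oplus p_i$ for a fixed bit $p_i\in\{0,1\}$ depending only on $i$. Reading this off Figure~\ref{fig:interactions}, one has $p_\nop=p_\used=p_\free=0$ (each fixes every point of its domain) and $p_\swap=p_\inp=p_\out=1$ (each flips every point of its domain). This uniform description replaces a case distinction over the (up to) six possible values each of $\alpha$ and $\beta$ may take.

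With this in hand the conclusion is immediate. Because all four steps of $P_0^R$ are defined, I may write $x_1=x_0\oplus p_\alpha$, $x_2=x_1\oplus p_\beta$, $x_3=x_2\oplus p_\alpha$ and $x_4=x_3\oplus p_\beta$, and chaining these gives
\[
x_4=x_0\oplus p_\alpha\oplus p_\beta\oplus p_\alpha\oplus p_\beta=x_0,
\]
since $p_\alpha$ and $p_\beta$ each occur twice and cancel modulo $2$. Hence $sup(s_0)=sup(s_4)$.

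I do not anticipate a genuine obstacle: once the common bit-flip structure of the admissible interactions is recognised, the result follows purely from the fact that the alternating word $abab$ uses each letter an even number of times. It is worth remarking that this is exactly where $\omega\subseteq\{\inp,\out,\used,\free\}$ is needed --- the excluded interactions $\set$ and $\res$ are constant on their domains rather than of the form $x\mapsto x\oplus p_i$, so the same argument would fail for types containing them.
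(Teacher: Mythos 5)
Your proof is correct and is essentially the paper's own argument: both rest on the observation that every interaction in $\{\nop,\inp,\out,\swap,\used,\free\}$ either always flips or always preserves the support on its domain, so the number of support changes along $P_0^R$ is even because $a$ and $b$ each occur twice. You merely make the parity bit $p_i$ explicit and compute forward, where the paper phrases the same parity count as a contradiction with $sup(s_0)\neq sup(s_4)$.
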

\begin{proof}
Let $R=(sup, sig)$ be an arbitrary but fixed region of $A$.
If $sup(s_0)\not=sup(s_4)$, then the image $P^R$ is a path from $0$ to $1$ or from $1$ to $0$ in $\tau$.
This implies that the number of state changes between $0$ and $1$ on $P^R$ must be odd.
Since $sig(a), sig(b)\in \{\nop,\inp,\out,\swap,\used,\free\}$ and both $a$ and $b$ occur twice, i.e. an even number of times, this is impossible.
\end{proof}

Note that this lemma would not be true if {\set} and/or {\res} would be allowed in $\tau$:
for instance, we have $0\edge{\set}1\edge{\nop}1\edge{\set}1\edge{\nop}1$.

\smallskip
Hence, if a TS has the path $T_i$ for some $i\in \{0,\dots, m-1\}$, then the SSP atom $(t_{i,0},t_{i,4})$ is not $\tau$-solvable by Lemma~\ref{lem:abab}.
The following lemma states that this implies a $\lambda$-VC of $G$ if a sought $E'$-label-splitting $B_G$ of $A_G$ exists:
\begin{lemma}\label{lem:ssp_implies_vc}
If there is an $E'$-label-splitting $B_G$ of $A_G$ such that $\vert E' \vert \leq \kappa$ that has the $\tau$-SSP, then $G$ has a $\lambda$-VC.
\end{lemma}
\begin{proof}
Let $i\in \{0,\dots, m-1\}$ be arbitrary but fixed, and let $\mL$ be the set of events of $A_G$ that occur split in $B_G$.
By Lemma~\ref{lem:abab}, the SSP atom $\alpha_i=(t_{i,0},t_{i,4})$ is not $\tau$-solvable by regions of $A_G$.
However, since $B_G$ has the $\tau$-SSP, the atom $\alpha_i$ is $\tau$-solvable in $B_G$.
This implies $\{v_{i_0}, v_{i_1}\}\cap \mL \not=\emptyset$.
Since $i$ was arbitrary, this is simultaneously true for all paths $T_0,\dots, T_{m-1}$ and thus the set $\Ss=\U\cap \mL$ intersects with every edge of $G$.
Moreover, by $\vert E'\vert \leq \kappa=n+2(m-1)+\lambda$, we have $\vert \Ss\vert \leq \vert \mL\vert \leq \lambda$.
Hence, $\Ss$ defines a $\lambda$-VC of $G$.
\end{proof}
Conversely, let $\Ss=\{v_{j_0},\dots, v_{j_{\lambda-1}}\}\subseteq \U$ be a $\lambda$-VC of $G$.
(Notice that a $\lambda$-VC with less than $\lambda$ states implies one with exactly $\lambda$ states as longs as $\lambda \leq \vert \U\vert$, which can be reasonably assumed.)
In the remainder of this section, we argue that there is a sought $E'$-label-splitting $B_G$ of $A_G$.
For every $i\in \{0,\dots, \lambda-1\}$, we split the event $v_{j_i}$ into the two events $v_{j_i}$  and $v_{j_i}'$.
This yields $E'=(E\setminus \Ss)\cup\bigcup_{i=0}^{\lambda-1}\{v_{j_i}, v_{j_i}'\}$.
To define the aforementioned $E'$-label-splitting $B_G=(S, E',\delta', \bot_0)$ of $A_G$, it suffices to define $\delta'$ on the states of $T_0,\dots, T_{m-1}$.
In particular, for all $i\in \{0,\dots, m-1\}$, $\delta'$ restricted to $S(T_i)$ and $E(T_i)$ yields the path $T_i'$ as follows:
\begin{itemize}
\itemsep=0.95pt
\item
if $v_{i_0}\in \Ss$ and $v_{i_1}\not\in \Ss$, then $T_i'=t_{i,0}\lEdge{v_{i_0}'}t_{i,1}\lEdge{v_{i_1}}t_{i,2},\lEdge{v_{i_0}}t_{i,3}\lEdge{v_{i_1}} t_{i,4}$;
\item
if $v_{i_0}, v_{i_1}\in \Ss$, then $T_i'=t_{i,0}\lEdge{v_{i_0}}t_{i,1}\lEdge{v_{i_1}}t_{i,2},\lEdge{v_{i_0}'}t_{i,3}\lEdge{v_{i_1}'} t_{i,4}$;
\item
if $v_{i_0}\not\in \Ss$ and $v_{i_1}\in \Ss$, then $T_i'=t_{i,0}\lEdge{v_{i_0}}t_{i,1}\lEdge{v_{i_1}}t_{i,2},\lEdge{v_{i_0}}t_{i,3}\lEdge{v_{i_1}'} t_{i,4}$.
\end{itemize}
\eject
The following lemma essentially states that if $sup: S(B_G)\rightarrow\{0,1\}$ and $sig: E(B_G)\rightarrow \tau$ are mappings that define regions when restricted to $T_0',\dots, T_{m-1}'$, then they can be extended suitably to a region of $B_G$:
\begin{lemma}\label{lem:extend_region_1}
If $sup: S(B_G)\setminus \bot \rightarrow \{0,1\}$ and $sig: E'\setminus (W\cup\ominus) \rightarrow \tau$ are mappings such that $s\edge{e}s'\in B_G$ and $e\not\in W\cup\ominus$ imply $sup(s)\ledge{sig(e)}sup(s')\in A_\tau$,
then there is a $\tau$-region $R=(sup', sig')$ of $B_G$ that preserves $sup$ and $sig$ as follows:
\begin{enumerate}
\item
For all $s\in S(B_G)$, if $s\not\in \bot$ then $sup'(s)=sup(s)$, otherwise $sup'(s)=0$.
\item
For all $e\in E'$, if $e\not\in W\cup\ominus$ then $sig'(e)=sig(e)$;
if $e\in \ominus$, then $sig(e)=\nop$;
if $i \in \{0,\dots, m-1\}$ and $e=w_i$, if $sup(t_{i,0})=0$ then $sig(e)=\nop$, otherwise $sig(e)=\swap$.
\end{enumerate}
\end{lemma}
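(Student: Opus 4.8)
The plan is to verify directly that the pair $R=(sup',sig')$ prescribed in the statement satisfies the defining requirement of a $\tau$-region, namely that every edge $s\edge{e}s'$ of $B_G$ induces a valid edge $sup'(s)\ledge{sig'(e)}sup'(s')$ in $A_\tau$. Since $sup'$ and $sig'$ are handed to us as explicit total functions on $S(B_G)$ and $E'$ respectively, there is no consistency question to resolve: determinism of $\tau$ ensures that, once the edge condition holds, the support obtained by propagating $sup'(\bot_0)$ along any path agrees with the prescribed value, so the whole content of the lemma reduces to a finite case analysis over the edges of $B_G$, partitioned by their labels.

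First I would dispose of the edges carrying a label $e\notin W\cup\ominus$. These are precisely the vertex-event edges lying inside the split paths $T_0',\dots,T_{m-1}'$, and both of their endpoints avoid $\bot$; hence $sup'$ restricts to $sup$ and $sig'$ restricts to $sig$ on them, so the required implication $sup'(s)\ledge{sig'(e)}sup'(s')\in A_\tau$ is exactly the hypothesis of the lemma and needs no further work. The remaining edges form the two families incident to the bottom states. For an edge $\bot_i\edge{\ominus_{i+1}}\bot_{i+1}$ both endpoints receive support $0$ and $sig'(\ominus_{i+1})=\nop$, so the condition to check is $0\ledge{\nop}0$, which holds because $\nop\in\tau$ and $\nop(0)=0$. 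For an edge $\bot_i\edge{w_i}t_{i,0}$ the source has support $0$ while the target has support $sup(t_{i,0})$, and the prescribed value of $sig'(w_i)$ is tailored to exactly this: if $sup(t_{i,0})=0$ then $sig'(w_i)=\nop$ and the condition is $0\ledge{\nop}0$, while if $sup(t_{i,0})=1$ then $sig'(w_i)=\swap$ and the condition is $0\ledge{\swap}1$; both are valid since $\nop,\swap\in\tau$. This exhausts the edges of $B_G$, so $R$ is a $\tau$-region that preserves $sup$ and $sig$ as claimed.

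The argument is routine once the edge partition is in place; the only point demanding care --- and the place I would expect any difficulty --- is the treatment of the $W$-labelled edges, where the source support is forced to $0$ but the target support may be either value. This is exactly why the definition of $sig'$ on $W$ splits into two subcases and why \emph{total} interactions are required here: both $\nop$ and $\swap$ are defined on $0$, so the extension succeeds uniformly and independently of $\omega$. It is at this step that the assumption $\{\nop,\swap\}\subseteq\tau$ is genuinely used, which is what makes the construction available for every type considered in Theorem~\ref{the:label_splitting}.
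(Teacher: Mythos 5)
Your proof is correct and follows essentially the same route as the paper's own (much terser) argument: check the region condition edge by edge, using the hypothesis for the vertex-event edges and the fact that $sup'(\bot_i)=0$ together with $\nop,\swap$ being defined on $0$ for the $W$- and $\ominus$-labelled edges. Your version simply spells out the case analysis that the paper compresses into two sentences.
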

\begin{proof}
We argue that $s\edge{e}s'\in B_G$ implies $sup'(s)\ledge{sig'(e)}sup'(s')\in \tau$.
If $e\in W\cup\ominus$, then this is easy to see, since $sup'(\bot_i)=0$ for all $i\in \{0,\dots, m-1\}$.
For $e\not\in W\cup\ominus$, the claim follows by the assumptions about $sup$ and $sig$.
\end{proof}

By the next lemma, a $\tau$-region of $T_i'$, where $T_i'$ is considered as a TS, whose signature only uses $\nop$ and $\swap$ is always extendable to a region of $B_G$:
\begin{lemma}\label{lem:extend_region_2}
Let $i\in \{0,\dots, m-1\}$.
Let $sup: S(T_i') \rightarrow \{0,1\}$ and $sig: E(T_i') \rightarrow \{\nop,\swap\}$ be mappings such that $s\edge{e}s'\in T_i'$ implies $sup(s)\ledge{sig(e)}sup(s')\in \tau$.
There is a $\tau$-region $R=(sup', sig')$ of $B_G$ such that $sup'(s)=sup(s)$ and $sig'(e)=sig(e)$ for all $s\in S(T_i')$ and $e\in E(T_i')$.
\end{lemma}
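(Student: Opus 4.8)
The plan is to reduce the statement to Lemma~\ref{lem:extend_region_1} by extending the given data from the single branch $T_i'$ to all of $B_G$. The crucial structural observation is that $B_G$ is a tree rooted at $\bot_0$: every state $t_{k,j}$ has a unique predecessor (either $t_{k,j-1}$, or $\bot_k$ via $w_k$), and every $\bot_k$ with $k\geq 1$ has the unique predecessor $\bot_{k-1}$. Hence supports may be propagated independently along each branch $T_k'$ without any risk of two distinct paths forcing conflicting support values at a common state; this is exactly the freedom that makes the extension possible.

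First I would extend $sup$ to a map $sup\colon S(B_G)\setminus\bot\rightarrow\{0,1\}$ and $sig$ to a map $sig\colon E'\setminus(W\cup\ominus)\rightarrow\tau$ as follows. On $S(T_i')$ and $E(T_i')$ I keep the given values. For every event $e\in E'\setminus(W\cup\ominus)$ that does not occur on $T_i'$ I put $sig(e)=\nop$; this is well defined, since every such event is a (possibly split) vertex event and $sig$ was a genuine function on $E(T_i')$, so no event is assigned two signatures. Finally, for every branch $T_k'$ with $k\neq i$ I set $sup(t_{k,0})=0$ and propagate the support along the four edges of $T_k'$ using the already-fixed signatures.

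The point that makes this propagation always succeed is that each edge of $T_k'$ carries a signature in $\{\nop,\swap\}$: an event occurring on $T_k'$ is either also an event of $T_i'$, in which case its signature lies in $\{\nop,\swap\}$ by hypothesis, or it is not, in which case we just assigned it $\nop$. Since $\nop$ and $\swap$ are total functions on $\{0,1\}$ and both belong to $\tau$, every edge $s\edge{e}s'$ of $T_k'$ satisfies $sup(s)\ledge{sig(e)}sup(s')\in A_\tau$ regardless of the chosen starting value $sup(t_{k,0})$. Together with the hypothesis for the edges of $T_i'$ itself, this shows that $sup$ and $sig$ meet exactly the requirements of Lemma~\ref{lem:extend_region_1}.

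Applying that lemma then yields a $\tau$-region $R=(sup',sig')$ of $B_G$ with $sup'(s)=sup(s)$ for all $s\notin\bot$ and $sig'(e)=sig(e)$ for all $e\notin W\cup\ominus$. Since $S(T_i')\cap\bot=\emptyset$ and $E(T_i')\cap(W\cup\ominus)=\emptyset$, this gives $sup'(s)=sup(s)$ for all $s\in S(T_i')$ and $sig'(e)=sig(e)$ for all $e\in E(T_i')$, which is the claim. I expect the only genuine subtlety to be the consistency of signatures on events shared between $T_i'$ and other branches; but as noted this is automatic, because $sig$ is a function and we only ever override it with $\nop$ on events absent from $T_i'$, so there is no real obstacle and the argument reduces to careful bookkeeping on top of Lemma~\ref{lem:extend_region_1}.
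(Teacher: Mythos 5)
Your proposal is correct and follows essentially the same route as the paper: extend the signature by $\nop$ on events outside $E(T_i')$, fix $sup(t_{k,0})=0$ on the other branches, propagate supports along each $T_k'$ using that $\nop$ and $\swap$ are total on $\{0,1\}$, and then invoke Lemma~\ref{lem:extend_region_1}. The additional remarks on the tree structure and on well-definedness of the signature extension are fine but do not change the argument.
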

\begin{proof}
By Lemma~\ref{lem:extend_region_1}, it suffices to argue that $sup$ and $sig$ are consistently extendable to
 $T_0',\dots T_{i-1}',$ $T_{i+1}',\dots, T_{m-1}'$.
Let $j\in \{0,\dots, m-1\}\setminus\{i\}$, be arbitrary but fixed,  and let \\ $T_j'=t_{j,0}\edge{e_{j,1}}t_{j,1}\edge{e_{j,2}}t_{j,2}\edge{e_{j,3}}t_{j,3}\edge{e_{j,4}}t_{j,4}$, where $e_{j,1},\dots,e_{j,4}\in E(T_j')$ in accordance to the definition of $B_G$.
We obtain $R=(sup',sig')$ as follows.
For all $e\in E(B_G)\setminus (W\cup\ominus)$, if $e\in E(T_i')$, then $sig'(e)=sig(e)$ and otherwise $sig(e)=\nop$;
for all $s\in S(T_i')$, we define $sup'(s)=sup(s)$;
for all $j\in \{0,\dots, m-1\}\setminus\{i\}$, we define $sup(t_{j,0})=0$ and inductively $sup(t_{j,\ell})=\delta_G(sup(t_{j,\ell-1}), e_{j,\ell})$ for all $\ell\in \{1,\dots,4\}$.
Since $sig$ maps to $\{\nop,\swap\}$, so does $sig'$.
Thus, if $s\edge{e}s'\in T_j'$, then $sup'(s)\ledge{sig'(e)}sup'(s')\in A_\tau$.
Since j was arbitrary, this proves the lemma.
\end{proof}

\begin{lemma}\label{lem:vc_implies_ssp}
The TS $B_G$ has the $\tau$-SSP.
\end{lemma}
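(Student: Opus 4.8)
The plan is to verify that every SSP atom $(s,s')$ of $B_G$ is $\tau$-solvable, partitioning the atoms according to where $s$ and $s'$ lie: (I) both in a common path $T_i'$; (II) in two different paths, or one in a path and one in $\bot$; and (III) both in $\bot$. Throughout I would exploit that, since $\{\nop,\swap\}\subseteq\tau$ and $B_G$ is a tree (each non-initial state has a unique incoming edge, as is visible in Figure~\ref{fig:ex_A_and_B}), \emph{any} assignment of $\nop$/$\swap$ to the events together with a choice of $sup(\bot_0)$ automatically yields a well-defined $\tau$-region: all edges are trivially realizable, so the only real work is to arrange the supports so as to separate the two given states. For the local-to-global passage I would lean on Lemma~\ref{lem:extend_region_2} together with the extension recipe inside Lemma~\ref{lem:extend_region_1}.

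For group (II) I would use one \emph{indicator region} per path. Fixing $T_j'$ and viewing it as a standalone TS, take the region assigning support $1$ to all five of its states and signature $\nop$ to each of its events; this is legal because $1\ledge{\nop}1$ occurs in $A_\tau$. By Lemma~\ref{lem:extend_region_2} it lifts to a $\tau$-region of $B_G$ preserving these values, and by Lemma~\ref{lem:extend_region_1} every state of $\bot$ then receives support $0$, while each state of every other path $T_{j'}'$ inherits the support $0$ of $\bot_{j'}$ (every event outside $E(T_j')$, and here also those inside, being signed $\nop$). Thus the region is $1$ exactly on $S(T_j')$ and $0$ elsewhere, separating any $t_{j,l}$ from every state outside $T_j'$; ranging over $j$ this solves all atoms of group (II), including every $(\bot_i,t_{j,l})$.

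For group (III) I would toggle along the $\bot$-chain. The events $\ominus_1,\dots,\ominus_{m-1}$ occur only on $\bot_0\to\cdots\to\bot_{m-1}$, so for an atom $(\bot_i,\bot_j)$ with $i<j$ I set $sig(\ominus_{i+1})=\swap$, sign every other event $\nop$, and put $sup(\bot_0)=0$. Then $sup(\bot_k)=0$ for $k\le i$ and $sup(\bot_k)=1$ for $k>i$, so $sup(\bot_i)\neq sup(\bot_j)$; validity is immediate since only $\nop$ and $\swap$ are used.

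Group (I) is the crux, and the only place the cover hypothesis enters. For $(t_{i,k},t_{i,l})$ I would build a $\{\nop,\swap\}$-region on $T_i'$ and lift it by Lemma~\ref{lem:extend_region_2}, which preserves the supports on $T_i'$. Writing $sup(t_{i,r})$ as $sup(t_{i,0})$ combined via $\oplus$ with the parity of the $\swap$-signed events among the first $r$ edges of $T_i'$, separating $t_{i,k}$ from $t_{i,l}$ amounts exactly to making that parity odd strictly between positions $k$ and $l$; the only constraint is that an event repeated inside $T_i'$ must get the same signature. Because \Ss{} is a vertex cover, $M_i$ is covered, so $T_i'$ is one of the three broken forms from the definition of $B_G$ rather than the pattern $a\,b\,a\,b$ obstructed by Lemma~\ref{lem:abab}; a short case check over these three forms shows that for every pair $k<l$ the repetition constraints still leave a feasible parity choice (in the unsplit pattern the two repeated events would force even parity on $(t_{i,0},t_{i,4})$, which is precisely the obstruction of Lemma~\ref{lem:abab}). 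I expect this case analysis to be the main obstacle: one must confirm that across all three forms of $T_i'$ and all ten state pairs the repetition constraints never prevent the required toggle.
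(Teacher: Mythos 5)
Your proposal is correct and follows essentially the same route as the paper: isolate the $\bot$-chain, use per-path indicator regions for cross-gadget atoms, and reduce the within-path atoms to $\{\nop,\swap\}$-regions of a single $T_i'$ lifted via Lemma~\ref{lem:extend_region_2}, with the vertex-cover hypothesis breaking the $abab$ obstruction of Lemma~\ref{lem:abab}. The case check you defer does go through: your parity-feasibility condition is satisfiable for all ten pairs in each of the three forms of $T_i'$ (the repeated event contributes an even number of toggles only on intervals containing both its occurrences, where a non-repeated event is always available), and for the first form it is witnessed exactly by the paper's three explicit regions $R_1,R_2,R_3$ with images $00110$, $00011$, $01111$.
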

\begin{proof}
It is easy to see that $(\bot_i,s)$ is $\tau$-solvable for all $i\in \{0,\dots,m-1\}$ and all $s\in S(B_G)\setminus \{\bot_i\}$:
one may choose $sup(\bot_i)=1$, $sup(s)=0$ if $s\neq\bot_i$, $sig(e)=\swap$ if $\edge{e}\bot_i$ or $\bot_i\edge{e}$, $sig(e)=\nop$ otherwise.
\eject

\noindent  Similarly, the atom $(s,s')$ where $s\in S(T_i')$ and $s'\in S(B_G)\setminus S(T_i')$ is $\tau$-solvable for all $i\in \{0,\dots, m-1\}$:
one may choose $sup(s)=1$ if $s\in S(T_i)$ and $0$ otherwise, $sig(e)=\swap$ if $e=w_i$ and $\nop$ otherwise.

Thus, it remains to argue that an atom $(s, s')$ is also solvable if $s\not=s'\in S(T_i')$, for all $i\in \{0,\dots, m-1\}$.
By Lemma~\ref{lem:extend_region_2}, it suffices to present corresponding regions for $T_i'$.

Let $i\in \{0,\dots, m-1\}$ be arbitrary but fixed, and, for a start, let's consider the case where $v_{i_0}\in M$ and $v_{i_1}\not\in M$.
That is, $T_i'=t_{i,0}\Edge{v_{i_0}'}t_{i,1}\Edge{v_{i_1}}t_{i,2}\Edge{v_{i_0}}t_{i,2}\Edge{v_{i_1}}t_{i,4}$.
For all $\ell\in \{1,2,3\}$, let $R_\ell=(sup_\ell, sig_\ell)$ be a pair of mappings $sup_\ell:S(T_i')\rightarrow \{0,1\}$, $sig_\ell:E(T_i')\rightarrow \{\nop,\swap\}$, (implicitly) defined by $sup_1(t_{i,0})=0$, $sig_1(v_{i_0}')=sig_1(v_{i_0})=\nop$ and $sig_1(v_{i_1})=\swap$;
and $sup_2(t_{i,0})=0$, $sig_2(v_{i_0}')=sig_2(v_{i_1})=\nop$ and $sig_2(v_{i_0})=\swap$;
 and $sup_3(t_{i,0})=0$, $sig_3(v_{i_0})=sig_3(v_{i_1})=\nop$ and $sig_3(v_{i_0}')=\swap$.
The images of $T_i'$ under $R_1,R_2$ and $R_3$ are as follows:
\begin{align*}
{T_i'}^{R_1}&=0\ledge{\nop}0\ledge{\swap}1\ledge{\nop}1\ledge{\swap}0
\textcolor{white}{ and } {T_i'}^{R_2}=0\ledge{\nop}0\ledge{\nop}0\ledge{\swap}1\ledge{\nop}1\\
{T_i'}^{R_3}&=0\ledge{\swap}1\ledge{\nop}1\ledge{\nop}1\ledge{\nop}1
\end{align*}
By Lemma~\ref{lem:extend_region_2}, $R_\ell$ can be extended to a $\tau$-region of $B_G$ that preserves $sup$ for all $\ell\in \{1,2,3 \}$.
Moreover, obviously, for every SSP atom $(s,s')$ of $T_i'$, there is an $\ell\in \{1,2,3\}$ such that $sup_\ell(s)\not= sup_\ell(s')$.
Thus, $(s,s')$ is $\tau$-solvable in $B_G$.

\medskip
The arguments for the case $v_{i_0}\not\in M$ and $v_{i_1}\in M$ is similar (the situation is symmetrical);
the case  $v_{i_0},v_{i_1}\in M$ is simpler since no two events are the same.

By the arbitrariness of $i$, this proves the lemma.
\end{proof}

\subsection{The proof of Theorem~\ref{the:label_splitting}(\ref{the:label_splitting_ls_and_real}) when $\tau\cap\{\inp,\out\}\not=\emptyset$}\label{sec:label_splitting_essp}%

Let $\tau=\{\nop,\swap\}\cup\omega$ be a type of nets such that $\omega\subseteq\{\inp,\out,\used,\free\}$ and $\omega\cap\{\inp,\out\}\not=\emptyset$, and let $A_G$ and $\kappa$ be defined as in Section~\ref{sec:label_splitting_ssp}.

\begin{lemma}\label{lem:essp_implies_vc}
If there is an $E'$-label-splitting $B_G$ of $A_G$ such that $\vert E'\vert\leq \kappa$ that has the $\tau$-ESSP, then $G$ has a $\lambda$-VC.
\end{lemma}
\begin{proof}
Let $i\in \{0,\dots, m-1\}$ be arbitrary but fixed and $\mL$ be the set of events of $A_G$ that occur split in $B_G$.
If $R=(sup, sig)$ is a $\tau$-region of $A_G$, then $sup(t_{i,0})=sup(t_{i,4})$ by Lemma~\ref{lem:abab}.
Thus, $\alpha_i=(v_{i_0}, t_{i,4})$ is not $\tau$-solvable, since $sup(t_{i,0})\lEdge{sig(v_{i_0})}$ implies $sup(t_{i,4})\lEdge{sig(v_{i_0})}$.
On the other hand, $B_G$ has the $\tau$-ESSP, implying the $\tau$-solvability of $\alpha_i$.
This implies $\{v_{i_0}, v_{i_1}\}\cap\mL\not=\emptyset$.
Since $i$ was arbitrary, this is  true for all $T_0,\dots, T_{m-1}$.
Thus, just like for Lemma~\ref{lem:ssp_implies_vc}, we get that $\Ss=\mL\cap \U$ defines a $\lambda$-VC of $G$.
This proves the lemma.
\end{proof}

Conversely, let $\Ss$ be a $\lambda$-VC of $G$, and $B_G$ be the $E'$-label-splitting of $A_G$ as defined in Section~\ref{sec:label_splitting_ssp}.
By the following lemma, $B_G$ has an exact net realization:

\begin{lemma}\label{lem:vc_implies_essp}
The TS $B_G$ has the $\tau$-SSP, and the $\tau$-ESSP.
\end{lemma}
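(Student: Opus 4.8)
The plan is to prove Lemma~\ref{lem:vc_implies_essp} by reusing the SSP result and then establishing the $\tau$-ESSP separately. Since Lemma~\ref{lem:vc_implies_ssp} already shows that $B_G$ has the $\tau$-SSP, I would first note that this half of the claim is done (the construction of $B_G$ does not depend on which of $\inp,\out$ lies in $\omega$, so the earlier argument transfers verbatim). The remaining work is entirely about the $\tau$-ESSP, i.e.\ solving every atom $(e,s)$ with $\neg s\edge{e}$. The overall strategy is the same modular one used for SSP: by Lemma~\ref{lem:extend_region_1} and Lemma~\ref{lem:extend_region_2}, it suffices to exhibit, for each ESSP atom, a region on the relevant component $T_i'$ (or a region handling the $\bot$-states, the $w_i$ and the $\ominus_j$ events) whose signature uses only $\{\nop,\swap\}$ together with one ``blocking'' interaction drawn from $\omega\cap\{\inp,\out\}$, and then invoke the extension lemmas to lift it to all of $B_G$.

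First I would dispose of the easy ESSP atoms. For events in $W\cup\ominus$ and for atoms involving the $\bot_i$, the same spreading-region trick from the proof of Lemma~\ref{lem:vc_implies_ssp} applies: one chooses a support that is $1$ only on the states where the blocked event must be prevented and $\nop$/$\swap$ elsewhere, using a $\used$/$\free$- or $\inp$/$\out$-type signature on the offending event to kill its firing. The core of the proof is the path events $v_{i_0},v_{i_0}',v_{i_1},v_{i_1}'$ inside a single $T_i'$. Here the crucial gain from label-splitting is that, because $\Ss$ is a vertex cover, at least one of the two vertices of each edge $M_i$ has been split, so within $T_i'$ no event occurs twice with the pattern $a\,b\,a\,b$ that Lemma~\ref{lem:abab} forbids; this is exactly what frees us to build a region whose image on $T_i'$ ends at a state value on which the blocking interaction is undefined. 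Concretely, for an atom $(e,t_{i,j})$ I would design $sig$ so that the image ${T_i'}^{R}$ brings $sup(t_{i,j})$ to the value ($0$ for $\inp$/$\used$, $1$ for $\out$/$\free$) on which $sig(e)$ is not defined, while every actual edge of $T_i'$ remains a legal $\tau$-edge.

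The case analysis mirrors the three shapes of $T_i'$ from the $B_G$ construction (whether $v_{i_0}\in\Ss$, $v_{i_1}\in\Ss$, or both). In each shape I would give an explicit support/signature pair, present the resulting image ${T_i'}^{R}$ as a short display, and check that the targeted event is indeed blocked at the required state. Because $\tau$ is symmetric under exchanging $\inp\leftrightarrow\out$ together with $0\leftrightarrow1$, it should suffice to treat the $\inp\in\omega$ subcase in detail and remark that $\out\in\omega$ is handled symmetrically; the presence of $\used$ or $\free$ in $\omega$ only gives additional, easier options and never obstructs the construction.

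The main obstacle I anticipate is the interaction between the split and unsplit copies of a shared vertex across different paths $T_i'$ and $T_j'$: an event such as $v_{i_0}$ may appear (split or not) in several components, and a signature chosen to block it in one path must still yield legal images everywhere else. Lemma~\ref{lem:extend_region_2} is what saves this, but it only applies when the extending signature can be taken to be $\nop$ off the active path; I must therefore verify in each case that the blocking signature assigned to the offending event is consistent with forcing $\nop$ on all the other components' edges, i.e.\ that the support stays constant ($\equiv0$) on those components. Getting this bookkeeping right across the overlapping vertex labels, while keeping the image on the target path at the correct terminal value, is the delicate part; once it is checked for the representative atoms in each of the three path shapes, the arbitrariness of $i$ completes the proof.
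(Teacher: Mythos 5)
Your overall architecture (reusing Lemma~\ref{lem:vc_implies_ssp} for the SSP, handling $W\cup\ominus$ by a spreading region, reducing to the $\inp$ case by the $\inp\leftrightarrow\out$ complement, then treating the vertex events path by path according to the three shapes of $T_i'$) matches the paper's proof. However, there are two concrete problems in exactly the step you flag as delicate. First, Lemma~\ref{lem:extend_region_2} is stated only for signatures mapping into $\{\nop,\swap\}$, so it cannot be used to lift an ESSP-solving region off a single $T_i'$: such a region must assign the blocked event a signature in $\omega$, and that event (or its companions) may occur in up to three different paths, so the region has to be constructed globally on $B_G$ rather than extended by $\nop$ from one path.

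Second, and more substantively, your plan to ``verify that the support stays constant ($\equiv 0$)'' on every path $T_j'$ not containing $v$ cannot succeed with a single region. To keep the image of the paths that do contain $v$ legal while giving $v$ the signature $\inp$, the companion vertices of $v$ must receive signature $\swap$; those same companion events occur in other paths $T_j'$, where their $\swap$ unavoidably produces states of support $1$ --- and at a state of support $1$ the interaction $\inp$ is enabled, so the atom $(v,s)$ is \emph{not} solved there. The paper closes precisely this hole by pairing each such region with a complementary one (setting $sig(w_j)=\swap$ so that $sup(t_{j,0})$ flips to $1$ and all supports along $T_j'$ are complemented); every state of $T_j'$ then has support $0$ in at least one of the two regions, and the pair jointly separates $v$ from all states of the non-containing paths. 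Without this two-region device (or an equivalent substitute), the bookkeeping you describe cannot be made to work, so the argument as outlined has a genuine gap.
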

\begin{proof}
By Lemma~\ref{lem:vc_implies_ssp}, the TS $B_G$ has the $\tau$-SSP.
It remains to argue for the $\tau$-ESSP.
Without loss of generality, we assume that $\inp\in\tau$ and present $\tau$-regions $R=(sup, sig)$ that only use $\nop,\inp$ and $\swap$.
Indeed, if $\inp\not\in \tau$, then $\out\in \tau$ and one gets corresponding (complement) regions $R'=(sup', sig')$ simply by $sup'(s)=1-sup(s)$, $sig'(e)=sig(e)$ if $sig(e)\in \{\nop,\swap\}$, and $sig'(e)=\out$ if $sig(e)=\inp$ for all $s\in S(B_G)$ and all $e\in E'$.

\medskip
The general idea to solve an ESSA $(e,s)$ is to choose $sup(s)=0$, $sig(e)=\inp$,
and $sig(e')\in\{\nop,\swap\}$ as needed, and $sup(s')$ accordingly for the other events $e'$, and states $s'$ of $A_G$.
For instance, if $e\in W \cup\ominus$, one may choose $sup(s')=1$ if $s'\edge{e}$ and $sup(s')=0$ otherwise; $sig(e)=\inp$, $sig(e')=\swap$ when $e'$ leads to or originates from the unique state with support $1$, and $\nop$ otherwise:
all the ESSAs $(e,s)$ will then be solved.

For the other events, we proceed as follows.
Let $v$ be some node of the graph $G$.

\medskip
If $v\not\in \Ss$, it may only occur in at most three paths of $B_G$ of the form\\
$T_i' =t_{i,0}\Edge{v}t_{i,1}\Edge{v_{i}}t_{i,2}\Edge{v}t_{i,3}\Edge{v_{i}'} t_{i,4}$ or $T_i' =t_{i,0}\Edge{v_{i}'}t_{i,1}\Edge{v}t_{i,2}\Edge{v_{i}}t_{i,3}\Edge{v} t_{i,4}$, since the companion vertex $v_i$ must be in $\Ss$; moreover, if $v$ belongs to several edges of $B_G$, all the corresponding companions must be different. Also, $v$ occurs twice, with a $v_i$ in between.
We may thus choose $sig(v)=\inp$, $sup(s)=1$ and $sup(s')=0$ if $s\edge{v}s'$, $sig(e)=\swap$ if $\edge{e}s$, in each path $T'_i$ containing $v$.
All the other events will have a signature $\nop$.
For any $T'_j$ not containing $v$, hence only containing signatures $\nop$ and $\swap$, the latter will introduce states with support $1$ while we would like to have supports $0$ to exclude $v$;
if this occurs, we shall thus consider two choices for the support of $t_{j,0}$: 0 (region $R_0$) and 1 (region $R_1$, with $sig(w_j)=\swap$); then the corresponding supports for the various $t_{j,\ell}$ will be complementary too.
In any case, we shall get regions separating $v$ from all the needed states.

\begin{figure}[!ht]
\vspace*{1mm}
\begin{minipage}{\textwidth}
\begin{center}
\scalebox{0.98}{
\begin{tikzpicture}[new set = import nodes]
\begin{scope}[nodes={set=import nodes}]
		\coordinate (bot1) at (0,0);
		\foreach \i in {0,...,4} {\coordinate (\i) at (\i*1.8cm+1.8cm,0);}
		\foreach \i in {1,3} {\fill[red!20] (\i) circle (0.3cm);}
		\node (bot1) at (bot1) {\nscale{$\bot_0$}};
		\draw[-latex](-0.6,0)to node[]{}(bot1);	
		\foreach \i in {0,...,4} {\node (\i) at (\i) {\nscale{$t_{0,\i}$}};}
		\graph {
	(import nodes);
			bot1->["\escale{$w_0$}"]0->["\escale{$v_0':\swap$}"]1->["\escale{$v_1:\inp$}"]2->["\escale{$v_0:\swap$}"]3->["\escale{$v_1:\inp$}"]4;
			
			};
\end{scope}
\begin{scope}[yshift=-1cm, nodes={set=import nodes}]
		\coordinate (bot2) at (0,0);
		\foreach \i in {0,...,4} {\coordinate (\i) at (\i*1.8cm+1.8cm,0);}
		\foreach \i in {1,2} {\fill[red!20] (\i) circle (0.3cm);}
		\node (bot2) at (bot2) {\nscale{$\bot_1$}};
		\foreach \i in {0,...,4} {\node (\i) at (\i) {\nscale{$t_{1,\i}$}};}			
		\graph {
	(import nodes);
			bot2->["\escale{$w_1$}"]0->["\escale{$v_0:\swap$}"]1->["\escale{$v_2$}"]2->["\escale{$v_0':\swap$}"]3->["\escale{$v_2'$}"]4;
			};
\end{scope}
\begin{scope}[yshift=-2cm, nodes={set=import nodes}]
		\coordinate (bot3) at (0,0);
		\foreach \i in {0,...,4} {\coordinate (\i) at (\i*1.8cm+1.8cm,0);}
		\foreach \i in {1,2} {\fill[red!20] (\i) circle (0.3cm);}
		\node (bot3) at (bot3) {\nscale{$\bot_2$}};
		\foreach \i in {0,...,4} {\node (\i) at (\i) {\nscale{$t_{2,\i}$}};}				
		\graph {
	(import nodes);
			bot3->["\escale{$w_2$}"]0->["\escale{$v_0':\swap$}"]1->["\escale{$v_3$}"]2->["\escale{$v_0:\swap$}"]3->["\escale{$v_3$}"]4;
			};
\end{scope}
\begin{scope}[yshift=-3cm, nodes={set=import nodes}]%
		\coordinate (bot4) at (0,0);
		\foreach \i in {0,...,4} {\coordinate (\i) at (\i*1.8cm+1.8cm,0);}
		\foreach \i in {0,2} {\fill[red!20] (\i) circle (0.3cm);}
		\node (bot4) at (bot4) {\nscale{$\bot_3$}};
		\foreach \i in {0,...,4} {\node (\i) at (\i) {\nscale{$t_{3,\i}$}};}					
		\graph {
	(import nodes);
			bot4->["\escale{$w_3:\swap$}"]0->["\escale{$v_1:\inp$}"]1->["\escale{$v_2:\swap$}"]2->["\escale{$v_1:\inp$}"]3->["\escale{$v_2'$}"]4;
			
			};
\end{scope}
\begin{scope}[yshift=-4cm,nodes={set=import nodes}]%
		\coordinate (bot5) at (0,0);
		\node (bot5) at (0,0) {\nscale{$\bot_4$}};
		\foreach \i in {0,...,4} { \coordinate (\i) at (1.8cm+ \i*1.8cm,0) ;}
		\foreach \i in {3,4} {\fill[red!20] (\i) circle (0.3cm);}
		\foreach \i in {0,...,4} { \node (\i) at (\i) {\nscale{$t_{4,\i}$}};}
		\graph {
	(import nodes);
			bot5->["\escale{$w_4$}"]0->["\escale{$v_2'$}"]1->["\escale{$v_3$}"]2->["\escale{$v_2:\swap$}"]3->["\escale{$v_3$}"]4;
			};
\end{scope}

\path (bot1) edge [->] node[left] {\nscale{$\ominus_1$} } (bot2);
\path (bot2) edge [->] node[left] {\nscale{$\ominus_2$} } (bot3);
\path (bot3) edge [->] node[left] {\nscale{$\ominus_3$} } (bot4);
\path (bot4) edge [->] node[left] {\nscale{$\ominus_4$} } (bot5);
\end{tikzpicture}
} \end{center}\vspace*{-5mm}
\caption{Region $R_0=(sup,sig)$ for $v=v_1$. $v$ occurs in $T'_0$ and $T'_3$. The supports of $t_{1,0},t_{2,0},t_{4,0}$ are chosen 0. The colored nodes have support 1, otherwise 0. When not indicated, signatures are $\nop$.}\label{fig:R_0}
\end{minipage}
\end{figure}

\begin{figure}[htbp]
\begin{minipage}{\textwidth}
\begin{center}
\begin{tikzpicture}[new set = import nodes]
\begin{scope}[nodes={set=import nodes}]
		\coordinate (bot1) at (0,0);
		\foreach \i in {0,...,4} {\coordinate (\i) at (\i*1.8cm+1.8cm,0);}
		\foreach \i in {1,3} {\fill[red!20] (\i) circle (0.3cm);}
		\node (bot1) at (bot1) {\nscale{$\bot_0$}};
		\draw[-latex](-0.6,0)to node[]{}(bot1);	
		\foreach \i in {0,...,4} {\node (\i) at (\i) {\nscale{$t_{0,\i}$}};}
		\graph {
	(import nodes);
			bot1->["\escale{$w_0$}"]0->["\escale{$v_0':\swap$}"]1->["\escale{$v_1:\inp$}"]2->["\escale{$v_0:\swap$}"]3->["\escale{$v_1:\inp$}"]4;
			
			};
\end{scope}
\begin{scope}[yshift=-1cm, nodes={set=import nodes}]
		\coordinate (bot2) at (0,0);
		\foreach \i in {0,...,4} {\coordinate (\i) at (\i*1.8cm+1.8cm,0);}
		\foreach \i in {0,3,4} {\fill[red!20] (\i) circle (0.3cm);}
		\node (bot2) at (bot2) {\nscale{$\bot_1$}};
		\foreach \i in {0,...,4} {\node (\i) at (\i) {\nscale{$t_{1,\i}$}};}			
		\graph {
	(import nodes);
			bot2->["\escale{$w_1:\swap$}"]0->["\escale{$v_0:\swap$}"]1->["\escale{$v_2$}"]2->["\escale{$v_0':\swap$}"]3->["\escale{$v_2'$}"]4;
			};
\end{scope}
\begin{scope}[yshift=-2cm, nodes={set=import nodes}]
		\coordinate (bot3) at (0,0);
		\foreach \i in {0,...,4} {\coordinate (\i) at (\i*1.8cm+1.8cm,0);}
		\foreach \i in {0,3,4} {\fill[red!20] (\i) circle (0.3cm);}
		\node (bot3) at (bot3) {\nscale{$\bot_2$}};
		\foreach \i in {0,...,4} {\node (\i) at (\i) {\nscale{$t_{2,\i}$}};}				
		\graph {
	(import nodes);
			bot3->["\escale{$w_2:\swap$}"]0->["\escale{$v_0':\swap$}"]1->["\escale{$v_3$}"]2->["\escale{$v_0:\swap$}"]3->["\escale{$v_3$}"]4;
			};
\end{scope}
\begin{scope}[yshift=-3cm, nodes={set=import nodes}]%
		\coordinate (bot4) at (0,0);
		\foreach \i in {0,...,4} {\coordinate (\i) at (\i*1.8cm+1.8cm,0);}
		\foreach \i in {0,2} {\fill[red!20] (\i) circle (0.3cm);}
		\node (bot4) at (bot4) {\nscale{$\bot_3$}};
		\foreach \i in {0,...,4} {\node (\i) at (\i) {\nscale{$t_{3,\i}$}};}					
		\graph {
	(import nodes);
			bot4->["\escale{$w_3:\swap$}"]0->["\escale{$v_1:\inp$}"]1->["\escale{$v_2:\swap$}"]2->["\escale{$v_1:\inp$}"]3->["\escale{$v_2'$}"]4;
			
			};
\end{scope}
\begin{scope}[yshift=-4cm,nodes={set=import nodes}]%
		\coordinate (bot5) at (0,0);
		\node (bot5) at (0,0) {\nscale{$\bot_4$}};
		\foreach \i in {0,...,4} { \coordinate (\i) at (1.8cm+ \i*1.8cm,0) ;}
		\foreach \i in {0,1,2} {\fill[red!20] (\i) circle (0.3cm);}
		\foreach \i in {0,...,4} { \node (\i) at (\i) {\nscale{$t_{4,\i}$}};}
		\graph {
	(import nodes);
			bot5->["\escale{$w_4:\swap$}"]0->["\escale{$v_2'$}"]1->["\escale{$v_3$}"]2->["\escale{$v_2:\swap$}"]3->["\escale{$v_3$}"]4;
			};
\end{scope}

\path (bot1) edge [->] node[left] {\nscale{$\ominus_1$} } (bot2);
\path (bot2) edge [->] node[left] {\nscale{$\ominus_2$} } (bot3);
\path (bot3) edge [->] node[left] {\nscale{$\ominus_3$} } (bot4);
\path (bot4) edge [->] node[left] {\nscale{$\ominus_4$} } (bot5);
\end{tikzpicture}
\end{center}
\caption{Region $R_1=(sup,sig)$ for $v=v_1$. $v$ occurs in $T'_0$ and $T'_3$. The supports of $t_{1,0},t_{2,0},t_{4,0}$ are chosen 1. The colored nodes have support 1, otherwise 0. When not indicated, signatures are $\nop$.}\label{fig:R_1}
\end{minipage}
\end{figure}

If $v\in \Ss$, it may only occur in at most three paths of $B_G$ of the form \\
$T_i' =t_{i,0}\Edge{v'}t_{i,1}\Edge{v_{i}}t_{i,2}\Edge{v}t_{i,3}\Edge{v_{i}} t_{i,4}$ \ or \
 $T_i' =t_{i,0}\Edge{v_{i}}t_{i,1}\Edge{v}t_{i,2}\Edge{v_{i}}t_{i,3}\Edge{v'} t_{i,4}$,
 \\
 or $T_i' =t_{i,0}\Edge{v}t_{i,1}\Edge{v_{i}}t_{i,2}\Edge{v'}t_{i,3}\Edge{v'_{i}} t_{i,4}$ or
 $T_i' =t_{i,0}\Edge{v_{i}}t_{i,1}\Edge{v}t_{i,2}\Edge{v'_{i}}t_{i,3}\Edge{v'} t_{i,4}$,
  depending on the fact that the companion vertex $v_i$ belongs to $\Ss$ or not.
Again, if $v$ belongs to several edges, all the corresponding companions must be different.
Also, $v$ occurs only once, as well as $v'$.
For $v$, we may thus choose $sig(v)=\inp$, $sup(s)=1$ and $sup(s')=0$ if $s\edge{v}s'$, $sig(e)=\swap$ if $\edge{e}s$, in each path $T'_i$ containing $v$.
All the other events will have a signature.
For any $T'_j$ not containing $v$, we may have some events with signature $\swap$, hence some states with support $1$ while we would like to only have supports $0$ in order to exclude $v$;
hence, if needed, we shall again consider two choices for the support of $t_{j,0}$: 0 (region $R_2$) and 1 (region $R_3$);
then the corresponding supports for the various $t_{j,\ell}$ will be complementary too.
In any case, we shall get regions separating $v$ from all the needed states.
However, for the first two kinds of configuration, since $v$ is between two $v_i$'s, after the second $v_i$ (with signature $\swap$), we shall have states with support $1$ after $v$, while we would like to have supports $0$ to exclude $v$;
hence we will also use a region where $w_i$ has signature $\swap$ and $v_i$ has signature $\nop$ (region $R_3$).
Note that, in this case, no event in any $T'_j$ not containing $v$ has a signature $\swap$ and all the states have support $0$;
hence we do not need here to introduce an additional region with $sig(w_j)=\swap$ to get complementary supports.
In any case, we shall get regions separating $v$ from all the needed states.

\medskip
For $v'$, we proceed similarly.

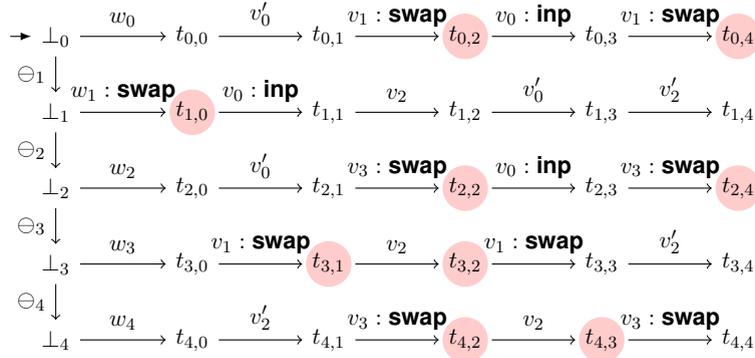
\begin{figure}[htbp]
\begin{minipage}{\textwidth}
\begin{center}
\begin{tikzpicture}[new set = import nodes]
\begin{scope}[nodes={set=import nodes}]
		\coordinate (bot1) at (0,0);
		\foreach \i in {0,...,4} {\coordinate (\i) at (\i*1.8cm+1.8cm,0);}
		\foreach \i in {2,4} {\fill[red!20] (\i) circle (0.3cm);}
		\node (bot1) at (bot1) {\nscale{$\bot_0$}};
		\draw[-latex](-0.6,0)to node[]{}(bot1);	
		\foreach \i in {0,...,4} {\node (\i) at (\i) {\nscale{$t_{0,\i}$}};}
		\graph {
	(import nodes);
			bot1->["\escale{$w_0$}"]0->["\escale{$v_0'$}"]1->["\escale{$v_1:\swap$}"]2->["\escale{$v_0:\inp$}"]3->["\escale{$v_1:\swap$}"]4;
			
			};
\end{scope}
\begin{scope}[yshift=-1cm, nodes={set=import nodes}]
		\coordinate (bot2) at (0,0);
		\foreach \i in {0,...,4} {\coordinate (\i) at (\i*1.8cm+1.8cm,0);}
		\foreach \i in {0} {\fill[red!20] (\i) circle (0.3cm);}
		\node (bot2) at (bot2) {\nscale{$\bot_1$}};
		\foreach \i in {0,...,4} {\node (\i) at (\i) {\nscale{$t_{1,\i}$}};}			
		\graph {
	(import nodes);
			bot2->["\escale{$w_1:\swap$}"]0->["\escale{$v_0:\inp$}"]1->["\escale{$v_2$}"]2->["\escale{$v_0'$}"]3->["\escale{$v_2'$}"]4;
			};
\end{scope}
\begin{scope}[yshift=-2cm, nodes={set=import nodes}]
		\coordinate (bot3) at (0,0);
		\foreach \i in {0,...,4} {\coordinate (\i) at (\i*1.8cm+1.8cm,0);}
		\foreach \i in {2,4} {\fill[red!20] (\i) circle (0.3cm);}
		\node (bot3) at (bot3) {\nscale{$\bot_2$}};
		\foreach \i in {0,...,4} {\node (\i) at (\i) {\nscale{$t_{2,\i}$}};}				
		\graph {
	(import nodes);
			bot3->["\escale{$w_2$}"]0->["\escale{$v_0'$}"]1->["\escale{$v_3:\swap$}"]2->["\escale{$v_0:\inp$}"]3->["\escale{$v_3:\swap$}"]4;
			};
\end{scope}
\begin{scope}[yshift=-3cm, nodes={set=import nodes}]%
		\coordinate (bot4) at (0,0);
		\foreach \i in {0,...,4} {\coordinate (\i) at (\i*1.8cm+1.8cm,0);}
		\foreach \i in {1,2} {\fill[red!20] (\i) circle (0.3cm);}
		\node (bot4) at (bot4) {\nscale{$\bot_3$}};
		\foreach \i in {0,...,4} {\node (\i) at (\i) {\nscale{$t_{3,\i}$}};}					
		\graph {
	(import nodes);
			bot4->["\escale{$w_3$}"]0->["\escale{$v_1:\swap$}"]1->["\escale{$v_2$}"]2->["\escale{$v_1:\swap$}"]3->["\escale{$v_2'$}"]4;
			
			};
\end{scope}
\begin{scope}[yshift=-4cm,nodes={set=import nodes}]%
		\coordinate (bot5) at (0,0);
		\node (bot5) at (0,0) {\nscale{$\bot_4$}};
		\foreach \i in {0,...,4} { \coordinate (\i) at (1.8cm+ \i*1.8cm,0) ;}
		\foreach \i in {2,3} {\fill[red!20] (\i) circle (0.3cm);}
		\foreach \i in {0,...,4} { \node (\i) at (\i) {\nscale{$t_{4,\i}$}};}
		\graph {
	(import nodes);
			bot5->["\escale{$w_4$}"]0->["\escale{$v_2'$}"]1->["\escale{$v_3:\swap$}"]2->["\escale{$v_2$}"]3->["\escale{$v_3:\swap$}"]4;
			};
\end{scope}

\path (bot1) edge [->] node[left] {\nscale{$\ominus_1$} } (bot2);
\path (bot2) edge [->] node[left] {\nscale{$\ominus_2$} } (bot3);
\path (bot3) edge [->] node[left] {\nscale{$\ominus_3$} } (bot4);
\path (bot4) edge [->] node[left] {\nscale{$\ominus_4$} } (bot5);
\end{tikzpicture}
\end{center}
\caption{Region $R_2=(sup,sig)$ for $v=v_0$. $v$ occurs in $T'_0$, $T'_1$ and $T'_2$. The supports of $t_{3,0},t_{4,0}$ are chosen 0.
The colored nodes have support 1, otherwise 0. When not indicated, signatures are $\nop$.}\label{fig:R_2}
\end{minipage}
\end{figure}

\begin{figure}[htbp]
\begin{minipage}{\textwidth}
\begin{center}
\begin{tikzpicture}[new set = import nodes]
\begin{scope}[nodes={set=import nodes}]
		\coordinate (bot1) at (0,0);
		\foreach \i in {0,...,4} {\coordinate (\i) at (\i*1.8cm+1.8cm,0);}
		\foreach \i in {0,1,2} {\fill[red!20] (\i) circle (0.3cm);}
		\node (bot1) at (bot1) {\nscale{$\bot_0$}};
		\draw[-latex](-0.6,0)to node[]{}(bot1);	
		\foreach \i in {0,...,4} {\node (\i) at (\i) {\nscale{$t_{0,\i}$}};}
		\graph {
	(import nodes);
			bot1->["\escale{$w_0:\swap$}"]0->["\escale{$v_0'$}"]1->["\escale{$v_1$}"]2->["\escale{$v_0:\inp$}"]3->["\escale{$v_1$}"]4;
			
			};
\end{scope}
\begin{scope}[yshift=-1cm, nodes={set=import nodes}]
		\coordinate (bot2) at (0,0);
		\foreach \i in {0,...,4} {\coordinate (\i) at (\i*1.8cm+1.8cm,0);}
		\foreach \i in {0} {\fill[red!20] (\i) circle (0.3cm);}
		\node (bot2) at (bot2) {\nscale{$\bot_1$}};
		\foreach \i in {0,...,4} {\node (\i) at (\i) {\nscale{$t_{1,\i}$}};}			
		\graph {
	(import nodes);
			bot2->["\escale{$w_1:\swap$}"]0->["\escale{$v_0:\inp$}"]1->["\escale{$v_2$}"]2->["\escale{$v_0'$}"]3->["\escale{$v_2'$}"]4;
			};
\end{scope}
\begin{scope}[yshift=-2cm, nodes={set=import nodes}]
		\coordinate (bot3) at (0,0);
		\foreach \i in {0,...,4} {\coordinate (\i) at (\i*1.8cm+1.8cm,0);}
		\foreach \i in {0,1,2} {\fill[red!20] (\i) circle (0.3cm);}
		\node (bot3) at (bot3) {\nscale{$\bot_2$}};
		\foreach \i in {0,...,4} {\node (\i) at (\i) {\nscale{$t_{2,\i}$}};}				
		\graph {
	(import nodes);
			bot3->["\escale{$w_2:\swap$}"]0->["\escale{$v_0'$}"]1->["\escale{$v_3$}"]2->["\escale{$v_0:\inp$}"]3->["\escale{$v_3$}"]4;
			};
\end{scope}
\begin{scope}[yshift=-3cm, nodes={set=import nodes}]%
		\coordinate (bot4) at (0,0);
		\foreach \i in {0,...,4} {\coordinate (\i) at (\i*1.8cm+1.8cm,0);}
		\node (bot4) at (bot4) {\nscale{$\bot_3$}};
		\foreach \i in {0,...,4} {\node (\i) at (\i) {\nscale{$t_{3,\i}$}};}					
		\graph {
	(import nodes);
			bot4->["\escale{$w_3$}"]0->["\escale{$v_1$}"]1->["\escale{$v_2$}"]2->["\escale{$v_1$}"]3->["\escale{$v_2'$}"]4;
			
			};
\end{scope}
\begin{scope}[yshift=-4cm,nodes={set=import nodes}]%
		\coordinate (bot5) at (0,0);
		\node (bot5) at (0,0) {\nscale{$\bot_4$}};
		\foreach \i in {0,...,4} { \coordinate (\i) at (1.8cm+ \i*1.8cm,0) ;}
		\foreach \i in {0,...,4} { \node (\i) at (\i) {\nscale{$t_{4,\i}$}};}
		\graph {
	(import nodes);
			bot5->["\escale{$w_4$}"]0->["\escale{$v_2'$}"]1->["\escale{$v_3$}"]2->["\escale{$v_2$}"]3->["\escale{$v_3$}"]4;
			};
\end{scope}

\path (bot1) edge [->] node[left] {\nscale{$\ominus_1$} } (bot2);
\path (bot2) edge [->] node[left] {\nscale{$\ominus_2$} } (bot3);
\path (bot3) edge [->] node[left] {\nscale{$\ominus_3$} } (bot4);
\path (bot4) edge [->] node[left] {\nscale{$\ominus_4$} } (bot5);
\end{tikzpicture}
\end{center}\vspace*{-3mm}
\caption{Region $R_3=(sup,sig)$ for $v=v_0$.
$v$ occurs in $T'_0$, $T'_1$ and $T'_2$.
The supports of $t_{3,0},t_{4,0}$ are chosen 0.
The colored nodes have support 1, otherwise 0.
When not indicated, signatures are $\nop$.
Note that here no $v_i$ with $i\neq 0$ needs to have a signature $\swap$;
hence we do not need to add a region with signature $\swap$ for $w_j$ when $T'_j$ does not contain $v$.}\label{fig:R_3}
\end{minipage}
\end{figure}
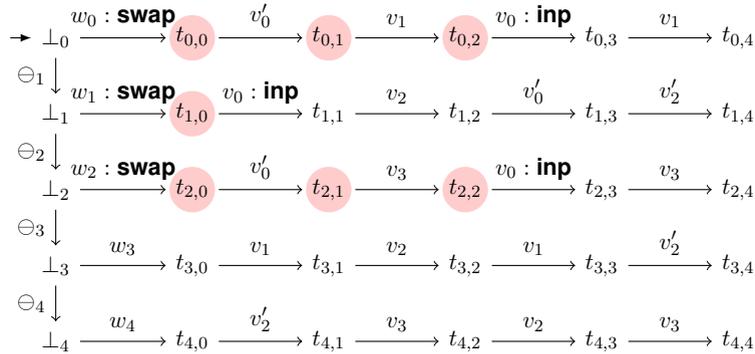

This proves the lemma.
\end{proof}
%

\subsection{The proof of Theorem~\ref{the:label_splitting}(\ref{the:label_splitting_ls_and_real}), when  $\tau\cap\{\inp,\out\}=\emptyset$}\label{sec:split-used}%

Since we already handled the case $\tau=\{\nop,\swap\}$, we may assume $\tau=\{\nop,\swap\}\cup\omega$ with $\emptyset\not=\omega\subseteq \{\used,\free\}$.
Then, if we have to solve an ESSA $(e,s)$, we have to try to find a $\tau$-region where $sig(e)=\used$ and $\sup(s)=0$, or  $sig(e)=\free$ and $\sup(s)=1$, but if $s'\edge{e}s\edge{\neg e}$, in either case $sup(s')=sup(s)$ and we may not solve $(e,s)$.

\medskip
Thus, there is no $E'$-label-splitting of the TS $A_G$ of Section~\ref{sec:label_splitting_ssp}, that has the $\tau$-ESSP.
To overcome this obstacle, with as little effort as possible, we shall use transition systems $\overline{A}_G$ and $\overline{B}_G$ which extend $A_G$ and $B_G$ by backward-edges (see Figure~\ref{fig:overline A and B}).
Similarly, we shall denote by $\overline{T}_i$ the subsystem $T_i$ with backward edges, for any $i$.

\medskip

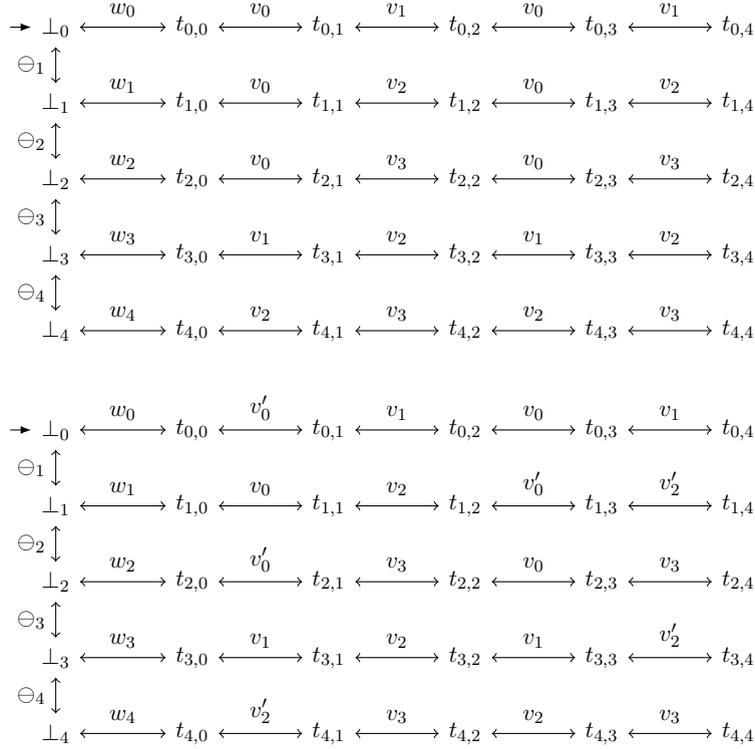
\begin{figure}[h!]
\begin{center}
\begin{tikzpicture}[new set = import nodes]
\begin{scope}[nodes={set=import nodes}]
		\coordinate (bot1) at (0,0);
		\foreach \i in {0,...,4} {\coordinate (\i) at (\i*1.8cm+1.8cm,0);}
		\node (bot1) at (bot1) {\nscale{$\bot_0$}};
		\draw[-latex](-0.6,0)to node[]{}(bot1);	
		\foreach \i in {0,...,4} {\node (\i) at (\i) {\nscale{$t_{0,\i}$}};}
		\graph {
	(import nodes);
			bot1<->["\escale{$w_0$}"]0<->["\escale{$v_0$}"]1<->["\escale{$v_1$}"]2<->["\escale{$v_0$}"]3<->["\escale{$v_1$}"]4;
			
			};
\end{scope}
\begin{scope}[yshift=-1cm, nodes={set=import nodes}]
		\coordinate (bot2) at (0,0);
		\foreach \i in {0,...,4} {\coordinate (\i) at (\i*1.8cm+1.8cm,0);}
		\node (bot2) at (bot2) {\nscale{$\bot_1$}};
		\foreach \i in {0,...,4} {\node (\i) at (\i) {\nscale{$t_{1,\i}$}};}			
		\graph {
	(import nodes);
			bot2<->["\escale{$w_1$}"]0<->["\escale{$v_0$}"]1<->["\escale{$v_2$}"]2<->["\escale{$v_0$}"]3<->["\escale{$v_2$}"]4;
			};
\end{scope}
\begin{scope}[yshift=-2cm, nodes={set=import nodes}]
		\coordinate (bot3) at (0,0);
		\foreach \i in {0,...,4} {\coordinate (\i) at (\i*1.8cm+1.8cm,0);}
		\node (bot3) at (bot3) {\nscale{$\bot_2$}};
		\foreach \i in {0,...,4} {\node (\i) at (\i) {\nscale{$t_{2,\i}$}};}				
		\graph {
	(import nodes);
			bot3<->["\escale{$w_2$}"]0<->["\escale{$v_0$}"]1<->["\escale{$v_3$}"]2<->["\escale{$v_0$}"]3<->["\escale{$v_3$}"]4;
			};
\end{scope}
\begin{scope}[yshift=-3cm, nodes={set=import nodes}]%
		\coordinate (bot4) at (0,0);
		\foreach \i in {0,...,4} {\coordinate (\i) at (\i*1.8cm+1.8cm,0);}
		\node (bot4) at (bot4) {\nscale{$\bot_3$}};
		\foreach \i in {0,...,4} {\node (\i) at (\i) {\nscale{$t_{3,\i}$}};}					
		\graph {
	(import nodes);
			bot4<->["\escale{$w_3$}"]0<->["\escale{$v_1$}"]1<->["\escale{$v_2$}"]2<->["\escale{$v_1$}"]3<->["\escale{$v_2$}"]4;
			
			};
\end{scope}
\begin{scope}[yshift=-4cm,nodes={set=import nodes}]%
		\coordinate (bot5) at (0,0);
		\node (bot5) at (0,0) {\nscale{$\bot_4$}};
		\foreach \i in {0,...,4} { \coordinate (\i) at (1.8cm+ \i*1.8cm,0) ;}
		\foreach \i in {0,...,4} { \node (\i) at (\i) {\nscale{$t_{4,\i}$}};}
		\graph {
	(import nodes);
			bot5<->["\escale{$w_4$}"]0<->["\escale{$v_2$}"]1<->["\escale{$v_3$}"]2<->["\escale{$v_2$}"]3<->["\escale{$v_3$}"]4;
			};
\end{scope}

\path (bot1) edge [<->] node[left] {\nscale{$\ominus_1$} } (bot2);
\path (bot2) edge [<->] node[left] {\nscale{$\ominus_2$} } (bot3);
\path (bot3) edge [<->] node[left] {\nscale{$\ominus_3$} } (bot4);
\path (bot4) edge [<->] node[left] {\nscale{$\ominus_4$} } (bot5);
\end{tikzpicture}
\end{center}
\begin{center}
\begin{tikzpicture}[new set = import nodes]
\begin{scope}[nodes={set=import nodes}]
		\coordinate (bot1) at (0,0);
		\foreach \i in {0,...,4} {\coordinate (\i) at (\i*1.8cm+1.8cm,0);}
		\node (bot1) at (bot1) {\nscale{$\bot_0$}};
		\draw[-latex](-0.6,0)to node[]{}(bot1);	
		\foreach \i in {0,...,4} {\node (\i) at (\i) {\nscale{$t_{0,\i}$}};}
		\graph {
	(import nodes);
			bot1<->["\escale{$w_0$}"]0<->["\escale{$v_0'$}"]1<->["\escale{$v_1$}"]2<->["\escale{$v_0$}"]3<->["\escale{$v_1$}"]4;
			
			};
\end{scope}
\begin{scope}[yshift=-1cm, nodes={set=import nodes}]
		\coordinate (bot2) at (0,0);
		\foreach \i in {0,...,4} {\coordinate (\i) at (\i*1.8cm+1.8cm,0);}
		\node (bot2) at (bot2) {\nscale{$\bot_1$}};
		\foreach \i in {0,...,4} {\node (\i) at (\i) {\nscale{$t_{1,\i}$}};}			
		\graph {
	(import nodes);
			bot2<->["\escale{$w_1$}"]0<->["\escale{$v_0$}"]1<->["\escale{$v_2$}"]2<->["\escale{$v_0'$}"]3<->["\escale{$v_2'$}"]4;
			};
\end{scope}
\begin{scope}[yshift=-2cm, nodes={set=import nodes}]
		\coordinate (bot3) at (0,0);
		\foreach \i in {0,...,4} {\coordinate (\i) at (\i*1.8cm+1.8cm,0);}
		\node (bot3) at (bot3) {\nscale{$\bot_2$}};
		\foreach \i in {0,...,4} {\node (\i) at (\i) {\nscale{$t_{2,\i}$}};}				
		\graph {
	(import nodes);
			bot3<->["\escale{$w_2$}"]0<->["\escale{$v_0'$}"]1<->["\escale{$v_3$}"]2<->["\escale{$v_0$}"]3<->["\escale{$v_3$}"]4;
			};
\end{scope}
\begin{scope}[yshift=-3cm, nodes={set=import nodes}]%
		\coordinate (bot4) at (0,0);
		\foreach \i in {0,...,4} {\coordinate (\i) at (\i*1.8cm+1.8cm,0);}
		\node (bot4) at (bot4) {\nscale{$\bot_3$}};
		\foreach \i in {0,...,4} {\node (\i) at (\i) {\nscale{$t_{3,\i}$}};}					
		\graph {
	(import nodes);
			bot4<->["\escale{$w_3$}"]0<->["\escale{$v_1$}"]1<->["\escale{$v_2$}"]2<->["\escale{$v_1$}"]3<->["\escale{$v_2'$}"]4;
			
			};
\end{scope}
\begin{scope}[yshift=-4cm,nodes={set=import nodes}]%
		\coordinate (bot5) at (0,0);
		\node (bot5) at (0,0) {\nscale{$\bot_4$}};
		\foreach \i in {0,...,4} { \coordinate (\i) at (1.8cm+ \i*1.8cm,0) ;}
		\foreach \i in {0,...,4} { \node (\i) at (\i) {\nscale{$t_{4,\i}$}};}
		\graph {
	(import nodes);
			bot5<->["\escale{$w_4$}"]0<->["\escale{$v_2'$}"]1<->["\escale{$v_3$}"]2<->["\escale{$v_2$}"]3<->["\escale{$v_3$}"]4;
			};
\end{scope}

\path (bot1) edge [<->] node[left] {\nscale{$\ominus_1$} } (bot2);
\path (bot2) edge [<->] node[left] {\nscale{$\ominus_2$} } (bot3);
\path (bot3) edge [<->] node[left] {\nscale{$\ominus_3$} } (bot4);
\path (bot4) edge [<->] node[left] {\nscale{$\ominus_4$} } (bot5);
\end{tikzpicture}
\end{center}\vspace*{-3mm}
\caption{ \label{fig:overline A and B}
The new transition systems  $\overline{A}_G$ (top) and  $\overline{B}_G$ (bottom) that originate from Example~\ref{ex:vc}.}
\end{figure}

\begin{lemma}\label{lem:essp_implies_vc_2}
If there is an $E'$-label-splitting of $\overline{A}_G$ such that $\vert E'\vert\leq \kappa$ that has the $\tau$-ESSP, then $G$ has a $\lambda$-VC.
\end{lemma}
\begin{proof}
The proof is similar to the one of Lemma~\ref{lem:essp_implies_vc}.
\end{proof}
Conversely, let $\Ss$ be a $\lambda$-VC of $G$, and let $\overline{B}_G=(S(B_G),E', \delta'', \bot_0)$ be the bi-directed extension of the TS $B_G=(S(B_G), E',\delta',\bot_0)$, which has been defined in Section~\ref{sec:label_splitting_ssp}.
That is, for all $s,s'\in S(B_G)$ and all $e\in E'$ if $\delta'(s,e)=s'$, then $\delta''(s,e)=s'$ and $\delta''(s',e)=s$.
To complete the proof of Theorem~\ref{the:label_splitting}.2 it remains to argue that $\overline{B}_G$ has the $\tau$-ESSP and the $\tau$-SSP.
Recall that the signatures of the regions that have been presented for the proof of Lemma~\ref{lem:vc_implies_ssp} only use $\nop$ and $\swap$.
Thus, they can be directly applied to $\overline{B}_G$, which proves $\overline{B}_G$'s $\tau$-SSP.
Hence, it remains to argue for $\overline{B}_G$'s ESSP.
The following lemma confirms both properties for $\overline{B}_G$ and thus completes the proof of Theorem~\ref{the:label_splitting}.
\begin{lemma}\label{lem:vc_implies_essp_2}
The TS $\overline{B}_G$ has the $\tau$-SSP, and the $\tau$-ESSP.
\end{lemma}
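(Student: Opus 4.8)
The plan is to mirror the proof of Lemma~\ref{lem:vc_implies_essp}, but to exploit \used{} in place of \inp{} and to keep careful track of the fact that all edges of $\overline{B}_G$ are now bidirectional. As a preliminary reduction I would note that it suffices to treat the case $\used\in\tau$: if only $\free\in\tau$, then complementing every region (replace $sup$ by $1-sup$, exchange the roles of \used{} and \free{}, and leave \nop{} and \swap{} untouched) turns a \used-based witness into a \free-based one, exactly as in Lemma~\ref{lem:vc_implies_essp}. For the $\tau$-SSP I would simply recycle the regions constructed in Lemma~\ref{lem:vc_implies_ssp}: their signatures take values only in $\{\nop,\swap\}$, and since \nop{} preserves the support while \swap{} inverts it, every edge $sup(s)\ledge{sig(e)}sup(s')$ of $A_\tau$ is matched by its reverse $sup(s')\ledge{sig(e)}sup(s)$. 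Hence each such region is still a valid $\tau$-region of the bidirected $\overline{B}_G$ and solves the same SSP atoms.

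The core is the $\tau$-ESSP, and here I would first record the rigid shape a region must take on a bidirected graph: for an edge $s\fbedge{e}s'$ the signature $sig(e)$ must map $sup(s)$ to $sup(s')$ \emph{and} $sup(s')$ to $sup(s)$, so $sig(e)=\swap$ forces $sup(s)\neq sup(s')$, $sig(e)=\nop$ forces $sup(s)=sup(s')$, while $sig(e)=\used$ forces $sup(s)=sup(s')=1$. Consequently, to solve an ESSP atom $(e,s)$ I look for a region with $sig(e)=\used$ and $sup(s)=0$; the \used{} requirement pins to $1$ every state incident to an $e$-labelled edge, whereas $s$ must sit at $0$. The enabling structural fact is that $\overline{B}_G$ is a tree (it has $6m$ states, $6m-1$ edges, and is connected through the $\bot$-spine), so any locally consistent assignment of signatures from $\{\nop,\swap,\used\}$ to the edges propagates to a globally well-defined support. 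The structural events are then painless: each $w_i$ labels the single edge $\bot_i\fbedge{w_i}t_{i,0}$ and each $\ominus_j$ the single edge $\bot_{j-1}\fbedge{\ominus_j}\bot_j$, so I put support $1$ on both endpoints of the relevant edge, \swap{} on the edges leaving that pair, \nop{} everywhere else, and $0$ on the rest, separating the event from every other state.

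For a vertex event $v$ (or a split copy $v'$) the genuine work begins, and this is the main obstacle. Setting $sig(v)=\used$ colours $1$ every state flanking a $v$-edge in the at most three paths $T_i'$ that contain $v$; a companion event lying \emph{between} two such forced-$1$ states must then receive \nop{}, while a companion that crosses into the $0$-region must receive \swap. The trouble is that these companions are shared with paths $T_j'$ \emph{not} containing $v$: once a companion is forced to \swap{}, every one of its edges, including those inside $T_j'$, must be bichromatic, so I may not give $T_j'$ the all-$0$ support. I would resolve this precisely as in Lemma~\ref{lem:vc_implies_essp}, by using for each off-path $T_j'$ the two complementary candidate supports arising from the choices $sup(t_{j,0})=0$ and $sup(t_{j,0})=1$ (adjusting $sig(w_j)$ between \nop{} and \swap{} accordingly and keeping every $sig(\ominus_\ell)=\nop$ with all $\bot_\ell$ at $0$); propagating \swap{} through $T_j'$ keeps each forced-\swap{} companion bichromatic there too, and the two complementary regions jointly separate $v$ from whichever states carry support $1$ under the other choice. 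The reason the competing \swap/\nop{} demands coming from different paths never collide on a single event is combinatorial: since every edge $M_i$ has an endpoint in \Ss, each $T_i'$ contains a split vertex, so a vertex $v\notin\Ss$ only ever meets companions lying in \Ss, and because $G$ is simple and $3$-bounded, the companions of $v$ across its (at most three) edges are pairwise distinct. This is exactly what lets the tree-propagation go through. Collecting these regions over all events and target states yields a $\tau$-witness for the $\tau$-ESSP of $\overline{B}_G$, which together with the already established $\tau$-SSP completes the proof.
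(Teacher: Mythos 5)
Your proposal is correct and follows essentially the same route as the paper's proof: reduce to $\used\in\tau$ by complementation, reuse the $\{\nop,\swap\}$-valued regions of Lemma~\ref{lem:vc_implies_ssp} for the $\tau$-SSP, and solve each ESSP atom $(e,s)$ with a region where $sig(e)=\used$ pins both endpoints of every $e$-edge to support $1$, companions get \nop{} or \swap{} as dictated by the local configuration, and the two complementary support choices for $t_{j,0}$ on paths not containing the event handle the states forced to $1$ by shared \swap-companions. The explicit tree observation and the distinct-companions argument from $3$-boundedness are exactly the (implicit) consistency justifications the paper relies on, so no gap remains.
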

\begin{proof}
First, we may observe that the signatures of the regions that have been presented for the proof of Lemma~\ref{lem:vc_implies_ssp} only use $\nop$ and $\swap$.
Thus, they can be directly applied to $\overline{B}_G$, which proves $\overline{B}_G$'s $\tau$-SSP.

\medskip
We now argue that $\overline{B}_G$ has the $\tau$-ESSP.
Without loss of generality, we assume $\used\in \tau$.
(Otherwise $\free\in \tau$ and this case is similar.)
We thus have to find, for each ESSA $(e,s)$, a region $R=(sup,sig)$ such that $sig(e)=\used$ and $sup(s)=0$.
The proof is very similar to the one for Lemma~\ref{lem:vc_implies_essp}.

For instance, if $e\in W \cup\ominus$, one may choose $sup(s')=1$ if $s'\edge{e}$ and $sup(s')=0$ otherwise;
$sig(e)=\used$, $sig(e')=\swap$ when $e'$ leads to or originates from the unique state with support $1$, and $\nop$ otherwise:
all the ESSAs $(e,s)$ will then be solved.
For the other events, we proceed as follows.

\medskip
Let $v$ be some node of the graph $G$.\\
If $v\not\in \Ss$, it may only occur in at most three edges of $G$ (in each case, the companion vertex must be in $\Ss$, and if $v$ belongs to several edges of $G$, all the corresponding companions must be different), which leads to the two kinds of decorated paths in $\overline{B}_G$:

\medskip
$\bot_i:0\FBedge{w_i:\swap}t_{i,0}:1\FBedge{v:\used}t_{i,1}:1\FBedge{v_{i}:\nop}t_{i,2}:1\FBedge{v:\used}t_{i,3}:1\FBedge{v_{i}':\swap} t_{i,4}:0$ \\
or $t_{i,0}:0\FBedge{v_{i}':\swap}t_{i,1}:1\FBedge{v:\used}t_{i,2}:1\FBedge{v_{i}:\nop}t_{i,3}:1\FBedge{v:\used} t_{i,4}:1$.

\medskip\noindent
For any $\overline{T}_j$ not containing $v$, hence only containing signatures $\nop$ and $\swap$, this may introduce states with support $1$ while we would like to have supports $0$ to exclude $v$;
if this occurs, we may consider two choices for the support of $t_{j,0}$: 0 (region $R_0$) and 1 (region $R_1$, with $sig(w_j)=\swap$);
then the corresponding supports for the various $t_{j,\ell}$ will be complementary too.
In any case, we shall get regions separating $v$ from all the needed states.

\medskip
If $v\in \Ss$, this leads us to four kinds of decorated paths in $\overline{B}_G$, depending on the fact that the companion vertex $v_i$ belongs to $\Ss$ or not, and if $v$ is smaller than $v_i$ or not:

\medskip\noindent
$t_{i,0}:0\FBedge{v':\nop}t_{i,1}:0\FBedge{v_{i}:\swap}t_{i,2}:1\FBedge{v:\used}t_{i,3}:1\FBedge{v_{i}:\swap} t_{i,4}:0$\\
 or $t_{i,0}:0\FBedge{v_{i}:\swap}t_{i,1}:1\FBedge{v:\used}t_{i,2}:1\FBedge{v_{i}:\swap}t_{i,3}:0\FBedge{v':\nop} t_{i,4}:0$\\
or $\bot_i:0\FBedge{w_i:\swap}t_{i,0}:1\FBedge{v:\used}t_{i,1}:1\FBedge{v_{i}:\swap}t_{i,2}:0\FBedge{v':\nop}t_{i,3}:0\FBedge{v'_{i}:\nop} t_{i,4}:0$\\
or $t_{i,0}:0\FBedge{v_{i}:\swap}t_{i,1}:1\FBedge{v:\used}t_{i,2}:1\FBedge{v'_{i}:\swap}t_{i,3}:0\FBedge{v':\nop} t_{i,4}:0$.

\medskip\noindent
Again, if $v$ belongs to several edges, all the corresponding companions must be different.
Also, $v$ occurs only once, as well as $v'$.
For any $\overline{T}_j$ not containing $v$, we may have some events with signature $\swap$ (the other ones having $\nop$), hence some states with support $1$ while we would like to only have supports $0$ in order to exclude $v$;
hence, if needed, we shall again consider two choices for the support of $t_{j,0}$: 0 (region $R_2$) and 1(region $R_3$);
then the corresponding supports for the various $t_{j,\ell}$ will be complementary too.
In any case, we shall get regions separating $v$ from all the needed states.
\\ For $v'$, we proceed similarly.

Altogether, by the arbitrariness of $v$, we proved the $\tau$-ESSP for $\overline{B}_G$ and this thus completes the proof of Lemma~\ref{lem:vc_implies_essp_2}.
\end{proof}

\section{The complexity of  edge-removal}\label{sec:edge_removal}%

In order to make a TS implementable, i.e., to satisfy SSP and/or ESSP, the removal of edges can also be an appropriate way of modification:

\begin{definition}[Edge-Removal]\label{def:edge_removal}
Let $A=(S,E,\delta,\iota)$ be a TS.
A TS\  $B=(S, E,\delta',\iota)$ is an \emph{edge-removal} of $A$ if, for all $e\in E'$ and all $s,s'\in S'$, holds: if $s\edge{e}s'\in B$, then $s\edge{e}s'\in A$.
By $\K=\{s\edge{e}s'\in A\mid s\edge{e}s'\not\in B\}$ we refer to the (set of) removed edges.
\end{definition}

We would like to emphasize that $B$ and $A$ have the same set of states and events.
Moreover, $B$ is assumed to be a valid system, i.e., each state remains reachable from the initial one and each event occurs at least once in $\delta'$.

\noindent
\fbox{\begin{minipage}[t][1.8\height][c]{0.88\textwidth}
\begin{decisionproblem}
  \problemtitle{\textsc{$\tau$-Edge-Removal for Embedding}}
  \probleminput{A TS\  $A=(S,E,\delta,\iota)$, a natural number $\kappa$.}
  \problemquestion{Does there exist an edge-removal $B$ for $A$ by $\K$ that has the $\tau$-SSP and satisfies $\vert\K\vert\leq \kappa$?}
\end{decisionproblem}
\end{minipage}}
\smallskip

\noindent
\fbox{\begin{minipage}[t][1.8\height][c]{0.88\textwidth}
\begin{decisionproblem}
  \problemtitle{\textsc{$\tau$-Edge-Removal for Language-Simulation}}
  \probleminput{A TS\  $A=(S,E,\delta,\iota)$, a natural number $\kappa$.}
  \problemquestion{Does there exist an edge-removal $B$ for $A$ by $\K$ that has the $\tau$-ESSP and satisfies $\vert\K\vert\leq \kappa$?}
\end{decisionproblem}
\end{minipage}}
\smallskip

\noindent
\fbox{\begin{minipage}[t][1.8\height][c]{0.88\textwidth}
\begin{decisionproblem}
  \problemtitle{\textsc{$\tau$-Edge-Removal for Realization}}
  \probleminput{A TS\  $A=(S,E,\delta,\iota)$, a natural number $\kappa$.}
  \problemquestion{Does there exist an edge-removal $B$ for $A$ by $\K$ that has the $\tau$-ESSP and the $\tau$-SSP and satisfies $\vert\K\vert\leq \kappa$?}
\end{decisionproblem}
\end{minipage}}

\medskip

The following theorem characterizes the complexity of the edge-removal problem for all implementations and types under consideration:

\begin{theorem}\label{the:edge_removal}
If $\omega\subseteq\{\inp,\out, \free,\used\}$, and $\tau=\{\nop,\swap\}\cup\omega$, then
\begin{enumerate}
\itemsep=0.95pt
\item\label{the:edge_removal_embedding}
\textsc{$\tau$-Edge-Removal for Embedding} is NP-complete.
\item\label{the:edge_removal_langsim_real}
\textsc{$\tau$-Edge-Removal for Language-Simulation} and \\
\textsc{$\tau$-Edge-Removal for Realization} are NP-complete if $\omega\not=\emptyset$, otherwise they are solvable in polynomial time.
\end{enumerate}
\end{theorem}

First of all, we argue for the polynomial part:
If $\tau=\{\nop,\swap\}$, then a TS $A=(S,E,\delta,\iota)$ has the $\tau$-ESSP if and only if every event occurs at every state,
since the functions \nop,\swap{} are defined on both $0$ and $1$.
Thus, any ESSA $(e,s)$ of $A$ would be unsolvable.
This implies that an edge-removal may neither render nor keep the $\tau$-ESSP valid, since the removal of an edge would produce an unsolvable ESSA.
Hence, the decision problems are polynomial, since either $A$ is already implementable, which can be checked in polynomial time~\cite{tamc/TredupR19}, or it has to be rejected.
Thus, for the proof of Theorem~\ref{the:edge_removal}, it remains to consider the NP-completeness results.

\medskip
In order to prove Theorem~\ref{the:edge_removal},  we present suitable reductions of \text{3BVC}, where we reduce an input $G=(\U,M)$ to an instance $(A_G,\kappa)$, such that $G$ has a
$\lambda$-VC if and only if $A_G$ allows an implementable edge-removal $B_G$ that satisfies $\vert \K\vert\leq \kappa$.
However, due to their different ability to solve ESSAs, when it comes to language-simulation or realization, we have to distinguish again between the types $\tau$ that have at least one of $\inp$ or $\out$, and the ones that do not have any of them.

\subsection{The proof of Theorem~\ref{the:edge_removal}(\ref{the:edge_removal_embedding}), and the proof of  Theorems~\ref{the:edge_removal}(\ref{the:edge_removal_langsim_real}) for the types with \inp\ or \out}\label{sec:edge_removal_inp_out}

In this section, we shall prove Theorem~\ref{the:edge_removal}(\ref{the:edge_removal_embedding}) for all types $\tau=\{\nop,\swap\}\cup\omega$ with $\omega\subseteq\{\inp, \out, \free, \used\}$, and we prove Theorems~\ref{the:edge_removal}(\ref{the:edge_removal_langsim_real}) for the types that additionally satisfy $\omega\cap\{\inp,\out\}\not=\emptyset$.
We deal with these proofs simultaneously, since they use the same reduction.
In particular, according to our general approach, we start from an input $(G, \lambda)$ of \textsc{3BVC}, and construct an instance TS $(A_G, \kappa)$ as follows: \medskip\\
If $\tau=\{\nop,\swap\}\cup\omega$ with $\omega\subseteq\{\inp, \out, \free, \used\}$, then $A_G$ allows an edge-removal $B_G$ that respects $\kappa$, and has the $\tau$-SSP if and only if $G$ has a $\lambda$-vertex-cover; \medskip\\
If $\tau=\{\nop,\swap\}\cup\omega$ with $\omega\subseteq\{\inp, \out, \free, \used\}$, and $\omega\cap\{\inp,\out\}\not=\emptyset$, then $A_G$ allows an edge-removal $B_G$ that respects $\kappa$, and has the $\tau$-ESSP if and only if $G$ has a $\lambda$-vertex-cover.

\medskip
Hence, in the remainder of this section,  let $\tau=\{\nop,\swap\}\cup\omega$ with $\omega\subseteq\{\inp, \out, \free, \used\}$, whenever we deal with the $\tau$-SSP,  and let additionally $\omega\cap\{\inp,\out\}\not=\emptyset$, whenever we deal with the $\tau$-ESSP (where the TS in question is clear from the context).

\medskip
We now define the announced instance $(A_G,\kappa)$.
First of all, $\kappa=\lambda$.
Moreover, for every $i\in \{0,\dots, m-1\}$, the TS $A_G$ has the following path $T_i$, that uses the vertices of $\mathfrak{e}_i=\{v_{i_0}, v_{i_1}\}$ (assuming $i_0<i_1$) as events:
\begin{center}
\begin{tikzpicture}[new set = import nodes]
\begin{scope}[nodes={set=import nodes}]
		\node (T) at (-0.75,0) {$T_i=$};
		\foreach \i in {0,...,2} { \coordinate (\i) at (\i*2cm, 0) ;}
		\foreach \i in {0,...,2} { \node (\i) at (\i) {\nscale{$t_{i,\i}$}};}
\graph {(import nodes);
			0 ->["\escale{$v_{i_0}$}"]1->["\escale{$v_{i_1}$}"]2;
		};
\end{scope}
\end{tikzpicture}
\end{center}

Furthermore, for every $i\in \{0,\dots, n-1\}$, the TS $A$ has the following gadget $F_i$ that uses the node $v_i$ as event, and, for all $j\in \{0,\dots, \kappa\}$, has an $a_j$-labeled edge which directs in the same direction as the $v_i$-labeled edge:
\begin{center}
\begin{tikzpicture}[new set = import nodes]
\begin{scope}[xshift=7.5cm, nodes={set=import nodes}]
	\node (F) at (-0.75, 0) {$F_i=$};
	\coordinate (0) at (0,0);
	\coordinate (1) at (3,0);
	\node (0) at (0) {\nscale{$f_{i,0}$}};
	\node (1) at (1) {\nscale{$f_{i,1}$}};
	\node (dots) at (1.5,0.4) {\nscale{$\vdots$}};
	\graph {
	(import nodes);
			
			0->[ swap, bend right=90,  "\escale{$v_j$}"]1;
			0->[swap, bend right =35, "\escale{$a_0$}"]1;
			0->[ bend right  =5, swap,"\escale{$a_1$}"]1;
			0->[swap, bend left =35,swap , "\escale{$a_{\kappa-1}$}"]1;
			0->[swap, bend left =90, swap, "\escale{$a_\kappa$}"]1;

			};
\end{scope}
\end{tikzpicture}
\end{center}

The TS $A_G$ has the initial state $\iota$;
for all $i\in \{0,\dots, m-1\}$, and all $j\in \{0,1,2\}$, it has the edge $\iota\lEdge{y_i^j}t_{i,j}$;
finally, for all $\ell\in \{0,\dots, n-1\}$, it has the edge $\iota\edge{z_\ell}f_{\ell,0}$.
The $y_i^j$-, and $z_\ell$-labeled edges serve to ensure reachability.
For the sake of simplicity, we summarize these events by $Y=\bigcup_{i=0}^{m-1}\{y_i^0,y_i^1,y_i^2\}\cup\{z_0,\dots, z_{n-1}\}$.

\begin{lemma}\label{lem:edge_removal_implies_model}
If there is an edge-removal $B_G$ of $A_G$ that satisfies $\vert \K\vert \leq \kappa$, and has the $\tau$-SSP, respectively the $\tau$-ESSP, then there is a $\lambda$-VC for $G=(\U,M)$.
\end{lemma}

\begin{proof}
Let $B_G$ be an edge-removal of $A_G$ that satisfies $\vert \K\vert \leq \kappa$, and has the $\tau$-SSP, respectively the $\tau$-ESSP, and let
$\Ss=\U\cap \{e\in E(A_G)\mid \exists s,s'\in S(A_G): s\edge{e}s'\in \K \}$ be the set of events of $A_G$ that label an edge of $A_G$ that is removed to obtain $B_G$.
First of all, we note that $\vert \Ss\vert \leq \vert \K\vert \leq \kappa=\lambda$.
Moreover, in the following, we will argue that $\Ss$ defines a vertex cover of $G$.

\medskip
Let $i\in \{0,\dots, m-1\}$ be arbitrary but fixed.
We show that $\Ss\cap\{v_{i_0}, v_{i_1}\}\not=\emptyset$.
If $t_{i,0}\edge{v_{i_0}}t_{i,1}\in \K$ or $t_{i,1}\edge{v_{i_1}}t_{i,2}\in \K$, then we are finished.
Otherwise, since $B_G$ has the $\tau$-SSP, respectively the $\tau$-ESSP, there is a $\tau$-region $R=(sup, sig)$ that solves $(t_{i,0}, t_{i,2})$, respectively $(v_{i_0}, t_{i,2})$, and thus satisfies $sup(t_{i,0})\not=sup(t_{i,2})$.
This implies either $sup(t_{i,0})=sup(t_{i,1})$, and thus $sig(v_{i_0})\in \{\nop,\free,\used\}$, and $sig(v_{i_1})\in \{\inp, \out, \swap\}$, or $sup(t_{i,1})=sup(t_{i,2})$, and thus $sig(v_{i_0})\in \{\inp,\out, \swap\} $, and $sig(v_{i_1})\in \{\nop,\free,\used\}$.

\medskip
We show that this implies $\K\cap\{f_{i_0}\edge{v_{i_0}}f_{i_0,1}, f_{i_1}\edge{v_{i_1}}f_{i_1,1}\}\not=\emptyset$:
Assume, for a contradiction, that the opposite is true.
Since $\vert \K\vert \leq \kappa$, there is a $j\in \{0,\dots, \kappa\}$, such that both $f_{i_0}\edge{a_j}f_{i_0,1}$, and $f_{i_1}\edge{a_j}f_{i_1,1}$ are present in $B_G$.
If $sig(v_{i_0})\in \{\nop,\free,\used\}$, and $sig(v_{i_1})\in \{\inp, \out, \swap\}$, we have $sup(f_{i_0,0})=sup(f_{i_0,1})$, and $sup(f_{i_1,0}) \not= sup(f_{i_1,1})$, which simultaneously implies $sig(a_j)\in \{\nop,\free,\used\}$, and $sig(a_j)\in \{\inp,\out,\swap\}$, which is a contradiction.
Hence, we have $\K\cap\{f_{i_0}\edge{v_{i_0}}f_{i_0,1}, f_{i_1}\edge{v_{i_1}}f_{i_1,1}\}\not=\emptyset$.
Analogously, if $sig(v_{i_0})\in \{\inp,\out, \swap\} $, and $sig(v_{i_1})\in \{\nop,\free,\used\}$, then we get $\K\cap\{f_{i_0}\edge{v_{i_0}}f_{i_0,1}, f_{i_1}\edge{v_{i_1}}f_{i_1,1}\}\not=\emptyset$ as well.

\medskip
By the arbitrariness of $i$, this implies that $\Ss$ is a vertex cover of $G$.
\end{proof}

For the converse direction, we have to show that the existence of a suitable vertex cover implies that $A_G$ has an implementable edge-removal.
So let $\mS=\{v_{\ell_0},\dots, v_{\ell_{\lambda-1}}\}$ be a vertex cover of $G$, and let $B_G$ be the TS that originates from $A_G$ be removing the edge $f_{\ell_i,0}\edge{v_{\ell_i}}f_{\ell_i,1}$ for all $i\in \{0,\dots, \lambda-1\}$, and nothing else.
One easily verifies that $B_G$ is a well-defined reachable edge-removal of $A_G$ that satisfies $\vert \K\vert \leq \kappa$.

\medskip
In the following, we will show that $B_G$ has the  $\tau$-SSP as well as the $\tau$-ESSP, by presenting regions that altogether solve the individual separation atoms of $B_G$.
Let $I=\{t_{0,0},t_{1,0},\dots, t_{m-1,0}\}\cup\{f_{0,0}, f_{1,0},\dots, f_{n-1,0}\}$ be the set of the initial states of the gadgets of $B_G$, and $E=E(A_G)\setminus Y$ be the set of events in those gadgets (i.e., the vertices of $G$ and the $a_j$'s).

For the sake of simplicity, we often restrict the presentation of a region $R=(sup, sig)$ to the states of $I$ and the events of $E$.
This is justified, since we can easily extend $R$ to $B_G$, when this is possible. Indeed,
choosing as we want $sup(\iota)$, for any $s\in I$ and $\iota\edge{e}s$ (then $e\in Y$ and is unique), we may choose $sig(e)=\nop$ if $sup(\iota)=sup(s)$ and $sig(e)=\swap$ otherwise;
for any $i\in\{0,\ldots,m-1\}$, from the support of $t_{i,0}$, the signatures of $v_{i_0}$ and $v_{i_1}$ determine the supports of $t_{i,1}$ and $t_{i,2}$ (if the functions of the signatures are defined for these supports);
for any $i\in\{0,\ldots,n-1\}$, if $f_{i,0}\edge{e}f_{i,1}$ in $B_G$, from the support of $f_{i,0}$, the signature of $e$ determines the support of $f_{i,1}$ (if defined);
however, if we also have $f_{i,0}\edge{e'}f_{i,1}$ with $e'\neq e$ in $B_G$, it is necessary that the signature of $e'$ is  "compatible" with the pair $sup(f_{i,0})$ and $sup(f_{i,1})$ (to go from $0$ to $0$, $sig(e')$ may be \nop{} or \free; to go from $0$ to $1$, $sig(e')$ may be \swap{} or \out; to go from $1$ to $0$, sig(e') may be \swap{} or \inp; to go from $1$ to $1$, $sig(e')$ may be \nop{} or \used).

\medskip
In fact, the same is true if we choose as we want $sup(\iota)$ as well as $sup(s)$ when one chooses as we want some $s$ in each $T_i$ ($i\in\{0,\ldots,m-1\}$) and each $F_j$  ($j\in\{0,\ldots,n-1\}$), and (coherently) signatures $sig(e)$ when $e\in E$.
This is due to the fact that, for each partial function in $\tau$, its inverse is also a partial function (this would not be true if we allowed \set{} or \res). Hence we may proceed backward as well as forward in the exploration of $B_G$.

\begin{fact}\label{fact:edge_removal_model_implies_ssp}
The TS $B_G$ has the $\tau$-SSP.
\end{fact}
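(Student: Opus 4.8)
The plan is to establish $B_G$'s $\tau$-SSP by producing, for every SSP atom, an explicit $\tau$-region that solves it, leaning on the extension argument stated just before this fact: it suffices to fix the signatures of the events in $E$ (the vertices $v_\ell$ and the $a_j$), to fix one support value in each path $T_i$ and each gadget $F_j$, and to check consistency, since the reachability events in $Y$ can always be completed with $\nop$ or $\swap$. I would first note that, under the $E$-edges alone, $B_G$ splits into the connected components $\{t_{i,0},t_{i,1},t_{i,2}\}$ (one per edge of $G$) and $\{f_{j,0},f_{j,1}\}$ (one per vertex of $G$), and classify the atoms into three kinds: those involving $\iota$, those whose two states lie in different components, and those whose two states lie in the same component.

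For the first two (easy) kinds I would use only signatures in $\{\nop,\swap\}$. Giving every event of $E$ the signature $\nop$ makes every $E$-edge support-preserving, so each component may be assigned an \emph{arbitrary} uniform support independently; this works precisely because $\nop$ is total on $\{0,1\}$, so a shared event occurring in several components is unconstrained across them. Hence, for each component $C$, putting $C$ at support $1$ and all other states at support $0$ (completing the $Y$-edges by $\nop$/$\swap$) gives a region separating every state of $C$ from every state outside $C$; this disposes of all inter-component atoms, and the degenerate choice $C=\{\iota\}$ separates $\iota$ from everything. For \emph{adjacent} states inside a component I would use the single ``all-swap'' region $sig(e)=\swap$ for all $e\in E$: then every gadget flips, $sup(f_{j,0})\neq sup(f_{j,1})$, and every path alternates, $sup(t_{i,0})=sup(t_{i,2})\neq sup(t_{i,1})$, which is globally consistent and simultaneously solves all atoms $(f_{j,0},f_{j,1})$, $(t_{i,0},t_{i,1})$ and $(t_{i,1},t_{i,2})$.

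The only remaining atoms are the $(t_{i,0},t_{i,2})$, and solving them is the crux where the vertex cover is essential. To separate $t_{i,0}$ from $t_{i,2}$ I must flip the support across exactly one of the two path-edges, i.e.\ give exactly one of $v_{i_0},v_{i_1}$ the signature $\swap$ and the other $\nop$. The obstruction is that each vertex $v_j$ also labels the gadget edge $f_{j,0}\edge{v_j}f_{j,1}$ running parallel to the $\kappa+1$ edges $a_0,\dots,a_\kappa$; since one signature $sig(a_j)$ must fit every gadget, setting $sig(v_j)=\swap$ while $v_j$'s gadget edge is still present would force that gadget to flip, force all $a_j$ into ``swap mode'', and thereby force every gadget to flip, contradicting any surviving $v$-edge meant to be support-preserving. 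This is exactly what the construction of $B_G$ avoids: as $\mS$ is a vertex cover, for each edge $M_i$ at least one endpoint, say $v_{i_0}$, lies in $\mS$, so its gadget edge $f_{i_0,0}\edge{v_{i_0}}f_{i_0,1}$ was removed and $v_{i_0}$ no longer occurs in any gadget. I would therefore take the region with $sig(v_{i_0})=\swap$, $sig(e)=\nop$ for every other $e\in E$, and all gadgets held at constant support: this is consistent, because no surviving gadget edge carries a $\swap$, and along $T_i$ it yields $sup(t_{i,0})\neq sup(t_{i,2})$.

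Finally, I would remark that if $\inp,\out\notin\tau$ these regions still apply verbatim, since they use only $\nop$ and $\swap$, and that the argument never invokes $\set$ or $\res$ (in line with Lemma~\ref{lem:abab}). I expect the main obstacle to be the bookkeeping of global consistency for the per-path regions of the third kind: one must verify that forcing a single $v_{i_0}$ to $\swap$ clashes neither with its (up to three) further occurrences in other paths nor with the chosen constant supports of the gadgets. This check is routine precisely because $\nop$ and $\swap$ are total on $\{0,1\}$ and the gadget edge of $v_{i_0}$ has been removed, so the only propagation of the $\swap$ is along paths, where it is harmless.
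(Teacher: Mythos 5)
Your proof is correct and follows essentially the same route as the paper: a region separating $\iota$, constant-support all-$\nop$ regions separating the components $S(T_i)$ and $S(F_j)$ from the rest, the all-$\swap$ region for adjacent states within a component, and a region giving exactly one endpoint of $M_i$ the signature $\swap$ to solve $(t_{i,0},t_{i,2})$. In fact you are slightly more careful than the paper's own write-up at the last step, where you explicitly pick the endpoint lying in the vertex cover (whose gadget edge has been removed) to carry the $\swap$, which is exactly what is needed for consistency with the surviving parallel $a_j$-edges.
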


\begin{proof}
-Let $sup_0(\iota)=0$, $sup_0(s)=1$ when $s\in S(A_G)\setminus\{\iota\}$, $sig_0(e)=\swap$ for all $e\in Y$ and $sig_0(e)=\nop$ when $e\in E(A_G)\setminus Y$, then $R_0=(sup_0, sig_0)$ is a region that solves $\iota$.

\medskip
\noindent Let $i\in \{0,\dots, m-1\}$ be arbitrary but fixed. \smallskip\\
- For all $s\in I$, let $sup_1(s)=1$ if $s=t_{i,0}$, $0$ otherwise, and let $sig_1(e)=\nop$ for all $e\in E$.
This leads to a region $R_1=(sup_1, sig_1)$ that solves $(s,s')$ for all $s\in S(T_i)$, and all $s'\in S(A_G)\setminus S(T_i)$.
\smallskip \\
-If $sup_2(s)=0$ for all $s\in I$ and $sig_2(e)=\swap$ for all $e\in E$,
then $R_2=(sup_2, sig_2)$ solves $(t_{i,0}, t_{i,1})$ and $(t_{i,1}, t_{i,2})$.
\smallskip \\
-If $sup_3(s)=0$ for all $s\in I$ and, for all $e\in E$, if $e=v_{i_0}$ then $sig_3(e)=\swap$, $\nop$ otherwise,
then $R_3=(sup_3, sig_3)$ solves $(t_{i,0}, t_{i,2})$.
\smallskip\\
By the arbitrariness of $i$, this shows, for all $s\in \bigcup_{i=0}^{m-1}S(T_i)$, that $s$ is solvable.

\medskip
Similarly, one shows the solvability of each $s\in \bigcup_{i=0}^{n-1}S(F_i)$.
The fact follows.
\end{proof}

In the rest of the subsection, we shall assume that $\omega\subseteq \{\inp,\out, \free,\used\}$ with $\omega\cap\{\inp,\out\}\not=\emptyset$, and $\tau=\{\nop,\swap\}\cup\omega$.

\begin{fact}\label{fact:edge_removal_model_implies_essp}
The TS $B_G$ has the $\tau$-ESSP.
\end{fact}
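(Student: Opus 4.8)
The plan is to verify $B_G$'s $\tau$-ESSP by producing, for each ESSA $(e,s)$, a $\tau$-region $R=(sup,sig)$ that solves it. First I would reduce to the case $\inp\in\tau$: since here $\omega\cap\{\inp,\out\}\neq\emptyset$, if $\inp\notin\tau$ then $\out\in\tau$, and, exactly as in Lemma~\ref{lem:vc_implies_essp}, every region I build can be replaced by its complement (swapping $sup$ with $1-sup$ and $\inp$ with $\out$, keeping $\nop$ and $\swap$ fixed). So it will suffice, for each atom $(e,s)$, to find a region with $sig(e)=\inp$ and $sup(s)=0$, because $\inp$ is undefined on $0$. Throughout I would rely on the extension flexibility noted just before Fact~\ref{fact:edge_removal_model_implies_ssp}: it is enough to fix $sup(\iota)$, one support value per block $T_i$ and $F_j$, and coherent signatures on the block events $E$; the reachability events in $Y$ then receive $\nop$ or $\swap$ and the region extends uniquely to all of $B_G$.

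I would dispose of the reachability events first. If $e\in Y$, then $e$ fires only at $\iota$, so I set $sup(\iota)=1$, $sig(e)=\inp$, give $\swap$ to the other edges leaving $\iota$ and $\nop$ to all block events; then every state other than $\iota$ has support $0$, and all atoms $(e,s)$ with $s\neq\iota$ are solved at once.

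The hard part will be the events $a_j$, because each $a_j$ labels an edge $f_{\ell,0}\edge{a_j}f_{\ell,1}$ in \emph{every} gadget simultaneously. Demanding $sig(a_j)=\inp$ forces $sup(f_{\ell,0})=1$ and $sup(f_{\ell,1})=0$ for all $\ell$, and forces every surviving vertex event and every other $a_{j'}$ to carry a signature in $\{\swap,\inp\}$; I would pick $\swap$. This couples all gadgets, and through the shared vertices it propagates into the paths, so no single region can send all three states of a path $T_i$ to support $0$ once a $\swap$-labelled vertex sits on it. I would overcome this by using a \emph{separate region per target state} and by exploiting that $\mS$ covers $G$: each edge $M_i$ has an endpoint in $\mS$, so at least one of $v_{i_0},v_{i_1}$ has had its gadget edge removed and is therefore free to take signature $\nop$ or $\swap$. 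Choosing $sup(t_{i,0})\in\{0,1\}$ and this free signature appropriately then drives any prescribed $t_{i,k}$ to support $0$; meanwhile the states $f_{\ell,1}$ already have support $0$, and I set $sup(\iota)=0$. This solves $(a_j,s)$ for every relevant $s$.

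Finally, for a vertex event $v$ I would set $sig(v)=\inp$; since $G$ is $3$-bounded, $v$ touches at most three paths, plus its own gadget when $v\notin\mS$. If $v\in\mS$, the gadget edge is absent, so $v$ lives only on its paths; I set all gadget events to $\nop$ (all $f$-states flat at $0$), put $sup(\iota)=0$, and, for each target state, choose the support of the relevant path's initial state together with the free or companion signatures so that the source of each $v$-edge gets support $1$ and the target gets support $0$, keeping unrelated blocks flat. If $v\notin\mS$, then $v$ also labels $f_{v,0}\edge{v}f_{v,1}$, forcing $sup(f_{v,0})=1$ and $\swap$ on the $a_j$'s of $F_v$; consequently every other gadget satisfies $sup(f_{\ell,1})=1-sup(f_{\ell,0})$, and I separate $v$ from the two states of each such gadget by the two choices $sup(f_{\ell,0})\in\{0,1\}$ in different regions. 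Here I use that every companion of a non-cover $v$ lies in $\mS$ (as $\mS$ covers the edges through $v$), which keeps the $v$-carrying paths manageable, while paths avoiding $v$ are treated exactly as in the $a_j$ case. In each case finitely many regions cover all atoms $(v,s)$, so assembling all these regions yields the $\tau$-ESSP and completes the reduction.
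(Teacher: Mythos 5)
Your overall architecture coincides with the paper's: reduce to $\inp\in\tau$ by complementing regions, dispose of the reachability events in $Y$ with a region putting $\iota$ at support $1$ and everything else at $0$, solve the $a_j$-atoms with a pair of regions obtained by flipping the supports of the path-initial states, and split the vertex events according to membership in the cover. That matches the paper's regions $R_0$--$R_2$ and, for $v\notin\mS$, its regions $R_5$--$R_8$.

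There is, however, one concrete subcase where your construction for $v\in\mS$ fails as stated. You stipulate that in every region used for such a $v$ all gadget events get $\nop$, so that all $f$-states are flat at support $0$. Now take a path $T_i$ in which $v=v_{i_1}$ is the \emph{second} event, $t_{i,1}\edge{v}t_{i,2}$, and whose companion $v_{i_0}$ does \emph{not} lie in the cover (this configuration certainly occurs). The atom $(v,t_{i,0})$ must be solved: with $sig(v)=\inp$ you need $sup(t_{i,1})=1$ and $sup(t_{i,0})=0$, hence a support-changing signature ($\swap$ or $\out$) on the companion $v_{i_0}$. But $v_{i_0}\notin\mS$ means the edge $f_{i_0,0}\edge{v_{i_0}}f_{i_0,1}$ is still present in $B_G$, so a support-changing signature on $v_{i_0}$ forces $sup(f_{i_0,0})\neq sup(f_{i_0,1})$, contradicting your flat-gadget stipulation; the companion's signature is not ``free'' in the way your plan assumes (and switching to $\used$ or $\free$ for $v$ does not help, since the flip on $v_{i_0}$ is still needed to separate $t_{i,0}$ from $t_{i,1}$). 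The paper's region $R_4$ avoids this precisely by letting the gadgets flip: it assigns $\swap$ to \emph{every} event other than $v$ (including all the $a_j$'s and all companions), puts support $1$ exactly on those gadget-initial states from which $v$ occurs, and uses the absence of $f_{v,0}\edge{v}f_{v,1}$ (i.e., $v\in\mS$) for coherency in $F_v$. With that one repair --- replacing $\nop$ by $\swap$ on the gadget events in the offending region --- your argument goes through and is otherwise the paper's proof.
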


\begin{proof}
We shall assume that $\inp\in\tau$ (the case where $\omega=\{\out\}$ is symmetrical).\\
-If $sup_0(\iota)=1$, $sig_0(e)=\inp$ for all $e\in Y$ and $sig_0(e)=\nop$ for all $E(A_G)\setminus Y$, then $R_0=(sup_0, sig_0)$ solves $e$ for all $e\in Y$.

\eject 
We proceed with the $a_j$'s:
Let $j\in \{0,\dots, \kappa\}$ be arbitrary but fixed.\\
-If $sup_1(\iota)=0$, $sup_1(s)=1$ for all $s\in I$, $sig_1(a_j)=\inp$ and $sig_1(e)=\swap$ if $e\in E(A_G)\setminus\{a_j\}$,
then $R_1=(sup_1,sig_1)$ solves $(a_j,s)$ for all $s\in \bigcup_{i=0}^{n-1}\{f_{i,1}\}\cup \bigcup_{i=0}^{m-1}\{t_{i,1}\}\cup\{\iota\}$.
\smallskip\\
-If $sup_2(s)=1$ for all $s\in \{f_{0,0},\dots, f_{n-1,0}\}$, $sup_2(s)=0$ for all $s\in \{t_{0,0},\dots, t_{m-1,0}\}$,  $sig_1(a_j)=\inp$ and $sig_1(e)=\swap$ if $e\in E\setminus\{a_j\}$, then $R_2=(sup_2,sig_2)$ solves $(a_j,s)$ for all $s\in \bigcup_{i=0}^{m-1}\{t_{i,0}, t_{i,2}\}$.
\smallskip\\
Since $j$ was arbitrary, the solvability of the $a_j$'s follows.

\medskip
We proceed with the $v_i$'s:
Let $i\in \{0,\dots, m-1\}$ be arbitrary but fixed, and let $L$ select the edges (at most three) that contain $v_i$, that is, $v_i\in \mathfrak{e_\ell}$ when $\ell\in L$.

\medskip
We start with the case where $v_i\in \Ss$, implying that $f_{i,0}\edge{v_i}f_{i,1}\not\in B_G$:
\smallskip\\
-For all $s\in I$, let $sup_3(s)=1$ if $s\in \bigcup_{\ell\in L}\{t_{\ell,0}\}$ and $0$ otherwise; for all $e\in E$, then $sig_3(e)=\inp$ if $e=v_i$ and $\nop$ otherwise.
Then  $R_3=(sup_3,sig_3)$ solves $(v_i,s)$ for all $s\in S\setminus \bigcup_{\ell\in L}S(T_{\ell})$.
\smallskip\\
Let $\ell\in L$.
The region $R_3$ also solves $(v_i,t_{\ell,2})$ and, if $t_{\ell,0}\edge{v_i}$, then it solves also $(v_i,t_{\ell,1})$.
With respect to $T_{\ell}$, it remains to address the case $t_{\ell,1}\edge{v_i}$, which requires to solve $(v_i,t_{\ell,0})$.
The following region $R_4=(sup_4, sig_4)$ does it:
For all $s\in I$, if $s\edge{v_i}$ (implying that $v_i$ is the ``first'' event in the corresponding edge of the graph), then $sup_4(s)=1$, otherwise $sup_4(s)=0$;
for all $e\in E$, if $e=v_i$, then $sig_4(e)=\inp$, otherwise $sig_4(e)=\swap$ (note that, here, it is important that $f_{i,0}\edge{v_i}f_{i,1}\not\in B_G$, otherwise we would lose the coherency in $F_{v_i}$).
Since $i$ and $\ell$ were arbitrary, this shows that solvability of vertex events that do belong to the vertex cover.

\medskip
It remains to consider the case where $v_i\not\in \Ss$, implying that $f_{i,0}\edge{v_i}f_{i,1}\in B_G$:
\smallskip\\
-Let $sup_5(s)=1$ when $s\in I$ and, for all $e\in E$, $sig_5(e)=\inp$ when $e\not\in \Ss$, \nop{} otherwise.
Then $R_5=(sup_5,sig_5)$ solves $(v_i,s)$ for all $s\in \bigcup_{i=0}^{n-1}\{f_{i,1}\}$.
(Notice that never two events with an \inp-signature occur one after the other, since $\Ss$ is a vertex cover.)
\smallskip\\
-If $sup_6(s)=1$ for all $s\in \{f_{i,0}\}\cup\bigcup_{i=0}^{m-1}\{t_{i,0}\}$, $sup_6(s)=0$ for all $s\in\bigcup_{j=0}^{n-1}\{f_{j,0}\}\setminus \{f_{i,0}\}$, and, for all $e\in E$, $sig_6(e)=\inp$ if $e=v_i$, $sig_6(e)=\swap$ if $e\in  \{a_0,\dots, a_{\kappa}\}\cup\U\setminus \Ss$, $sig_6(e)=\nop$ otherwise, then $R_6=(sup_6,sig_6)$ solves $(v_i,s)$ for all $s\in \bigcup_{j=0}^{n-1}\{f_{j,0}\}\setminus \{f_{i,0}\}$.

\medskip
In order to solve $v_i$ within $T_j$ when $j\not\in L$, it suffices to define a region that is complementary to $R_6$ within the $T_j$'s.
The following region $R_7=(sup_7, sig_7)$ accomplishes this:
for all $s\in I$, if $s\in \{ f_{i,0}\}\cup\bigcup_{l\in L}\{t_{l,0}\}$, then $sup_7(s)=1$, otherwise $sup_7(s)=0$;
for all $e\in E$, $sig(e)=\inp$ if $e=v_i$, $sig_7(e)=\swap$ if $e\in \{a_0,\dots, a_{\kappa}\}\cup \U\setminus \Ss$, otherwise $sig_7(e)=\nop$.

\medskip
It remains to solve $v_i$ within $T_j$ when $j\in L$.
Let $j\in L$ be arbitrary but fixed.
The ESSA $(v_i, t_{j,2})$ is solved by $R_5$, and if $t_{j,0}\edge{v_i}$, then $R_5$ solves $(v_i, t_{j,1})$ as well.
Hence, it remains to consider the case $t_{j,1}\edge{v_i}$, which implies $v_{{j}_0}\in \Ss$, and requires to solve $(v_i, t_{j,0})$.
The following region $R_8=(sup_8, sig_8)$ accomplishes this:
for all $s\in I$, if $s\in \{f_{i,0}\}\cup\bigcup_{l\in L}\{t_{l,0}\}\setminus\{t_{j,0}\}$, then $sup_8(s)=1$, otherwise $sup_8(s)=0$;
for all $e\in E$, $sig_8(e)=\inp$ if $e=v_i$, $sig_8(e)=\swap$ if $e\in\{v_{{j}_0}\}\cup\{a_0,\dots, a_{\kappa}\}\cup\U\setminus \Ss$, $sig_8(e)=\nop$ otherwise.
\smallskip\\
Since $i$ and $j$ were arbitrary, this completes the solvability of the vertex events.
The fact follows.
\end{proof}

Altogether, we get the following lemma:
\begin{lemma}\label{lem:vc_implies_edge_removal}
If there is a $\lambda$-VC for $G$, then there is an edge-removal $B_G$ of $A_G$ that satisfies $\vert \K\vert \leq \kappa$,
and has the $\tau$-SSP as well as the $\tau$-ESSP.
\end{lemma}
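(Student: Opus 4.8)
The plan is to reuse the edge removal $B_G$ that has already been exhibited from the given cover $\mS=\{v_{\ell_0},\dots,v_{\ell_{\lambda-1}}\}$, namely the TS obtained from $A_G$ by deleting precisely the $\lambda$ gadget edges $f_{\ell_i,0}\edge{v_{\ell_i}}f_{\ell_i,1}$, and to show that this single TS simultaneously carries both separation properties within the edge budget. Since $\kappa=\lambda$ and exactly one edge is removed per cover vertex, the bound $\vert\K\vert=\lambda\le\kappa$ is immediate; so the substance reduces to establishing the $\tau$-SSP and the $\tau$-ESSP for this $B_G$.

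First I would confirm that $B_G$ is a legitimate edge removal in the sense of Definition~\ref{def:edge_removal}: every state must remain reachable from $\iota$ and every event must still occur. Reachability of each $t_{i,j}$ and each $f_{\ell,0}$ is untouched, because the $Y$-edges out of $\iota$ are never deleted; and $f_{\ell_i,1}$ stays reachable after the deletion, since the gadget $F_{\ell_i}$ retains its $\kappa+1\ge 1$ parallel $a_j$-edges. Each deleted event $v_{\ell_i}$ still occurs, because a cover vertex is incident to some graph edge $M_j$ and hence still labels an edge inside the path $T_j$ (here one may assume, without loss of generality, that $\mS$ contains no isolated vertex, as dropping such a vertex only shrinks the cover). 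This is the only bookkeeping point that requires a moment's care.

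With $B_G$ fixed as a valid edge removal meeting the budget, the two remaining properties are exactly the content of the two preceding facts. Fact~\ref{fact:edge_removal_model_implies_ssp} supplies the $\tau$-SSP, its region constructions using only $\nop$- and $\swap$-signatures and therefore being valid for every $\omega\subseteq\{\inp,\out,\free,\used\}$. Fact~\ref{fact:edge_removal_model_implies_essp} supplies the $\tau$-ESSP under the standing hypothesis $\omega\cap\{\inp,\out\}\ne\emptyset$ in force throughout this subsection. Combining the two yields that $B_G$ has both the $\tau$-SSP and the $\tau$-ESSP, completing the converse direction for the realization variant of Theorem~\ref{the:edge_removal}. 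I expect no genuine obstacle at this stage: the real work, namely exhibiting a separating region for every (E)SSP atom of $B_G$, has already been discharged inside the two facts, and the lemma is merely the step that assembles them. The only thing to watch is that the ESSP claim is asserted solely when $\omega\cap\{\inp,\out\}\ne\emptyset$; for the pure $\{\used,\free\}$ case the bi-directed construction of the label-splitting section is needed instead, so the combined conclusion must be read within the correct range of types.
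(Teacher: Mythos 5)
Your proposal matches the paper's own argument exactly: the paper defines the same $B_G$ by deleting the edges $f_{\ell_i,0}\edge{v_{\ell_i}}f_{\ell_i,1}$ for the cover vertices, notes that $\vert\K\vert\leq\kappa$ and that $B_G$ is a well-defined edge removal, and then obtains the $\tau$-SSP and $\tau$-ESSP directly from Fact~\ref{fact:edge_removal_model_implies_ssp} and Fact~\ref{fact:edge_removal_model_implies_essp}. Your additional care about validity (reachability of $f_{\ell_i,1}$ via the parallel $a_j$-edges, and the isolated-vertex caveat ensuring each deleted-from-$F$ event still occurs in some $T_j$) is a correct elaboration of what the paper dismisses with ``one easily verifies,'' and your remark on the standing hypothesis $\omega\cap\{\inp,\out\}\neq\emptyset$ for the ESSP part is consistent with the paper's scoping.
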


\subsection{The proof of Theorem~\ref{the:edge_removal}(\ref{the:edge_removal_langsim_real}) for the types without $\inp$
and $\out$}\label{sec:edge_removal_save}%

In this section, we assume that $\emptyset\neq\omega\subseteq \{\free,\used\}$,
and more exactly $\used\in\omega$ (the case $\omega=\{\free\}$ may be handled symmetrically).

\medskip
Note that, like in section~\ref{sec:split-used}, if $A$ is a TS such that $s'\edge{e}s\edge{\neg e}$, then the ESSA $(e,s)$ cannot be solved by a $\tau$-region.
The current reduction then extends the former one by simply adding the missing reverse edges.
More exactly, we first define $\kappa=2\lambda$.
Then, for every $i\in \{0,\dots, m-1\}$, and for every $j\in \{0,\dots, n-1\}$, the new TS $\overline{A}_G$ has the following gadgets $\overline{T}_i$, and $\overline{F}_j$: (where $s\edge{e}s'$ implies $s'\edge{e}s$): \vspace*{-3mm}

\begin{center}
\begin{tikzpicture}[new set = import nodes]
\begin{scope}[nodes={set=import nodes}]
		\node (T) at (-0.75,0) {$\overline{T}_i=$};
		\foreach \i in {0,...,2} { \coordinate (\i) at (\i*2cm, 0) ;}
		\foreach \i in {0,...,2} { \node (\i) at (\i) {\nscale{$t_{i,\i}$}};}
\graph {(import nodes);
			0 <->["\escale{$v_{i_0}$}"]1<->["\escale{$v_{i_1}$}"]2;
		};
\end{scope}

\begin{scope}[xshift=7.5cm, nodes={set=import nodes}]
	\node (F) at (-0.75, 0) {$\overline{F}_j=$};
	\coordinate (0) at (0,0);
	\coordinate (1) at (3,0);
	\node (0) at (0) {\nscale{$f_{j,0}$}};
	\node (1) at (1) {\nscale{$f_{j,1}$}};
	\node (dots) at (1.5,0.4) {\nscale{$\vdots$}};
	\graph {
	(import nodes);
			
			0<->[ swap, bend right=90,  "\escale{$v_j$}"]1;
			0<->[swap, bend right =35, "\escale{$a_0$}"]1;
			0<->[ bend right  =5, swap,"\escale{$a_1$}"]1;
			0<->[swap, bend left =35,swap , "\escale{$a_{\lambda-1}$}"]1;
			0<->[swap, bend left =90, swap, "\escale{$a_\lambda$}"]1;
			};
\end{scope}
\end{tikzpicture}\vspace*{-1mm}
\end{center}

Again, the TS $\overline{A}_G$ has the initial state $\iota$; for all $i\in \{0,\dots, m-1\}$ and all $j\in \{0,1,2\}$, it has the edges $\iota\FBedge{y_i^j}t_{i,j}$;
finally, for all $\ell\in \{0,\dots, n-1\}$, it has the edges $\iota\FBedge{z_\ell}f_{\ell,0}$.
The result is a TS $\overline{A}_G$, where the $y_i^j$-, and $z_\ell$-labelled edges serve to ensure reachability even if we delete some $v_j$'s.
For the sake of simplicity, we summarize these events by $Y=\bigcup_{i=0}^{m-1}\{y_i^0,y_i^1,y_i^2\}\cup\{z_0,\dots, z_{n-1}\}$.

\begin{lemma}\label{lem:edge_removal_save_essp_implies_model}
If there is an edge-removal $\overline{B}_G$ of $\overline{A}_G$ that satisfies $\vert \K\vert \leq \kappa$, and has the $\tau$-ESSP, then there is a $\lambda$-VC for $G=(\U,M)$.
\end{lemma}
\begin{proof}
Let $\overline{B}_G$ be an edge-removal of $\overline{A}_G$ that satisfies $\vert \K\vert \leq \kappa$, and has the $\tau$-ESSP, and let $\Ss=\{v\in \U\mid s\edge{v}s'\in \K\}$ select the vertex events that label an edge of $\overline{A}_G$ that is removed in $\overline{B}_G$.
Note that $s\edge{e}s'\in \K$ implies $s'\edge{e}s\in \K$, since otherwise we would have an unsolvable ESSA $(e,s')$ in $\overline{B}_G$, which is a contradiction.
This particularly implies $\vert \Ss\vert \leq \frac{\kappa}{2}=\lambda$.

\medskip
We argue that $\Ss$ defines a vertex cover for $G$:\\
Let $i\in \{0,\dots, m-1\}$ be arbitrary, but fixed.
Similarly to the proof of Lemma~\ref{lem:edge_removal_implies_model} one argues that $\Ss\cap\{v_{i_0}, v_{i_1}\}=\emptyset$ implies a contradiction to the solvability of $(v_{i_0}, t_{i,2})$.
Hence, by the arbitrariness of $i$, we have $\Ss\cap \mathfrak{e}_i\not=\emptyset$ for all $i\in \{0,\dots, m-1\}$.
This proves the claim.
\end{proof}

\begin{lemma}\label{lem:edge_removal_save_model_implies_essp}
If there a $\lambda$-VC for $G$, then there is an edge-removal $\overline{B}_G$ of $\overline{A}_G$ that satisfies
$\vert \K\vert \leq \kappa$ and has both the $\tau$-SSP and the $\tau$-ESSP.
\end{lemma}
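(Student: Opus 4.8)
The plan is to mirror the construction and region calculus of Fact~\ref{fact:edge_removal_model_implies_essp}, but adapted to the backward-extended $A_G$ and to the interaction $\used$ (assuming w.l.o.g.\ that $\used\in\tau$; the case $\free\in\tau$ is symmetric, via complementation of supports). Given a $\lambda$-vertex cover $\mS=\{v_{\ell_0},\dots,v_{\ell_{\lambda-1}}\}$ of $G$, I would define $B_G$ by deleting, for every $v_{\ell_i}\in\mS$, both directed copies of the $v_{\ell_i}$-labelled edge of the gadget $F_{\ell_i}$, and nothing else. This removes exactly $2\lambda=\kappa$ directed edges, so $\vert\K\vert\le\kappa$; since the $Y$-edges are untouched, every state stays reachable, and every event of $\mS$ still occurs inside the path $T_\ell$ of some edge it covers, so $B_G$ is a valid TS. For the $\tau$-SSP I would simply reuse the regions of Fact~\ref{fact:edge_removal_model_implies_ssp}: they employ only the signatures $\nop$ and $\swap$, which are compatible with backward edges (exactly as observed for Lemma~\ref{lem:vc_implies_essp_2}), hence they carry over verbatim to $B_G$.

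The bulk of the work is the $\tau$-ESSP. For an atom $(e,s)$ the goal is a region with $sig(e)=\used$ and $sup(s)=0$, so that $\used$ is undefined on $sup(s)$. The decisive difference from the $\inp$-based Fact~\ref{fact:edge_removal_model_implies_essp} is rigidity: on a backward edge carrying $\used$ both endpoints are forced to support $1$ (whereas $\inp$ would map $1\mapsto0$, which is impossible on a bidirectional edge). Consequently every event carrying $\used$ pins all its incident states to $1$, and the far states that must be separated have to be driven to $0$ by interposed $\swap$-edges. The events $e\in Y$ each occur on a single edge incident to $\iota$ and are handled by dedicated regions, as sketched at the start of Fact~\ref{fact:edge_removal_model_implies_essp}. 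The auxiliary events $a_j$ are easiest: since $a_j$ occurs between $f_{k,0}$ and $f_{k,1}$ in \emph{every} gadget $F_k$, the only atoms $(a_j,s)$ have $s\in\{\iota\}\cup\bigcup_i S(T_i)$; choosing $sig(a_j)=\used$, all $f$-states to $1$ (with $\nop$ on the remaining $F$-events) and everything else to $0$ (with $\swap$ on the $z_k$-edges) solves all of them simultaneously.

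The vertex events $v_i$ are the core, split along $v_i\in\mS$ versus $v_i\notin\mS$. If $v_i\in\mS$, its edge in $F_i$ has been deleted, so $v_i$ survives only inside the paths $T_\ell$ of its at most three covered edges, with distinct companions; I would solve the atoms $(v_i,t_{\ell,\cdot})$ with $sig(v_i)=\used$ (pinning the two $v_i$-endpoints to $1$) and $\swap$ on the companion edge (pushing the remaining endpoint of $T_\ell$ to $0$), and add one complementary region that flattens all $F$-gadgets and every $T_{\ell'}$ disjoint from $v_i$ to support $0$, catching the states left at $1$ by the first. If $v_i\notin\mS$, then crucially every companion lies in $\mS$, hence the companion edges may carry $\swap$ \emph{without} clashing with the shared $a_k$ inside their $F$-gadgets (those gadgets have lost their vertex edge); setting $sig(v_i)=\used$ pins $f_{i,0},f_{i,1}$ and the relevant $T_\ell$-endpoints to $1$, all $a_k=\nop$ keeps every other $F_k$ uniformly at $0$, and the companions' $\swap$ drive the far $T_\ell$-states to $0$.

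I expect the main obstacle to be the globally shared events. A single choice of $sig(a_k)$ is imposed simultaneously on all $F$-gadgets, and a companion vertex occurs in several paths at once; under the equal-support rigidity of $\used$ this is exactly what makes the \emph{unreduced} $A_G$ unsolvable (a forced $\swap$ on one vertex edge and $\nop$ on its companion's edge would contradict on a common $a_k$), and it is precisely the deletion of the cover vertices' $F$-edges that breaks this clash --- the same phenomenon exploited in the forward direction of Lemma~\ref{lem:edge_removal_save_essp_implies_model}. The remaining technical effort is the bookkeeping of complementary regions: for every path $T_j$ not containing $v_i$ but carrying a $\swap$-labelled companion, one considers the two choices $sup(t_{j,0})\in\{0,1\}$, whose induced supports along $T_j$ are complementary, so that each required state is sent to $0$ in at least one region. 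Assembling these regions yields the $\tau$-ESSP, and together with the $\tau$-SSP this completes the proof.
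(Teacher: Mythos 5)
Your proposal is correct and follows essentially the same route as the paper's proof: the same edge-removal (both directions of the cover vertices' $F$-edges, giving $\vert\K\vert=2\lambda=\kappa$), the same transfer of the $\{\nop,\swap\}$-regions from Fact~\ref{fact:edge_removal_model_implies_ssp} for the $\tau$-SSP, and the same $\used$-based region strategy for the $\tau$-ESSP with the identical case split on whether the vertex event belongs to the cover, exploiting that deleting the cover vertices' $F$-edges is what decouples the $\swap$-signatures of the companions from the globally shared $a_j$'s. The only cosmetic difference is that the paper assigns $\swap$ to \emph{all} non-$Y$ events in its region $R_2$ rather than only to the companion edges, but this yields the same separations.
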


\begin{proof}
Let $\Ss=\{v_{\ell_0},\dots, v_{\ell_{\lambda-1}}\}$ be a suitable vertex cover, and let $\overline{B}_G$ be the TS that originates from $\overline{A}_G$ by the removal of the edges $f_{\ell_j,0}\Edge{v_{\ell_j}}f_{\ell_j,1}$ and $f_{\ell_j,1}\Edge{v_{\ell_j}}f_{\ell_j,0}$ for all $j\in \{0,\dots, \lambda-1\}$, and nothing else.
Clearly, $\overline{B}_G$ is an edge-removal of $\overline{A}_G$ that satisfies $\vert \K\vert =\kappa=2\lambda$.\\

Note that every region presented for the proof of Lemma~\ref{fact:edge_removal_model_implies_ssp} can be directly transferred to the current edge-removal $\overline{B}_G$, since their signatures only use $\nop$, and $\swap$.
Hence, $\overline{B}_G$ has the $\tau$-SSP.

\medskip
We argue for the $\tau$-ESSP:

\medskip
Let $I=\{t_{0,0}, t_{1,0},\dots, t_{m-1,0}, f_{0,0}, f_{1,0}, f_{n-1,0}\}$ be the set of the initial states of the gadgets of $\overline{A}_G$ (when considered as TS).

Let $i\in \{0,\dots, m-1\}$ be arbitrary but fixed.
In the following, we argue that $y_i^0$, and $y_i^1$, and $y_i^2$ are solvable at the necessary states of $T_i$.
We start with $y_i^0$, and have to distinguish the case where $v_{i_0}$ belongs to the vertex cover, and the case where it does not belong to the vertex cover, which implies that $\overline{F}_{i_0}$ is completely present in $\overline{B}_G$.

\medskip
The following region $R_0=(sup_0, sig_0)$ solves $(y_i^0,t_{i,1})$, and $(y_i^0,t_{i,2})$:

\medskip
The case $v_{i_0}\in \Ss$ (implying that $f_{i_0,0}\fbedge{v_{i_0}}f_{i_0,1}\not\in \overline{B}_G$):
We let $sup(\iota)=sup(s)=1$ for all $s\in I$, and, for all $e\in E(\overline{B}_G)\setminus (\bigcup_{i=0}^{m-1}\{y_i^1,y_i^2\})$,
if $e=y_i^0$, then $sig(e)=\used$,
if $e=v_{i_0}$, then $sig(e)=\swap$, and $sig(e)=\nop$ otherwise.

\medskip
After that we compute the support completely:
for every $j\in \{0,\dots, m-1\}$, and every $\ell\in \{0,1\}$, we let $sup(t_{j,\ell+1})=1-sup(t_{j\ell})$ if $v_{j_\ell}=v_{i_0}$ (implying $sig(v_{j_\ell})=\swap$), and otherwise $sup(t_{j,\ell+1})=sup(t_{j\ell})$;
for every $j\in \{0,\dots, n-1\}$, we let $sup(f_{j,1})=1$ (which is consistent, since $v_{i_0}\in \Ss$).
Finally, we complete the signature: for all $e\in \bigcup_{i=0}^{m-1}\{y_i^1,y_i^2\}$, if $\edge{e}s\in \overline{B}_G$, and $sup(s)=0$, then $sig(e)=\swap$, and $sig(e)=\nop$ otherwise.

\medskip
The case $v_{i_0}\not\in \Ss$ (implying that $f_{i_0,0}\fbedge{v_{i_0}}f_{i_0,1}\in \overline{B}_G$, and $v_{i_1}\in \Ss$, which implies $f_{i_1,0}\fbedge{v_{i_1}}f_{i_1,1}\not\in \overline{B}_G$):
We let $sup(\iota)=sup(s)=1$ for all $s\in I$, and, for all $e\in E(\overline{B}_G)\setminus (\bigcup_{i=0}^{m-1}\{y_i^1,y_i^2\})$,
if $e\in \{y_0^0, y_1^0,\dots, y_{m-1}^0\}\cup\{z_0,\dots, z_{n-1}\}$, then $sig(e)=\used$,
if $e=v_{i_1}$, then $sig(e)=\nop$, and
$sig(e)=\swap$ otherwise.

\medskip
After that we compute the support completely:
for every $j\in \{0,\dots, m-1\}$, and every $\ell\in \{0,1\}$, we let $sup(t_{j,\ell+1})=1-sup(t_{j\ell})$ if $v_{j_\ell}=v_{i_0}$ (implying $sig(v_{j_\ell})=\swap$), and otherwise $sup(t_{j,\ell+1})=sup(t_{j,\ell})$;
for every $j\in \{0,\dots, n-1\}$, we let $sup(f_{j,1})=0$.
Finally, for all $e\in \bigcup_{i=0}^{m-1}\{y_i^1,y_i^2\}$, if $\edge{e}s\in \overline{B}_G$, and $sup(s)=0$, then $sig(e)=\swap$, and $sig(e)=\nop$ otherwise.

\medskip
It is easy to see, that we can define a region $R$ that solves $(y_i^1,t_{i,0})$, and $(y_i^1,t_{i,2})$, respectively $(y_i^2,t_{i,0})$, and $(y_i^2,t_{i,1})$, in a similar way.
We thus refrain from representing the tedious details, and consider $y_i^0, y_i^1, y_i^2$ solved with respect to the necessary states of $\overline{T}_i$.\\

-If $sup_1(\iota)=1$ and,
for all $e\in E({\overline{B}}_G)$,
if $e\in \{y_i^0,y_i^1,y_i^2\}$, then $sig_0(e)=\used$,
if $e\in Y\setminus \{y_i^0,y_i^1,y_i^2\}$ then $sig_1(e)=\swap$, and
$sig_1(e)=\nop$ otherwise,
then the resulting region $R_1=(sup_1, sig_1)$ solves $(y_i^j, s)$ for all $s\in S({\overline{B}}_G) \setminus (\{\iota\}\cup S(\overline{T}_i))$, and all $j\in \{0,1,2\}$.

\medskip\noindent
By the arbitrariness of $i$, this proves the solvability of $y$ for all $y\in \bigcup_{i=0}^{m-1}\{y_i^0,y_i^1,y_i^2\}$.
Similarly, one argues that $z$ is solvable for all $z\in \bigcup_{i=0}^{n-1}\{z_i\}$.

\medskip\smallskip
If $e\in \{a_0,\dots, a_\kappa\}\cup\U$ and $s\in S(\overline{B}_G)$ belongs to a gadget that does not contain $e$, then it is easy to see that $(e, s)$ is $\tau$-solvable:
One simply defines a region that maps exactly to $1$ the states that belong to gadgets of $\overline{B}_G$ that contain $e$ (and $0$ to the others),
maps $e$ to $\used$, choses a suitable signature of the reachability events from $Y$ (either $\nop$ or $\swap$), and maps all the other events $e'$ with $e'\in E(\overline{B}_G)\setminus (\{e\}\cup Y)$ to $\nop$.
This particularly implies the solvability of $a$ for all $a\in \{a_0,\dots, a_\lambda\}$.
Hence, it remains to address the vertex events.\\

Let $i\in \{0,\dots, m-1\}$ be arbitrary but fixed.
By the former discussion, if $f_{i_0,0}\fBedge{v_{i_0}}f_{i_0,1}\in \K$ then $v_{i_0}$ is solvable in $\overline{F}_{i_0}$,
otherwise there is no need to solve it in $\overline{F}_{i_0}$.
Similarly, the solvability of $v_{i_1}$ in $\overline{F}_{i_1}$ needs no further discussion.
Hence, it remains to argue for the solvability of $v_{i_0}$ in $\overline{T}_i$ (the case of $v_{i_1}$ will be similar).
Let us first assume that $v_{i_0}\in\Ss$ (so that it does not occur in $\overline{F}_{i_0}$):
The following region $R_2=(sup_2, sig_2)$ solves $(v_{i_0}, t_{i,2})$ (and thus $v_{i_0}$ in $\overline{T}_i$):
$sup_2(\iota)=0$; and for any $j\in\{0,\ldots,m-1\}$,  \smallskip \\
if $t_{j,0}\Edge{v_{i_0}}t_{j,1}\in \overline{T}_j$, then $sup_2(t_{j,0})=sup_2(t_{j,1})=1$, and $sup_2(t_{j,2})=0$ else\\
if $t_{j,1}\Edge{v_{i_0}}t_{j,2}\in \overline{T}_j$, then $sup_2(t_{j,0})=0$ and $sup_2(t_{j,1})=sup_2(t_{j,2})=1$, otherwise
$sup_2(t_{j,0})=sup_2(t_{j,2})=0$ and $sup_2(t_{j,1})=1$; and,
for every $j\in \{0,\dots,n-1\}$, $sup_2(f_{j,0})=0$ and $sup_2(f_{j,1})=1$; moreover,
for all $e\in E(\overline{B}_G)\setminus Y$, if $e=v_{i_0}$ then $sig_2(e)=\used$, otherwise $sig_2(e)=\swap$;
finally, for all $e\in Y$, if $\iota\edge{e}s$ and $sup(s)=1$, then $sig_2(e)=\swap$, otherwise $sig_2(e)=\nop$.

\medskip
Let us now assume that  $v_{i_0}\not\in\Ss$, so that $v_{i_1}\in\Ss$, since we have a vertex covering (then $v_{i_0}$ occurs in $\overline{F}_{i_0}$, but the vertex cover events do not occur in the $\overline{F}$-gadgets).
For all $e\in E(\overline{B}_G)\setminus Y$, if $e=v_{i_0}$ then $sig_3(e)=\used$, if $e\in \Ss$, then $sig_3(e)=\swap$, otherwise $sig_3(e)=\nop$;
moreover, $sup(\iota)=1$, and
for any $j\in\{0,\ldots,n-1\}$, $sup_3(f_{j,0})=1$, and
for any $j\in\{0,\ldots,m-1\}$, if $v_{j_0}=v_{i_0}$, then $sup_3(t_{j,0})=1$, otherwise $sup_3(t_{j,0})=0$;
the other supports and signatures may then be derived coherently, delivering a region  $R_3=(sup_3, sig_3)$ which, as expected, solves $(v_{i_0}, t_{i,2})$. \smallskip\\
Since $i$ was arbitrary, this completes the proof.
\end{proof}

\section{The complexity of  event-removal}\label{sec:event_removal}%

In this section, we deal with the following modification:
\begin{definition}[Event-Removal]\label{def:event_removal}
Let $A=(S,E,\delta,\iota)$ be an  TS.
A TS\  $B=(S,E',\delta',\iota)$ with $E'\subseteq E$ is an \emph{event-removal} of $A$ if for all $e\in E'$ the following is true:
$s\edge{e}s'\in B$ if and only if $s\edge{e}s'\in A$ for all $s,s'\in S$.
By $\E=E\setminus E'$ we refer to the (set of) removed events.
\end{definition}

We want to stress that an event-removal $B$ is an initialized TS by definition and Remark~\ref{RemInit} systems, i.e., each state remains reachable from the initial one, and each remaining event occurs at least once in $\delta'$ since this was already the case in $A$.

\noindent
\fbox{\begin{minipage}[t][1.8\height][c]{0.88\textwidth}
\begin{decisionproblem}
  \problemtitle{\textsc{$\tau$-Event-Removal for Embedding}}
  \probleminput{A TS\  $A=(S,E,\delta,\iota)$, a natural number $\kappa$.}
  \problemquestion{Does there exist an event-removal $B$ for $A$ by ${\E}$ that has the $\tau$-SSP and satisfies $\vert{\E}\vert\leq \kappa$?}
\end{decisionproblem}
\end{minipage}}
\smallskip

\noindent
\fbox{\begin{minipage}[t][1.8\height][c]{0.88\textwidth}
\begin{decisionproblem}
  \problemtitle{\textsc{$\tau$-Event-Removal for Language-Simulation}}
  \probleminput{A TS\  $A=(S,E,\delta,\iota)$, a natural number $\kappa$.}
  \problemquestion{Does there exist an event-removal $B$ for $A$ by ${\E}$ that has the $\tau$-ESSP and satisfies $\vert{\E}\vert\leq \kappa$?}
\end{decisionproblem}
\end{minipage}}
\smallskip

\noindent
\fbox{\begin{minipage}[t][1.8\height][c]{0.88\textwidth}
\begin{decisionproblem}
  \problemtitle{\textsc{$\tau$-Event-Removal for Realization}}
  \probleminput{A TS\  $A=(S,E,\delta,\iota)$, a natural number $\kappa$.}
  \problemquestion{Does there exist an event-removal $B$ for $A$ by ${\E}$ that has the $\tau$-ESSP and the $\tau$-SSP and satisfies $\vert{\E}\vert\leq \kappa$?}
\end{decisionproblem}
\end{minipage}}

\medskip
The following theorem is the main result of this section:

\begin{theorem}\label{the:event_removal}
If $\omega\subseteq\{\inp,\out, \free,\used\}$, and $\tau=\{\nop,\swap\}\cup\omega$, then
\begin{enumerate}
\itemsep=0.95pt
\item\label{the:event_removal_embedding}
\textsc{$\tau$-Event-Removal for Embedding} is NP-complete.
\item\label{the:event_removal_langsim_real}
\textsc{$\tau$-Event-Removal for Language-Simulation}, and\\
\textsc{$\tau$-Event-Removal for Realization} are NP-complete if $\omega\not=\emptyset$.
\end{enumerate}
\end{theorem}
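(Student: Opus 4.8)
The plan is to follow the blueprint of Theorem~\ref{the:edge_removal}: establish membership in NP by a guess-and-check argument, and prove hardness by a polynomial reduction from \textsc{3BVC}. Given an input $(G,\lambda)$ of \textsc{3BVC}, I would build a TS $A_G$ and set $\kappa=\lambda$, so that $G$ has a $\lambda$-vertex cover if and only if $A_G$ admits an event-removal $B_G$ with $\vert\E\vert\le\kappa$ having the required property ($\tau$-SSP for embedding; $\tau$-ESSP for language-simulation; both for realization). Membership in NP is immediate: a nondeterministic machine guesses the set $\E$ of at most $\kappa$ removed events, forms $B_G$ in time polynomial in $\vert\delta\vert$, and then checks the relevant property in polynomial time via the flip-flop synthesis algorithm, exactly as in the NP-arguments of the previous sections. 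Note that part~\ref{the:event_removal_embedding} asks for \emph{all} $\omega$ (including $\omega=\emptyset$), whereas part~\ref{the:event_removal_langsim_real} needs $\omega\neq\emptyset$ and, as before, a case split according to whether $\tau$ contains an $\inp$/$\out$ interaction.

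For the gadget, rather than the length-$2$ paths with companion gadgets used for edge removal, I would reuse the alternating paths of the label-splitting section: for every edge $M_i=\{v_{i_0},v_{i_1}\}$ introduce $T_i=t_{i,0}\edge{v_{i_0}}t_{i,1}\edge{v_{i_1}}t_{i,2}\edge{v_{i_0}}t_{i,3}\edge{v_{i_1}}t_{i,4}$, and add, for every state $t_{i,j}$, a reachability edge $\iota\edge{y_i^j}t_{i,j}$ labelled by a fresh event $y_i^j$. The point of the alternating pattern is that Lemma~\ref{lem:abab} supplies a signature-independent obstruction: whenever both $v_{i_0}$ and $v_{i_1}$ survive in $B_G$, the whole path $T_i$ is present, so $sup(t_{i,0})=sup(t_{i,4})$ for every $\tau$-region, whence the SSP atom $(t_{i,0},t_{i,4})$, and likewise the ESSP atom $(v_{i_0},t_{i,4})$, is unsolvable. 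This makes the backward direction clean and in fact simpler than for edge removal: since removing an event is all-or-nothing, achieving the property forces, for each $i$, that at least one of $v_{i_0},v_{i_1}$ lies in $\E$; therefore $\mS:=\E\cap\U$ meets every edge of $G$ and satisfies $\vert\mS\vert\le\vert\E\vert\le\lambda$, i.e.\ it is a $\lambda$-vertex cover. The fresh reachability events $y_i^j$ keep every state reachable no matter which vertex events are deleted, so $B_G$ stays a valid TS.

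The converse direction is where the real work lies, and I expect it to be the main obstacle. Given a $\lambda$-vertex cover $\mS$, I set $\E=\mS$ and must exhibit $\tau$-regions solving every separation atom of $B_G$. Because $\mS$ meets every edge, each $T_i$ loses at least one of its two vertex events, so its alternating pattern is broken and the forbidden atoms disappear; the remaining task is to assemble, for each surviving vertex event and each reachability event, regions realising the needed separations. The difficulty is the usual global-consistency problem already met in Lemmas~\ref{lem:vc_implies_ssp}--\ref{lem:vc_implies_essp_2}: a surviving vertex event occurs in several paths $T_i$ (whose companion vertices must then all lie in $\mS$ and be distinct), and its signature is shared across all of them, while the $\swap$-signatures needed to place some states at support $1$ tend to push neighbouring states to the wrong support. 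I would resolve this exactly as there, fixing supports gadget-by-gadget and, for the paths $T_j$ not containing the event in question, offering two complementary regions corresponding to the two choices $sup(t_{j,0})\in\{0,1\}$ (realised by mapping the reachability event $y_j^0$ to $\nop$ or to $\swap$), so that in every case some region isolates the event from all required states.

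Finally, the $\tau$-ESSP of part~\ref{the:event_removal_langsim_real} must be handled according to the interactions available. When $\omega\cap\{\inp,\out\}\neq\emptyset$ the directed construction above suffices, solving each ESSA $(e,s)$ by a region with $sig(e)=\inp$ (or the complementary $\out$) and $sup(s)=0$, as in Lemma~\ref{lem:vc_implies_essp}. When $\emptyset\neq\omega\subseteq\{\used,\free\}$ the atoms $(e,s)$ with $s'\edge{e}s\edge{\neg e}$ cannot be solved, so I would pass to the bi-directed variants $\overline{A}_G$ and $\overline{B}_G$ exactly as in Sections~\ref{sec:split-used} and~\ref{sec:edge_removal_save}; crucially, since an event is removed in one stroke, the cost stays $1$ per event and $\kappa=\lambda$ is unchanged (in contrast to the $\kappa=2\lambda$ needed for edge removal). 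In the bi-directed setting each surviving event fires at both endpoints of each of its edges, so the only ESSAs $(e,s)$ left are those in which $s$ is untouched by $e$; these are solved by a region giving support $1$ to all $e$-incident states and $0$ elsewhere, with the same complementary-region device handling the gadgets not containing $e$. Combining the SSP regions, whose signatures use only $\nop$ and $\swap$ and hence transfer verbatim to the bi-directed TS, with these ESSP regions yields realization, completing all three implementations and thus Theorem~\ref{the:event_removal}.
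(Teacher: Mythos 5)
Your plan is sound, but it takes a genuinely different route from the paper. The paper introduces no new gadget for event removal: it reuses verbatim the instance $(A_G,\kappa)$ of the edge-removal reduction (Section~\ref{sec:edge_removal_inp_out}, resp.\ its bi-directed variant of Section~\ref{sec:edge_removal_save}), i.e.\ the length-two paths $T_i=t_{i,0}\edge{v_{i_0}}t_{i,1}\edge{v_{i_1}}t_{i,2}$ together with the fan gadgets $F_j$ carrying $\kappa+1$ parallel $a$-labelled edges. There the forward direction is a pigeonhole argument: since at most $\kappa$ events are removed, some $a_j$ survives in both $F_{i_0}$ and $F_{i_1}$ and forces $v_{i_0}$ and $v_{i_1}$ to have signatures of the same state-changing type, contradicting the solvability of $(t_{i,0},t_{i,2})$, resp.\ $(v_{i_0},t_{i,2})$, unless one of the two vertex events is deleted; the payoff is that the converse direction is essentially free, because the event-removal $B_G$ is a sub-TS of the edge-removal $B_G'$ of Theorem~\ref{the:edge_removal}, so the regions of Fact~\ref{fact:edge_removal_model_implies_ssp} and Fact~\ref{fact:edge_removal_model_implies_essp} restrict to witnesses for $B_G$. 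Your construction trades these two steps: the alternating length-four paths make the forward direction cleaner (Lemma~\ref{lem:abab} yields a signature-independent parity obstruction, with no fan gadgets and no pigeonhole), at the price of having to build all separating regions from scratch in the converse direction. That work is feasible --- in fact easier than in the label-splitting setting, since deleting the companion event disconnects each $T_i$ into pieces that are re-entered only through your fresh $y_i^j$-edges, whose $\nop$/$\swap$ signatures can be fixed independently --- but it is genuinely new and cannot be discharged by citing Lemmas~\ref{lem:vc_implies_ssp}--\ref{lem:vc_implies_essp_2}, whose TS has a different connection structure ($\bot$-chain, and companions that are split rather than deleted); in particular the ESSAs $(y_i^j,s)$ in the bi-directed case do need the two-complementary-regions device you mention. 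Your treatment of the $\{\used,\free\}$ case with $\kappa=\lambda$ instead of $2\lambda$ matches the paper's reasoning exactly.
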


\subsection{The proof of Theorem~\ref{the:event_removal}(\ref{the:event_removal_embedding}), and the proof of Theorem~\ref{the:event_removal}(\ref{the:event_removal_langsim_real}) for the  types with $\inp$ or $\out$}\label{sec:event_removal_inp_out}%

 Let $\tau=\{\nop,\swap\}\cup\omega$ with $\omega\subseteq \{\inp,\out, \free,\used\}$, and for the input $(G,\lambda)$ of \textsc{3BVC}, where $G=(\U,M)$, let $(A_G,\kappa)$ be defined as in Section~\ref{sec:edge_removal_inp_out}.
In this section, we show that $G$ has a $\lambda$-VC if and only if $A_G$ allows an event-removal $B_G$ that respects $\kappa$, and has the $\tau$-SSP.
Moreover, we also show that if $\tau\cap\{\inp,\out\}\not=\emptyset$, then $G$ has a $\lambda$-VC if and only if $A_G$ allows an event-removal $B_G$ that respects $\kappa$, and has the $\tau$-ESSP.
Altogether, this proves Theorem~\ref{the:event_removal}(\ref{the:event_removal_embedding}), and the statements of Theorem~\ref{the:event_removal}(\ref{the:event_removal_langsim_real}) for the types $\tau\cap\{\inp,\out\}\not=\emptyset$.

\medskip
For the sake of simplicity, in the remainder of this section, if we discuss aspects of the $\tau$-ESSP (where the addressed TS is clear from the context), we always assume that $\tau\cap\{\inp,\out\}\not=\emptyset$.

\begin{lemma}\label{lem:event_removal_implies_model}
If there is an event-removal $B_G$ of $A_G$ that satisfies $\vert \E\vert \leq \kappa$, and has the $\tau$-SSP, respectively the $\tau$-ESSP, then there is a $\lambda$-VC for $G=(\U,M)$.
\end{lemma}
\begin{proof}
Let $B_G$ be an event-removal of $A_G$ that satisfies $\vert \E\vert \leq \kappa$, and has the $\tau$-SSP, respectively the $\tau$-ESSP, and let $\Ss=\E \cap\U$ be the set vertex-events that are removed from $A_G$.\\
Let $i\in \{0,\dots, m-1\}$ be arbitrary but fixed.\\
Since $\vert \E\vert \leq \kappa$, there is a $j\in \{0,\dots, \kappa\}$ such that the event $a_j$, and thus all $a_j$-labeled edges are present in $B_G$.
Hence, similar to the arguments for the proof of Lemma~\ref{lem:edge_removal_implies_model}, one argues that if $\{v_{i_0}, v_{i_1}\}\cap \E=\emptyset$, implying that all $v_{i_0}$-labeled, and $v_{i_1}$-labeled edges are present in $B$, then $(t_{i,0}, t_{i,2})$ is an unsolvable SSA of $B$, respectively $(v_{i_0}, t_{i,2})$ is an unsolvable ESSA of $B$, which is a contradiction.
By the arbitrariness of $i$, and since $\vert \Ss\vert \leq \vert\E\vert\leq \kappa=\lambda$, the claim follows.
\end{proof}

Reversely, if $G$ has a $\lambda$-VC, then $A_G$ allows a suitable event-removal $B_G$ that has the $\tau$-SSP, respectively the $\tau$-ESSP:

\begin{lemma}\label{lem:vc_implies_event_removal}
If there is a $\lambda$-VC for $G=(\U,M)$, then there is an event-removal $B_G$ of $A_G$ that satisfies $\vert \E\vert \leq \kappa$,
 has the $\tau$-SSP, and that has $\tau$-ESSP if $\tau\cap\{\inp,\out\}\not=\emptyset$.
\end{lemma}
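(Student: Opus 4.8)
The plan is to fix a vertex cover $\mS=\{v_{\ell_0},\dots,v_{\ell_{\lambda-1}}\}$ of $G$ and let $B_G$ be the event-removal of $A_G$ that deletes exactly the vertex events in $\mS$ (and nothing else), so that $\E=\mS$ and $\vert\E\vert=\lambda=\kappa$. First I would verify that $B_G$ is a valid TS: every gadget state $t_{i,j}$ is still reached directly from $\iota$ by the edge $\iota\edge{y_i^j}t_{i,j}$, every $f_{\ell,0}$ by $\iota\edge{z_\ell}f_{\ell,0}$, and every $f_{\ell,1}$ by any surviving edge $f_{\ell,0}\edge{a_j}f_{\ell,1}$ (the events $a_0,\dots,a_\kappa$ are never removed); each surviving event still occurs. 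The decisive structural observation, dual to Lemma~\ref{lem:event_removal_implies_model}, is that since $\mS$ is a vertex cover, for every $i$ at least one of $v_{i_0},v_{i_1}$ is deleted, so in $B_G$ each path $T_i$ is \emph{broken}: it retains at most one of its two edges.

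For the $\tau$-SSP I would reuse the regions of Fact~\ref{fact:edge_removal_model_implies_ssp}, which employ only the signatures $\nop$ and $\swap$ and are thus available for every admissible $\tau$; the region isolating $\iota$ and the regions separating the states of a $T_i$ from the rest, and along its surviving edge, carry over verbatim. The only atom whose former solution relied on a now-deleted $v$-edge is $(t_{i,0},t_{i,2})$, but here the breaking of $T_i$ helps rather than hurts: with one of the two edges gone I may freely assign $sup(t_{i,0})\neq sup(t_{i,2})$ and realize the single remaining edge by a $\swap$, choosing the $y_i^j$-signatures coherently. The same reasoning separates the states of each $F_\ell$. Hence $B_G$ has the $\tau$-SSP for all $\omega\subseteq\{\inp,\out,\free,\used\}$.

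For the $\tau$-ESSP I would assume, without loss of generality, $\inp\in\tau$ (the case $\out\in\tau$ being symmetric by complementation) and solve each atom $(e,s)$ by a region with $sig(e)=\inp$ and $sup(s)=0$, all other signatures being $\nop$ or $\swap$. The events of $Y$ and the events $a_0,\dots,a_\kappa$ are handled by the regions $R_0,R_1,R_2$ of Fact~\ref{fact:edge_removal_model_implies_essp}, which do not depend on the deleted vertices. The remaining vertex events are exactly the $v_i\notin\mS$; since $\mS$ is a vertex cover, in every edge $\mathfrak{e}_\ell\ni v_i$ the companion vertex lies in $\mS$ and is deleted, so $v_i$ labels only the single edge $f_{i,0}\edge{v_i}f_{i,1}$ together with exactly one edge in each of its (at most three) broken paths $T_\ell$. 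I would then set $sig(v_i)=\inp$, put support $1$ on the sources of these $v_i$-edges and $0$ on their targets, and propagate $\nop$ and $\swap$ elsewhere.

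The step I expect to be the main obstacle is precisely this last one: forcing support $0$ on all states of the paths $T_j$ that do not contain $v_i$ (and on the other $F$-gadgets) while keeping the region coherent across the broken paths and the initial edges. Because the parallel $a_j$-edges force the signature of every surviving non-cover vertex to agree with that of the $a_j$'s inside its $F$-gadget, a single region may leave some states of such a $T_j$ with support $1$ (introduced by a necessary $\swap$); as in the treatment of $R_6,R_7,R_8$ in Fact~\ref{fact:edge_removal_model_implies_essp}, I expect to need a complementary region, flipping the support of the relevant $t_{j,0}$ and compensating with a $\swap$ on the corresponding $y$- or $z$-event, to cover those atoms. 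The removal of the cover events keeps this bookkeeping strictly lighter than in the edge-removal proof, since the deleted vertices impose no further edges to respect and generate no ESSAs of their own. Combining the $\tau$-SSP and $\tau$-ESSP witnesses with $\vert\E\vert=\kappa$ yields the lemma; together with Lemma~\ref{lem:event_removal_implies_model} it establishes Theorem~\ref{the:event_removal}(\ref{the:event_removal_embedding}) and the $\inp/\out$ case of Theorem~\ref{the:event_removal}(\ref{the:event_removal_langsim_real}).
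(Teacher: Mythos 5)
Your construction of $B_G$ (delete exactly the cover events) and the validity check coincide with the paper's, but from there you take a genuinely different and more laborious route: you rebuild the separating regions from scratch on the modified gadgets, exploiting that each $T_i$ is now ``broken''. The paper instead uses a pure transfer argument: it compares $B_G$ with the edge-removal $B_G'$ of Section~\ref{sec:edge_removal_inp_out}, observes that both have the same states (hence the same SSP atoms), that every edge of $B_G$ is an edge of $B_G'$, and that for every surviving event $e\not\in\Ss$ \emph{all} of its $A_G$-edges are present in both TS, so every ESSP atom of $B_G$ is already an ESSP atom of $B_G'$; consequently the restriction of any region of $B_G'$ to the events of $B_G$ is again a region (removing edges only removes constraints) and solves the same atoms, so Facts~\ref{fact:edge_removal_model_implies_ssp} and~\ref{fact:edge_removal_model_implies_essp} finish the proof with no new region constructions. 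This observation also dissolves the worry driving your re-derivation of the $(t_{i,0},t_{i,2})$ region: a support assignment does not become invalid when an edge it ``relied on'' disappears, since the region condition is only checked on the edges that remain. What your approach buys is a self-contained argument that does not lean on the earlier facts; what it costs is exactly the bookkeeping you flag as the ``main obstacle'' (the complementary regions for paths $T_j$ not containing $v_i$), which you leave as an expectation rather than a construction — under the transfer argument that gap never opens, so to make your version stand alone you would still need to carry out those $R_6$/$R_7$/$R_8$-style regions explicitly in the broken-path setting.
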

\begin{proof}
Let $\Ss=\{v_{\ell_0},\dots, v_{\ell_{\lambda-1}}\}$ be a vertex cover for $G$, and let $B_G$ be the TS that originates from $A_G$ by the removal the events of $\Ss$ (and thus the edges labeled by these events), and nothing else.
By the events of~$Y$, $B_G$ is a reachable TS, which has the same states as $A_G$, and obviously satisfies $\vert \E\vert\leq \kappa=\lambda$.

\medskip
In the following, we argue that the solvability of $B_G$ follows from the regions presented for the proofs of Fact~\ref{fact:edge_removal_model_implies_ssp}, and Fact~\ref{fact:edge_removal_model_implies_essp} of  Section~\ref{sec:edge_removal_inp_out}:

\medskip
Let $B_G'$ be the TS that originates from $A_G$ by removing, for all $i\in \{0,\dots, \lambda-1\}$, the edge $f_{\ell_i,0}\edge{v_{\ell_i}}f_{\ell_i,1}$, and nothing else (that is, $B_G'$ corresponds to the edge-removal defined in Section~\ref{sec:edge_removal_inp_out}).
By definition, both $B_G$, and $B_G'$ have the same states as $A_G$, and thus have the same SSA to solve.
Moreover, if $s\edge{e}s'$ is an edge of $B_G$, then it is an edge of $B_G'$, since, for all $i\in \{0,\dots, \lambda-1\}$, $B_G$ does not only miss the edge $f_{\ell_i,0}\edge{v_{\ell_i}}f_{\ell_i,1}$, but all $v_{\ell_i}$-labeled edges (and nothing else).
In particular, for every event $e\in E(B_G)$, which implies $e\not\in \Ss$, all the (original) $e$-labeled edges (of $A_G$) are present in both $B_G$, and $B_G'$.
Hence, if $(e,s)$ is an ESSA of $B_G$, then it is an ESSA of $B_G'$.
Finally, if $R=(sup, sig)$ is a region of $B_G'$, then its restriction to (the events of) $B_G$ is also a region of $B_G$, since $sup(s)\ledge{sig(e)}sup(s')\in \tau$ is implied for all $s\edge{e}s'\in B_G$, by $s\edge{e}s'\in B_G'$.
Moreover, if $R$ solves a (state or event) separation atom $\alpha$ of $B_G'$ that is also present in $B_G$, then it also solves this atom in $B_G$.
Consequently, since $B_G'$ has the $\tau$-SSP, and the $\tau$-ESSP by the arguments of Fact~\ref{fact:edge_removal_model_implies_ssp}, and Fact~\ref{fact:edge_removal_model_implies_essp}, respectively, we conclude that $B_G$ has also these properties.
\end{proof}

\subsection{The proof of Theorem~\ref{the:event_removal}(\ref{the:event_removal_langsim_real}) for the types without $\inp$ and $\out$}\label{sec:event_removal_save}%

Let $\emptyset\neq\omega\subseteq \{\free,\used\}$, and $\tau=\{\nop,\swap\}\cup\omega$.
In order to complete the proof of Theorem~\ref{the:event_removal}(\ref{the:event_removal_langsim_real}) (for $\tau$), we reduce the input $G=(\U,M)$ to the instance $(\overline{A}_G,\kappa)$, where $\overline{A}_G$ is the TS (with bi-directional edges) as defined in Section~\ref{sec:edge_removal_save}, and $\kappa=\lambda$, and show that $G$ has a $\lambda$-VC if and only if $\overline{A}_G$ allows an event-removal $\overline{B}_G$ that respects $\kappa$, and has the $\tau$-ESSP, respectively the $\tau$-ESSP, and the $\tau$-SSP.
We would like to emphasize that we define $\kappa=\lambda$ here, in contrast to the definition of $\kappa=2\lambda$ in section~\ref{sec:edge_removal_save}.
This (current) definition is justified by the fact that the removal of an event already implies the removal of all its edges.
For details see the following arguments.

\begin{lemma}\label{lem:event_removal_save_essp_implies_model}
If there is an event-removal $\overline{B}_G$ of $\overline{A}_G$ that satisfies $\vert \E\vert \leq \kappa$, and has the $\tau$-ESSP, then there is a $\lambda$-VC for $G=(\U,M)$.
\end{lemma}

\begin{proof}
Let $\overline{B}_G$ be an event-removal $\overline{B}_G$ of $\overline{A}_G$ that satisfies $\vert \E\vert \leq \kappa$, and has the $\tau$-ESSP;
let $\Ss=\U\cap\E$.
Note that  $\vert \Ss\vert \leq \vert \E\vert \leq\kappa=\lambda$.

\medskip
\noindent We argue that $\Ss$ is a vertex cover: \smallskip\\
Let $i\in \{0,\dots, m-1\}$ be arbitrary but fixed.
If $\{v_{i_0},v_{i_1}\}\cap\Ss=\emptyset$, then $\overline{T}_i$, $\overline{F}_{i_0}$ and $\overline{F}_{i_1}$ are completely present in $\overline{B}_G$.
Similar to the proof of Lemma~\ref{lem:edge_removal_save_essp_implies_model} one argues that this contradicts the $\tau$-ESSP of $\overline{B}_G$.
Hence $\{v_{i_0},v_{i_1}\}\cap\Ss\not=\emptyset$, and the arbitrariness of $i$ implies the claim.
\end{proof}

\begin{lemma}\label{lem:event_removal_save_model_implies_essp}
If there is a $\lambda$-VC for $G=(\U,M)$, then there is an event-removal $\overline{B}_G$ of $\overline{A}_G$
that satisfies $\vert \E\vert \leq \kappa$ and has the $\tau$-ESSP as well as the $\tau$-SSP.
\end{lemma}

\begin{proof}
Let $\Ss=\{v_{\ell_0},\dots, v_{\ell_{\lambda-1}}\}$ be a suitable vertex cover of $G$, and let $\overline{B}_G$ be the event-removal that originates from $\overline{A}_G$ by removing the events of $\Ss$ (and the corresponding edges), and nothing else.
One easily checks that $\overline{B}_G$ is a (well-defined) reachable TS, and satisfies $\vert \E\vert\leq \vert \Ss\vert$.

\medskip
In the following, we argue that the $\tau$-SSP, and the $\tau$-ESSP of $\overline{B}_G$ is implied by the regions presented for the proof of Lemma~\ref{lem:edge_removal_save_model_implies_essp}:
Let $\overline{B}_G'$ be the TS that originates from $\overline{A}_G$ by the removal of the edges $f_{\ell_j,0}\edge{v_{\ell_j}}f_{\ell_j,1}$, and $f_{\ell_j,1}\edge{v_{\ell_j}}f_{\ell_j,0}$, and nothing else (that is, $\overline{B}_G'$ corresponds to the edge-removal of Section~\ref{sec:edge_removal_save}).
One verifies that $\overline{B}_G$, and $\overline{B}_G'$ have the same states, and thus the same SSAs to solve.
Moreover, if $s\edge{e}s'$ is an edge of $\overline{B}_G$, implying that $e\not\in \Ss$, then $s\edge{e}s'\in \overline{B}_G'$.
Hence, every ESSA of $\overline{B}_G$ is an ESSA of $\overline{B}_G'$.
Moreover, for every region $R=(sup, sig)$ of $\overline{B}_G'$, the restriction of $R$ to (the events of) $\overline{B}_G$ is a region of $\overline{B}_G$, since $sup(s)\edge{e}sup(s')$ is implied for every $s\edge{e}s'$ of $\overline{B}_G$ (by $s\edge{e}s'\in \overline{B}_G'$, and the region property of $R$).
Hence, since the regions of Lemma~\ref{lem:edge_removal_save_model_implies_essp} justify the $\tau$-SSP, and the $\tau$-ESSP, they particularly justify these properties for $\overline{B}_G$.
\end{proof}

\section{The complexity of state-removal}\label{sec:state_removal}%

In this section, we address the following modification:

\begin{definition}[State-Removal]\label{def:state_removal}
Let $A=(S,E,\delta,\iota)$ be a TS.
A TS $B=(S',E,\delta',\iota)$ with states $S'\subseteq S$ is a \emph{state-removal} of $A$ if the following two
conditions are satisfied: \smallskip\\
(1) for all $e\in E$
and all $s,s'\in S'$, $s\edge{e}s'\in B$ if and only if $s\edge{e}s'\in A$; \smallskip \\
(2) if $s\edge{e}s'\in A$ and $s\edge{e}s'\not\in B$, then $s\not\in S'$ or $s'\not\in S'$ (or both). \medskip\\
By ${\mS}=S\setminus S'$ we refer to the (set of) removed states.
\end{definition}
In the following, we shall assume that $B$ is a valid system, i.e., each state remains reachable from the initial one and each event occurs at least once in $\delta'$.

\medskip
In particular, we investigate the complexity of the following three decision problems:

\noindent
\fbox{\begin{minipage}[t][1.8\height][c]{0.88\textwidth}
\begin{decisionproblem}
  \problemtitle{\textsc{$\tau$-State-Removal for Embedding}}
  \probleminput{A TS\  $A=(S,E,\delta,\iota)$, a natural number $\kappa$.}
  \problemquestion{Does there exist a state-removal $B$ for $A$ by ${\mS}$ that has the $\tau$-SSP and satisfies $\vert{\mS}\vert\leq \kappa$?}
\end{decisionproblem}
\end{minipage}}
\smallskip

\noindent
\fbox{\begin{minipage}[t][1.8\height][c]{0.88\textwidth}
\begin{decisionproblem}
  \problemtitle{\textsc{$\tau$-State-Removal for Language-Simulation}}
  \probleminput{A TS\  $A=(S,E,\delta,\iota)$, a natural number $\kappa$.}
  \problemquestion{Does there exist a state-removal $B$ for $A$ by ${\mS}$ that has the $\tau$-ESSP and satisfies $\vert{\mS}\vert\leq \kappa$?}
\end{decisionproblem}
\end{minipage}}
\smallskip

\noindent
\fbox{\begin{minipage}[t][1.8\height][c]{0.88\textwidth}
\begin{decisionproblem}
  \problemtitle{\textsc{$\tau$-State-Removal for Realization}}
  \probleminput{A TS\  $A=(S,E,\delta,\iota)$, a natural number $\kappa$.}
  \problemquestion{Does there exist a state-removal $B$ for $A$ by ${\mS}$ that has the $\tau$-ESSP and the $\tau$-SSP and satisfies $\vert{\mS}\vert\leq \kappa$?}
\end{decisionproblem}
\end{minipage}}

\medskip

The following theorem is the main result of this section:

\begin{theorem}\label{the:state_removal}
If $\omega\subseteq\{\inp,\out, \free,\used\}$, and $\tau=\{\nop,\swap\}\cup\omega$, then
\begin{enumerate}
\itemsep=0.95pt
\item\label{the:state_removal_embedding}
\textsc{$\tau$-State-Removal for Embedding} is NP-complete.
\item\label{the:state_removal_langsim_real}
\textsc{$\tau$-State-Removal for Language-Simulation}, and\\
\textsc{$\tau$-State-Removal for Realization} are NP-complete if $\omega\not=\emptyset$, otherwise they are solvable in polynomial time.
\end{enumerate}
\end{theorem}

First of all, we argue for the polynomial part:
If $\tau=\{\nop,\swap\}$, then a TS $A=(S,E,\delta,\iota)$ has the $\tau$-ESSP if and only if every event occurs at every state,
since the functions  \nop,\swap{} are defined on both $0$ and $1$.
Thus, any ESSP atom $(e,s)$ of $A$ would be unsolvable.
Again, we may determine the states $s\in S$ such that, for some event $e$, we have $\neg s\edge{e}$: they must be removed.
Let $B=(S',E,\delta, \iota)$ be the (unique) result of this phase.
For the language-simulation problem, this is enough and we simply have to check if $B$ is valid and if the number $k$ of removed states does not exceed $\kappa$.
Let $(s,s')$ be an unsolvable SSA of $B$.
Since $B$ is an initialized TS, there is a path $\iota\edge{e_1}s_1\dots s_{n-1}\edge{e_n}s_n$ such that $s_n=s$ in $B$.
If we remove $s_n$ (and get, say, $B'$), then $\neg s_{n-1}\edge{e_n}$, since $\nop$ and $\swap$ are functions, and thus we have an unsolvable ESSA $(s_{n_1}, e_n)$.
Consequently, we have also to remove $s_{n-1}$, and get, say, $B''$.
By the same arguments, we inductively obtain that $s_{n-1},\dots,s_1$, and finally $\iota$, have to be removed, which is a contradiction, since every state-removal of $A$ has the initial state $\iota$ by Definition~\ref{def:state_removal}.
Similarly, the removal of $s'$ yields also a contradiction.

\vspace*{2mm}
Thus, for the proof of Theorem~\ref{the:state_removal}, it remains to consider the NP-completeness results.

\subsection{The proof of Theorem~\ref{the:state_removal}(\ref{the:state_removal_embedding}), and the proof of Theorem~\ref{the:state_removal}(\ref{the:state_removal_langsim_real}) for the types with $\inp$ or $\out$}\label{sec:state_removal_inp_out}%

Let $\omega\subseteq \{\inp,\out, \free,\used\}$ and $\tau=\{\nop,\swap\}\cup\omega$ be such that $\tau\cap\{\inp,\out\}\not=\emptyset$.

\medskip
In order to prove Theorem~\ref{the:state_removal}(\ref{the:state_removal_embedding}), and the statement of Theorem~\ref{the:state_removal}(\ref{the:state_removal_langsim_real}) for $\tau\cap\{\inp,\out\}\not=\emptyset$, we use the following reduction that transforms an input $(G,\lambda)$, where $G=(\U,M)$, to an instance $(A_G,\kappa)$:
$\kappa=\lambda$, and $A_G$ is the TS that is defined just like the one in Section~\ref{sec:edge_removal_inp_out}, but, for all $i\in \{0,\dots, m-1\}$, and for all $j\in \{1,2\}$, it does not implement the edges $\iota\edge{y_i^j}t_{i,j}$.
(The edge $\iota\edge{y_i^0}t_{i,0}$ is still present to ensure reachability.)
By doing so, we ensure that if $B_G$ is well-defined state-removal of $A_G$, then, for all $i\in \{0,\dots, m-1\}$, if $t_{i,1}$ is removed, then $t_{i,2}$ is also removed, and, moreover, if $t_{i,0}$ is removed, then also both $t_{i,1}$ and $t_{i,2}$ are removed.
This can be seen as follows:
By Definition~\ref{def:state_removal}, $B_G$ is a TS, and thus is initialized; hence the initial state $\iota$ is present and its states are reachable from $\iota$ by a directed path.
Hence, if, for example, the state $t_{0,1}$ is missing in $B_G$ (compared with $A_G$), implying that the (only) edge $t_{0,1}\edge{v_{0_1}}t_{0,2}$ is removed too, then the state $t_{0,2}$ is also not present in $B_G$, since it would not be reachable by a directed path from $\iota$ otherwise, which is a contradiction.

\begin{lemma}\label{lem:ssp_state_removal_implies_model}
If there is a state-removal $B_G$ of $A_G$ that satisfies $\vert \mS\vert \leq \kappa$, and has the $\tau$-SSP, respectively
the $\tau$-ESSP with $\tau\cap\{\inp,\out\}\not=\emptyset$, then there is a $\lambda$-VC for $G=(\U,M)$.
\end{lemma}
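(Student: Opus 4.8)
The plan is to read a $\lambda$-vertex cover of $G$ off the removed states $\mS$ of $B_G$, via an obstruction argument that closely mirrors the one of Lemma~\ref{lem:edge_removal_implies_model}. First I would record the structural consequence of having dropped the edges $\iota\edge{y_i^j}t_{i,j}$ for $j\in\{1,2\}$: since $\iota\edge{y_i^0}t_{i,0}$ is the only link of $T_i$ to $\iota$ and $B_G$ must stay reachable (Remark~\ref{rem:ts}, Definition~\ref{def:state_removal}), the removal of any state of $T_i$ propagates downstream, so in particular $t_{i,2}\in\mS$ as soon as some state of $T_i$ lies in $\mS$. Likewise, since $f_{j,1}$ is reachable only through $f_{j,0}$, each gadget $F_j$ is either fully present or has at least one removed state; I call $F_j$ \emph{disrupted} in the latter case, in which case its $v_j$-edge (and all its $a_\ell$-edges) are absent from $B_G$.

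The core step is the following obstruction: for each $i$, the subsystems $T_i$, $F_{i_0}$ and $F_{i_1}$ cannot all be fully present in $B_G$. If they were, the atom $(t_{i,0},t_{i,2})$ (for the $\tau$-SSP), respectively $(v_{i_0},t_{i,2})$ (for the $\tau$-ESSP), would exist and be solved by some region $R=(sup,sig)$; in either case solvability forces $sup(t_{i,0})\neq sup(t_{i,2})$ (in the ESSP case because $v_{i_0}$ already occurs at $t_{i,0}$, so $sig(v_{i_0})\in\{\inp,\out,\used,\free\}$ is defined at $sup(t_{i,0})$ but not at $sup(t_{i,2})$). A parity argument along the two-step path $t_{i,0}\edge{v_{i_0}}t_{i,1}\edge{v_{i_1}}t_{i,2}$ then shows that exactly one of $sig(v_{i_0}),sig(v_{i_1})$ lies in the value-preserving class $\{\nop,\used,\free\}$ and the other in the value-changing class $\{\inp,\out,\swap\}$. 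Since $F_{i_0}$ and $F_{i_1}$ are intact, some common event $a_\ell$ labels an edge beside the $v_{i_0}$-, resp.\ $v_{i_1}$-edge on the same pair of states in each gadget; comparing these two occurrences forces $sig(a_\ell)$ to be simultaneously value-preserving (in $F_{i_0}$) and value-changing (in $F_{i_1}$), a contradiction. Note that, in contrast to the edge-removal setting, here I need no pigeonhole over the $\kappa+1$ parallel edges, since an intact gadget retains all of them.

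With the obstruction in hand I would set
\[
\Ss=\{v_j\mid F_j\text{ is disrupted}\}\ \cup\ \{v_{i_0}\mid F_{i_0}\text{ and }F_{i_1}\text{ are both intact}\}
\]
and check that $\Ss$ covers every edge $M_i$: if $F_{i_0}$ or $F_{i_1}$ is disrupted then $v_{i_0}$ or $v_{i_1}\in\Ss$, and otherwise the obstruction forces $T_i$ to be disrupted, hence $t_{i,2}\in\mS$, while $v_{i_0}\in\Ss$ by the second set. For the size bound I would exhibit an injection $\Ss\hookrightarrow\mS$: each disrupted-gadget vertex $v_j$ is charged to a removed state of $F_j$ (distinct gadgets give distinct states), and each second-kind vertex $v_{i_0}$ is charged to $t_{i,2}$ for one witnessing edge $M_i$ (distinct such vertices force distinct witnessing edges, hence distinct states $t_{\cdot,2}$, all disjoint from the charged $F$-states). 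This gives $\vert\Ss\vert\le\vert\mS\vert\le\kappa=\lambda$, as required.

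I expect the region-based obstruction of the second paragraph to be the main obstacle, since it is where the interplay between the $T_i$-paths and the robust $F$-gadgets must be exploited and where the $\inp/\out$ hypothesis enters the ESSP case; the reachability-driven downward closure of state removals and the injective charging should then be routine bookkeeping.
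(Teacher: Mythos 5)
Your proposal is correct and follows essentially the same route as the paper: downward closure of state removals along the $T_i$- and $F_j$-paths, the region-based obstruction showing that $T_i$, $F_{i_0}$ and $F_{i_1}$ cannot all survive intact (inherited from Lemma~\ref{lem:edge_removal_implies_model}), and an extraction of the cover from $\mS$ with an injective charging for the size bound. The only difference is cosmetic: the paper takes $\Ss=\{v\in\U\mid\exists s\in\mS:\edge{v}s\}$, which makes the bound $\vert\Ss\vert\le\vert\mS\vert$ immediate since each state is the target of at most one vertex event, whereas your two-part definition of $\Ss$ needs the slightly more explicit injection you describe.
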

\begin{proof}
Let $B_G$ be a state-removal of $A_G$ that satisfies $\vert \mS\vert \leq \kappa$, and has the $\tau$-SSP, respectively the $\tau$-ESSP.
Let $\Ss=\{v\in \U\mid \exists s\in \mS : \edge{v}s\}$.
Note that $\vert \Ss\vert \leq \vert \mS\vert$, since every state of $A$ is the target of at most one vertex event.
We show that $\Ss$ defines a vertex cover:
Let $i\in \{0,\dots, m-1\}$ be arbitrary but fixed.
Recall that, as argued above, $\iota$ is present in $B_G$, and, for every path $s_0\edge{e_1}s_1\dots\edge{e_n}s_n$ of $A_G$, if $s_i$ is missing in $B_G$, then $s_j$ is also missing for all $i < j\in \{0,\dots, n\}$, since $B_G$ would have unreachable states otherwise.
Hence, if $\Ss\cap\{v_{i_0}, v_{i_1}\}=\emptyset$, implying $\{t_{i,1}, t_{i,2}, f_{i_0,1}, f_{i_1,1}\}\subseteq S(B_G)$, then $T_i$, and $F_{i_0}$, and $F_{i_1}$ are completely present in $B_G$.
Consequently, similar to the proof of Lemma~\ref{lem:edge_removal_implies_model}, one argues that this implies that $B_G$ has the unsolvable SSA $(t_{i,0}, t_{i,2})$, respectively the unsolvable ESSA $(v_{i_0}, t_{i,2})$, which is a contradiction.
By the arbitrariness of $i$, $\Ss$ is a suitable vertex cover, which proves the lemma.
\end{proof}

For the converse direction, we state the following lemma:

\begin{lemma}\label{lem:ssp_state_removal_implied_model}
If there is a $\lambda$-VC for $G=(\U,M)$, then there is a state-removal $B_G$ of $A_G$ that satisfies $\vert \mS\vert \leq \kappa$ and has the $\tau$-SSP, and, since $\tau\cap\{\inp,\out\}\not=\emptyset$, it has the $\tau$-ESSP.
\end{lemma}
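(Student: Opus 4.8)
The plan is to mirror the edge-removal construction of Section~\ref{sec:edge_removal_inp_out} as closely as possible. Given a $\lambda$-vertex cover $\Ss=\{v_{\ell_0},\dots,v_{\ell_{\lambda-1}}\}$ of $G$, I would define $B_G$ to be the state-removal of $A_G$ that deletes exactly the states $f_{\ell_0,1},\dots,f_{\ell_{\lambda-1},1}$, and nothing else. Deleting $f_{\ell_i,1}$ empties the gadget $F_{\ell_i}$ down to its single remaining state $f_{\ell_i,0}$, erasing in one stroke the $v_{\ell_i}$-edge and all the $a_r$-edges of that gadget. One checks at once that $\vert\mS\vert=\lambda=\kappa$, that every surviving state is still reachable from $\iota$ (through the retained $y_i^0$- and $z_\ell$-edges together with the path edges), and that every surviving event still occurs: each $v_{\ell_i}\in\Ss$ persists on the $T$-paths, and each $a_r$ persists in the gadgets $F_j$ with $v_j\notin\Ss$. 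Hence $B_G$ is a legitimate state-removal.

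The structural observation I would exploit is that $B_G$ is obtained from the edge-removal of Section~\ref{sec:edge_removal_inp_out}, call it $B_G'$, by dropping the events $y_i^1,y_i^2$ (with their edges) and the states $f_{\ell_i,1}$ (with their remaining $a_r$-edges); consequently the state set, event set and edge set of $B_G$ are all \emph{included} in those of $B_G'$. This yields two transfer principles. First, the restriction of any $\tau$-region of $B_G'$ to the states and events of $B_G$ is again a $\tau$-region of $B_G$, because the region condition need only be checked on the smaller edge set. Second, every pair of distinct states of $B_G$ is a pair of distinct states of $B_G'$, so by Fact~\ref{fact:edge_removal_model_implies_ssp} and the first principle $B_G$ inherits the $\tau$-SSP immediately; here it is essential that the separating regions of that fact use only the signatures $\nop$ and $\swap$, hence are available for every $\tau=\{\nop,\swap\}\cup\omega$.

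For the $\tau$-ESSP (assuming w.l.o.g.\ $\inp\in\tau$, the case $\out\in\tau$ being symmetric by complementation) I would split the ESSA of $B_G$ into those already present in $B_G'$ and the genuinely new ones. An ESSA $(e,s)$ of $B_G$ fails to be one of $B_G'$ only if $e$ fires at $s$ in $B_G'$ along an edge deleted on the way to $B_G$; the deleted edges are the $y_i^{1,2}$-edges (whose events disappear in $B_G$, contributing no atom) and the $a_r$-edges inside the emptied gadgets. Thus the only new atoms are $(a_r,f_{\ell_i,0})$ with $v_{\ell_i}\in\Ss$ and $r\in\{0,\dots,\kappa\}$. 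All other ESSA are dispatched by restricting the regions of Fact~\ref{fact:edge_removal_model_implies_essp}. The new family I would solve with a single region $R=(sup,sig)$ given by $sig(a_r)=\inp$ for all $r$, by $sig(v_j)=\swap$ and $sup(f_{j,0})=1$ for every $v_j\notin\Ss$ (so that each surviving gadget edge runs $1\to0$), by $sig(v)=\nop$ for every $v\in\Ss$, by $sup(f_{\ell_i,0})=0$ for every $v_{\ell_i}\in\Ss$, by $sup(\iota)=0$, and with the supports on the $T$-paths and the signatures of the reachability events filled in coherently. This region is well defined precisely because the gadgets $F_{\ell_i}$ have been emptied, so no $a_r$-edge forces $sup(f_{\ell_i,0})=1$; since $sig(a_r)=\inp$ is undefined at support $0$, it solves every new atom at once.

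The main obstacle I anticipate is the bookkeeping of exactly which separation atoms are created or destroyed when passing from $B_G'$ to $B_G$: emptying a gadget simultaneously \emph{removes} atoms (all pairs involving the deleted $f_{\ell_i,1}$, and all ESSA $(y_i^{1,2},\cdot)$) and \emph{creates} atoms (the events $a_r$ can no longer fire at the now isolated $f_{\ell_i,0}$). Isolating this single new family $(a_r,f_{\ell_i,0})$ and solving it with the emptied-gadget region is the crux; everything else is a direct transfer from Section~\ref{sec:edge_removal_inp_out}.
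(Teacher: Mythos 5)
Your proposal is correct and follows essentially the same route as the paper: the same state-removal (deleting exactly the states $f_{\ell_i,1}$ for cover vertices), the same transfer of the SSP/ESSP regions from the edge-removal $B_G'$ of Section~\ref{sec:edge_removal_inp_out} by restriction, the same identification of the only genuinely new atoms $(a_r,f_{\ell_i,0})$, and a closing region of the same shape (the paper gives all vertex events signature $\swap$ and cover-gadget $z$-events $\nop$, while you give cover vertices $\nop$; both choices are coherent and solve the residual family since $sig(a_r)=\inp$ is undefined at support $0$).
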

\begin{proof}
Let $\Ss=\{v_{\ell_0}, \dots, v_{\ell_{\lambda-1}}\}$ be a vertex cover for $G$, and let $B_G$ the TS that originates from $A_G$ by removing the states $f_{\ell_0,1}, f_{\ell_1,1},\dots, f_{\ell_{\lambda-1},1}$ (and thus the edges incident to these states), and nothing else.
Notice that, for all $j\in \{0,\dots, \lambda-1\}$, this reduces $F_{\ell_j}$ to $f_{\ell_j,0}$, since, for all $a\in \{a_0,\dots, a_{\lambda-1}\}$, the edge $f_{\ell_j,0}\edge{a}f_{\ell_j,1}$ is removed by the removal of $f_{\ell_j,1}$.
Obviously, $\vert \mS\vert\leq \kappa=\lambda$.

\medskip
Let $B_G'$ be the TS that result by the removal of the edge $f_{\ell_j,0}\Edge{v_{\ell_j}}f_{\ell_j,1}$ for all $j\in \{0,\dots, \lambda-1\}$, that is, $B_G'$ corresponds to the edge-removal defined in Section~\ref{sec:edge_removal_inp_out}.
Obviously, every edge of $B_G$ is present in $B_G'$.
Hence, every region $R=(sup, sig)$ of $B_G'$ is a region of $B_G$, when restricted to its present states, and events.
Furthermore, if $(s,s')$ is an SSA of $B_G$, then it is also an SSA of $B_G'$, and solved by a corresponding region of Fact~\ref{fact:edge_removal_model_implies_ssp}.

If $(e,s)$ is an ESSA of $B_G$ such that $e\not\in \{a_0,\dots, a_\lambda\}$, or $s\not\in \{f_{\ell_0},\dots, f_{\ell_{\lambda-1}}\}$, then $(e,s)$ is also an ESSA of $B_G'$, and solved by a corresponding region of Fact~\ref{fact:edge_removal_model_implies_essp}.

\medskip
Hence, it only remains to argue that an ESSA $(e,s)$ is also solvable, if $e\in \{a_0,\dots, a_\lambda\}$ and $s\in \{f_{\ell_0},\dots, f_{\ell_{\lambda-1}}\}$.
The following region $R=(sup, sig)$ accomplishes this:
$sup(\iota)=0$, and, for all $e\in E(B_G)$, if $e\in \{a_0,\dots, a_\lambda\}$, then $sig(e)=\inp$, and if $e\in \{z_{\ell_0},\dots, z_{\ell{\lambda-1}}\}$, then $sig(e)=\nop$, otherwise $sig(e)=\swap$.
Notice that this definition implies $sup(f_{\ell_j,0})=0$ for all $j\in \{0,\dots,\lambda-1\}$ (by $sig(z_{\ell_j})=\nop$), and $sup(f_{i,1})=sup(f_{i,0}) - 1= 0$ for all $i\in \{0,\dots, n-1\}\setminus \{\ell_0,\dots, \ell_{\lambda-1}\}$ (by $sig(z_i)=sig(v_i)=\swap$, respectively $sig(a_0)=\dots=sig(a_{\lambda})=\inp$) such that $R$ is actually a well-defined region that solves the remaining ESSA.
This proves the lemma.
\end{proof}

\subsection{Proof of Theorem~\ref{the:state_removal}(\ref{the:state_removal_langsim_real}) for the Types without $\inp$ and $\out$}\label{sec:state_removal_save}%

Let $\emptyset\neq\omega\subseteq \{\free,\used\}$, and $\tau=\{\nop,\swap\}\cup\omega$.

\medskip
The following reduction transforms an input $(G,\lambda)$, where $G=(\U,M)$, to an instance $(\overline{A}_G,\kappa)$:
$\kappa=\lambda$, and $\overline{A}_G$ is the TS that is defined just like the one in Section~\ref{sec:edge_removal_save}, but, for all $i\in \{0,\dots, m-1\}$, and for all $j\in \{1,2\}$, does neither implement the edge $\iota\edge{y_i^j}t_{i,j}$ nor the edge $\iota\edge{y_i^j}t_{i,j}$.
Hence, for every state-removal $\overline{B}_G$ of $\overline{A}_G$, if there is $i\in \{0,\dots, m-1\}$, such that $t_{i,1}$ is missing in $\overline{B}_G$ (implying that the edge $t_{i,1}\edge{v_{i_1}}t_{i,2}$ is also removed), then $t_{i,2}$ is also removed, since this state would not be reachable from the initial state $\iota$ otherwise, which is a contradiction.
Similarly, if $t_{i,0}$ is removed, then so are $t_{i,1}$, and $t_{i,2}$.

\medskip
In the following, we argue that $(\overline{A}_G,\kappa)$ is a yes-instance (i.e., it allows a fitting state-removal that has the ESSP, and, when realization is considered, the SSP) if and only if $(G,\lambda)$ is a yes-instance.

\begin{lemma}\label{lem:state_removal_save_essp_implies_model}
If there is a state-removal $\overline{B}_G$ of $\overline{A}_G$ that satisfies $\vert \mS\vert \leq \kappa$ and has the $\tau$-ESSP, then there is a $\lambda$-VC for $G=(\U,M)$.
\end{lemma}

\begin{proof}
Let $\overline{B}_G$ be a state-removal of $\overline{A}_G$ that satisfies $\vert \mS\vert \leq \kappa$ and has the $\tau$-ESSP, and
let $\Ss=\{v\in \U \mid \exists s\in \mS: \edge{v}s\}$.
Note that  $\vert \Ss\vert \leq \vert \mS\vert \leq\kappa=\lambda$, which can be seen as follows:
On the one hand, if $t_{i,1}\in \mS$ for some $i\in \{0,\dots, m-1\}$, which implies $v_{i_0}, v_{i_1}\in \Ss$, then $t_{i,2}\in \mS$.
(Otherwise, $t_{i,2}$ would no reachable, which is excluded by the definition of TS.)
On the other hand, there is no other kind of state that is adjacent to two different vertex events.

\medskip
\noindent We argue that $\Ss$ is a vertex cover:\\
Let $i\in \{0,\dots, m-1\}$ be arbitrary but fixed.
If $\{v_{i_0},v_{i_1}\}\cap\Ss=\emptyset$, then $\overline{T}_i$, and $\overline{F}_{i_0}$, and $\overline{F}_{i_1}$ are completely present in $\overline{B}_G$.
Similar to the proof of Lemma~\ref{lem:edge_removal_save_essp_implies_model} one argues that this contradicts the $\tau$-ESSP of $\overline{B}_G$.
Hence $\{v_{i_0},v_{i_1}\}\cap\Ss\not=\emptyset$, and the arbitrariness of $i$ implies the claim.
\end{proof}

For the converse direction, we present the following lemma:
\begin{lemma}\label{lem:state_removal_save_model_implies_essp}
If there is a $\lambda$-VC for $G=(\U,M)$, then there is a state-removal $\overline{B}_G$ of $\overline{A}_G$ that satisfies $\vert \mS\vert \leq \kappa$ and has the $\tau$-ESSP, and the $\tau$-SSP.
\end{lemma}

\begin{proof}
Let $\Ss=\{v_{\ell_0},\dots, v_{\ell_{\lambda-1}}\}$ be a vertex cover of $G$, and let $\overline{B}_G$ be the TS that originates from $\overline{A}_G$ by the removal of the states $f_{\ell_0,1}, f_{\ell_1,1}, \dots, f_{\ell_{\lambda-1},1}$ (and the corresponding adjacent edges), and nothing else.
One finds our that $\overline{B}_G$ is a well-defined (i.e. reachable) TS, which satisfies $\vert \mS\vert \leq \kappa=\lambda$.
Similar to the proof of Lemma~\ref{lem:event_removal_save_model_implies_essp}, one argues that $\overline{B}_G$ has both the $\tau$-SSP, and the $\tau$-ESSP. The claim \linebreak  follows.
\end{proof}

\section{Concluding  remarks}\label{sec:con}%

In this paper, we characterized the computational complexity of finding a label-splitting of a TS $A$ that allows implementing $\tau$-net for all types $\tau=\{\nop,\swap\}\cup \omega$ with $\omega\subseteq \{\inp,\out,\used,\free\}$ and all implementations previously considered in the literature.
By the results of~\cite{topnoc/Tredup21,
tamc/TredupR19,apn/TredupR19}, the synthesis problem aiming at language-simulation and realization is NP-complete for all types $\tau=\{\nop,\swap\}\cup\omega$ with $\omega\subseteq \{\inp,\out,\res,\set,\used,\free\}$ and $\omega\cap\{\inp,\out,\used,\free\}\neq\emptyset$ and $\omega\cap\{\set,\res\}\neq\emptyset$.
Hence, their corresponding label-splitting problem is also NP-complete.
Moreover, similar to the proof of Theorem~\ref{the:label_splitting}, one argues that label-splitting aiming at language-simulation or realization is trivial for $\tau=\{\nop,\swap\}\cup\omega$ when $\omega\subseteq \{\res,\set\}$:
the relabeling of an input TS $A$ would result in some ESSAs, and since $\tau$ does not allow for solving ESSAs, either $A$ already has the separation properties or it has to be rejected.
Moreover, we already know that synthesis aiming at embedding is NP-complete for all Boolean types $\tau$ with $\{\nop,\swap\}\subseteq\tau$ and $\tau\cap\{\res,\set\}=\emptyset$~\cite{ictac/TredupE20}.
Hence, again their corresponding label-splitting problem is also NP-complete.
Altogether, with the present work, the complexity of $\tau$-label-splitting is characterized for all 64 Boolean types with $\{\nop,\swap\}\subseteq\tau$ and all implementations.
It remains future work to determine the complexity of the label-splitting problem for the \swap-free and \nop-equipped Boolean types whose underlying synthesis problem is~\mbox{in~ P.}

 \medskip
In order to avoid to deal with the intractability of label-splitting, we also analyzed the complexity of various element removal techniques to render a $\tau$-implementable TS, for similar types $\tau\cup\omega$ with $\emptyset\not=\omega\subseteq\{\inp,\out, \used,\free\}$.
Unfortunately, it turns out that these techniques are also NP-complete for (almost) all $\tau$'s, and all implementations.
There is however a single case that, surprisingly, remains undetermined:
the complexity for event-removal when $\omega=\emptyset$.
Hence, it remains for future work to investigate this special case, and to search for other techniques that may have a better complexity to make a TS implementable, like for example the insertion of states, edges or \linebreak events.

\medskip
Moreover, one might consider $\kappa$ as a parameter and investigate the discussed modification techniques from the point of view of parameterized complexity.

\subsubsection*{Acknowledgements.}
We would like to thank the unknown reviewers of this paper and the previous versions of this paper for their valuable comments.

\bibliographystyle{fundam}
\bibliography{myBibliography}

\begin{thebibliography}{10}
\providecommand{\url}[1]{\texttt{#1}}
\providecommand{\urlprefix}{URL }
\expandafter\ifx\csname urlstyle\endcsname\relax
  \providecommand{\doi}[1]{doi:\discretionary{}{}{}#1}\else
  \providecommand{\doi}{doi:\discretionary{}{}{}\begingroup
  \urlstyle{rm}\Url}\fi
\providecommand{\eprint}[2][]{\url{#2}}

\bibitem{fac/BadouelCD02}
Badouel E, Caillaud B, Darondeau P.
\newblock Distributing Finite Automata Through {P}etri Net Synthesis.
\newblock \emph{Formal Asp. Comput.}, 2002.
\newblock \textbf{13}(6):447--470.
\newblock \doi{10.1007/s001650200022}.

\bibitem{daglib/0027363}
van~der Aalst WMP.
\newblock Process Mining - Discovery, Conformance and Enhancement of Business
  Processes.
\newblock Springer, 2011.
\newblock \doi{10.1007/978-3-642-19345-3}.

\bibitem{deds/HollowayKG97}
Holloway LE, Krogh BH, Giua A.
\newblock A Survey of {P}etri Net Methods for Controlled Discrete Event
  Systems.
\newblock \emph{Discrete Event Dynamic Systems}, 1997.
\newblock \textbf{7}(2):151--190.
\newblock \doi{10.1023/A:1008271916548}.

\bibitem{tcad/CortadellaKKLY97}
Cortadella J, Kishinevsky M, Kondratyev A, Lavagno L, Yakovlev A.
\newblock A region-based theory for state assignment in speed-independent
  circuits.
\newblock \emph{{IEEE} Trans. on {CAD} of Integrated Circuits and Systems},
  1997.
\newblock \textbf{16}(8):793--812.
\newblock \doi{10.1109/43.644602}.

\bibitem{txtcs/BadouelBD15}
Badouel E, Bernardinello L, Darondeau P.
\newblock {P}etri Net Synthesis.
\newblock Texts in Theoretical Computer Science. An {EATCS} Series. Springer,
  2015.
\newblock \doi{10.1007/978-3-662-47967-4}.

\bibitem{topnoc/Carmona12}
Carmona J.
\newblock The Label Splitting Problem.
\newblock \emph{Trans. {P}etri Nets Other Model. Concurr.}, 2012.
\newblock \textbf{6}:1--23.
\newblock \doi{10.1007/978-3-642-35179-2\_1}.
\newblock \urlprefix\url{https://doi.org/10.1007/978-3-642-35179-2\_1}.

\bibitem{707587}
{Cortadella} J, {Kishinevsky} M, {Lavagno} L, {Yakovlev} A.
\newblock Deriving {P}etri nets from finite transition systems.
\newblock \emph{IEEE Transactions on Computers}, 1998.
\newblock \textbf{47}(8):859--882.

\bibitem{topnoc/SchlachterW19}
Schlachter U, Wimmel H.
\newblock Relabelling {LTS} for {P}etri Net Synthesis via Solving Separation
  Problems.
\newblock \emph{Trans. {P}etri Nets Other Model. Concurr.}, 2019.
\newblock \textbf{14}:222--254.
\newblock \doi{10.1007/978-3-662-60651-3\_9}.

\bibitem{corr/abs-2002-04841}
Schlachter U, Wimmel H.
\newblock Optimal Label Splitting for Embedding an {LTS} into an arbitrary
  {P}etri Net Reachability Graph is {NP}-complete.
\newblock \emph{CoRR}, 2020.
\newblock \textbf{abs/2002.04841}.
\newblock \eprint{2002.04841}.

\bibitem{ictcs/Tredup20}
Tredup R.
\newblock Finding an Optimal Label-Splitting to Make a Transition System Petri
  Net Implementable: a Complete Complexity Characterization.
\newblock In: Proceedings of the 21st Italian Conference on Theoretical
  Computer Science, Ischia, Italy, September 14-16, 2020. 2020 pp. 131--144.

\bibitem{corr/abs-2112-03605}
Devillers RR, Tredup R.
\newblock Some Basic Techniques allowing Petri Net Synthesis: Complexity and
  Algorithmic Issues.
\newblock \emph{CoRR}, 2021.
\newblock \textbf{abs/2112.03605}.
\newblock \eprint{2112.03605},
  \urlprefix\url{https://arxiv.org/abs/2112.03605}.

\bibitem{apn/Tredup21}
Tredup R.
\newblock Edge, Event and State Removal: The Complexity of Some Basic
  Techniques that Make Transition Systems Petri Net Implementable.
\newblock In: Application and Theory of Petri Nets and Concurrency - 42nd
  International Conference, {PETRI} {NETS} 2021, Virtual Event, June 23-25,
  2021, Proceedings. 2021 pp. 253--273.
\newblock \doi{10.1007/978-3-030-76983-3\_13}.

\bibitem{acta/BadouelD95}
Badouel E, Darondeau P.
\newblock Trace Nets and Process Automata.
\newblock \emph{Acta Inf.}, 1995.
\newblock \textbf{32}(7):647--679.
\newblock \doi{10.1007/BF01186645}.

\bibitem{acta/KleijnKPR13}
Kleijn J, Koutny M, Pietkiewicz{-}Koutny M, Rozenberg G.
\newblock Step semantics of boolean nets.
\newblock \emph{Acta Inf.}, 2013.
\newblock \textbf{50}(1):15--39.
\newblock \doi{10.1007/s00236-012-0170-2}.

\bibitem{acta/MontanariR95}
Montanari U, Rossi F.
\newblock Contextual Nets.
\newblock \emph{Acta Inf.}, 1995.
\newblock \textbf{32}(6):545--596.
\newblock \doi{10.1007/BF01178907}.

\bibitem{apn/Pietkiewicz-Koutny97}
Pietkiewicz{-}Koutny M.
\newblock Transition Systems of Elementary Net Systems with Inhibitor Arcs.
\newblock In: {ICATPN}, volume 1248 of \emph{Lecture Notes in Computer
  Science}. Springer, 1997 pp. 310--327.
\newblock \doi{10.1007/3-540-63139-9\_43}.

\bibitem{ac/RozenbergE96}
Rozenberg G, Engelfriet J.
\newblock Elementary Net Systems.
\newblock In: {P}etri Nets, volume 1491 of \emph{Lecture Notes in Computer
  Science}. Springer, 1996 pp. 12--121.
\newblock \doi{10.1007/3-540-65306-6\_14}.

\bibitem{stacs/Schmitt96}
Schmitt V.
\newblock Flip-Flop Nets.
\newblock In: {STACS}, volume 1046 of \emph{Lecture Notes in Computer Science}.
  Springer, 1996 pp. 517--528.
\newblock \doi{10.1007/3-540-60922-9\_42}.

\bibitem{tcs/BadouelBD97}
Badouel E, Bernardinello L, Darondeau P.
\newblock The Synthesis Problem for Elementary Net Systems is {NP}-Complete.
\newblock \emph{Theor. Comput. Sci.}, 1997.
\newblock \textbf{186}(1-2):107--134.
\newblock \doi{10.1016/S0304-3975(96)00219-8}.

\bibitem{topnoc/Tredup21}
Tredup R.
\newblock The Complexity of Synthesizing {\sf nop}-Equipped Boolean {P}etri
  Nets from $g$-Bounded Inputs.
\newblock \emph{Trans. {P}etri Nets Other Model. Concurr.}, 2021.
\newblock \textbf{15}:101--125.

\bibitem{tamc/TredupR19}
Tredup R, Rosenke C.
\newblock The Complexity of Synthesis for 43 Boolean {P}etri Net Types.
\newblock In: {TAMC}, volume 11436 of \emph{Lecture Notes in Computer Science}.
  Springer, 2019 pp. 615--634.
\newblock \doi{10.1007/978-3-030-\linebreak 14812-6}.

\bibitem{apn/TredupR19}
Tredup R, Rosenke C.
\newblock On the Hardness of Synthesizing Boolean Nets.
\newblock In: ATAED@{P}etri Nets/ACSD, volume 2371 of \emph{{CEUR} Workshop
  Proceedings}. CEUR-WS.org, 2019 pp. 71--86.

\bibitem{ictac/TredupE20}
Tredup R, Erofeev E.
\newblock The Complexity of Boolean State Separation.
\newblock In: Theoretical Aspects of Computing - {ICTAC} 2020 - 17th
  International Colloquium, Macau, China, November 30 - December 4, 2020,
  Proceedings. 2020 pp. 123--142.
\newblock \doi{10.1007/978-3-030-64276-1\_7}.

\bibitem{tapsoft/BadouelBD95}
Badouel E, Bernardinello L, Darondeau P.
\newblock Polynomial Algorithms for the Synthesis of Bounded Nets.
\newblock In: {TAPSOFT}, volume 915 of \emph{Lecture Notes in Computer
  Science}. Springer, 1995 pp. 364--378.
\newblock \doi{10.1007/3-540-59293-8\_207}.

\bibitem{rp/Tredup20}
Tredup R.
\newblock The Complexity of the Label-Splitting-Problem for Flip-Flop-Nets.
\newblock In: Reachability Problems - 14th International Conference, {RP} 2020,
  Paris, France, October 19-21, 2020, Proceedings. 2020 pp. 148--163.
\newblock \doi{10.1007/978-3-030-61739-4\_10}.

\bibitem{fm/GareyJ79}
Garey MR, Johnson DS.
\newblock Computers and Intractability: {A} Guide to the Theory of
  {NP}-Completeness.
\newblock W. H. Freeman, 1979.
\newblock ISBN 0-7167-1044-7.

\end{thebibliography}

\end{document}